\documentclass[11pt]{article}

\usepackage{setspace}
\usepackage[margin=1.25in]{geometry}

\usepackage[english]{babel}

\usepackage{amsmath}
\usepackage{amsfonts}
\usepackage{amssymb}
\usepackage{amsthm}
\usepackage{graphicx}
\usepackage{tikz-cd}
\usepackage{mathrsfs}
\usepackage{xfrac}
\usepackage{url}
\usepackage{color}
\usepackage[colorlinks=true, citecolor=blue]{hyperref}
\usepackage{enumitem}
\usepackage[usestackEOL]{stackengine}
\usepackage{accents}
\usepackage{bbm}
\usepackage{authblk}

\numberwithin{equation}{section}
\def\a{\alpha}

\def\be{\begin{equation}}
\def\ee{\end{equation}}
\def\ba#1\ea{\begin{align}#1\end{align}}
\def\no{\nonumber\\ }
\def\ra{\rangle}
\def\la{\langle}

\def\s{\sigma}
\def\vol{\vartheta}

\def\F{\scr F}

\newcommand{\ca}[1]{\mathcal{#1}}
\newcommand{\bb}[1]{\mathbb{#1}}
\newcommand{\fr}[1]{\mathfrak{#1}}
\newcommand{\scr}[1]{\mathscr{#1}}

\newcommand{\wt}[1]{\widetilde{#1}}

\newcommand{\wh}[1]{\widehat{#1}}
\newcommand{\ol}[1]{\overline{#1}}
\newcommand{\un}[1]{\underline{#1}}

\newcommand{\iprod}{\mathbin{\lrcorner}}

\newtheorem{theorem}{Theorem}
\newtheorem{lemma}{Lemma}
\newtheorem*{assumption}{Assumptions}
\newtheorem{corollary}{Corollary}

\graphicspath{{Figures/}}

\title{{\huge Quantizing the guiding center: \\do quantization and coarse graining commute?
\vspace{1cm}}}
\date{August 25, 2026}
\author[1]{Rodrigo Andrade e Silva\thanks{andradeesilvarodrigo@gmail.com}}
\author[2]{Ted Jacobson\thanks{jacobson@umd.edu}}
\affil[1]{\it \small Perimeter Institute for Theoretical Physics, Waterloo, ON, Canada}
\affil[2]{\it \small Maryland Center for Fundamental Physics, University of Maryland, College Park, MD, 20742 USA}

\begin{document}
\begin{titlepage}
\maketitle
\thispagestyle{empty}

\begin{abstract}
We develop a nonperturbative, group-theoretic quantization of the effective guiding center theory of a charged particle drifting in a magnetic field in two spatial dimensions, and compare the resulting quantum theory with a corresponding coarse graining of the underlying microscopic theory. In the classical effective theory, the small gyro motion is not resolved, while the motion of the center of the gyro orbit remains observable. The reduced phase space is the physical space itself, so quantization leads to noncommuting spatial coordinates, and the effective theory loses access to the metric structure of physical space, retaining only its area structure. 
By formulating a prescription to match between the microscopic and effective quantum theories, we find that the predictions of the quantized effective theory are generally consistent with those of the microscopic theory. However, for closed isomagnetic contours the effective theory predicts a quantization of ``radius'', a spatial discreteness  absent from the microscopic theory. 
This illustrates that quantization of an effective theory may yield spurious nonperturbative predictions, even if that quantum theory shows no internal signs of breakdown.
\end{abstract}

\end{titlepage}

\tableofcontents


\newpage
\section{Introduction}
\label{sec:intro}

If the underlying structure of a system is governed by quantum mechanics, then 
an autonomously evolving collection of effective degrees of freedom of the system
is expected to be governed by quantum mechanics as well, 
since it is built from the quantum variables and inherits
commutation relations. 
Indeed this is seen to be the case in few body systems,
in condensed matter, in nuclear physics, and in the standard model of
particle physics. It has been essential to the 
development of physics, 
since at a given stage we are always dealing with an effective 
theory of some sort. 

In the cases of (nonrelativistic) particle mechanics and electromagnetism,
where the classical limit of the effective theory was 
known prior to its quantum version, quantization has been a magic wand 
that uplifts an established classical theory to a valid effective quantum one.
Although quantization is an inherently ambiguous recipe, 
the symmetry structure of the classical theory and its Hamiltonian
has proved sufficient to guide physicists to a quantization that describes Nature.
Whether this pattern will persist in the case of general relativity is 
of course not known. It is widely expected that quantized general relativity is incomplete in the UV, 
and that the true underlying degrees of freedom and their dynamics look very different
from those of general relativity.
While it is hard to imagine that 
perturbatively quantized general relativity is not
a correct description of nature at some level of 
approximation, what is less clear is whether 
non-perturbative effects arising from the quantized 
effective theory, such as those due to the global 
structure of the phase space or of chosen quantizing observables, would accurately reflect 
properties of the underlying theory. 

The  more general question is to what extent does a non-perturbative quantization of an effective classical theory give predictions compatible with the underlying microscopic quantum theory in the suitable regime. In particular, could non-perturbative effects stemming from global aspects of the phase space of the effective theory furnish useful insights into the microscopic quantum theory, or could they lead to spurious predictions that do not correspond to properties of the microscopic theory? Moreover, since quantization often relies on background structures to resolve ambiguities, what happens when the background structures of the microscopic and effective theories are distinct? Notice that the effective theory may ``see'' \emph{more} background structures (produced by ``frozen'' microscopic degrees of freedom) 
or \emph{fewer} (if the effective theory becomes insensitive to aspects of the microscopic background structure). In the challenging case of general relativity, it is not currently known which of these
possibilities in fact occurs, nor how the nature and predictions of the 
resulting quantum theory might depend on them.

The questions above can be phrased in terms of the following diagram:
\begin{center}
\begin{tikzpicture}[commutative diagrams/every diagram]
\node (QT) at (-1.5,0) {{MQT}};
\node (EQT) at (1.5,0) {{EQT}};
\node (CT) at (-1.5,-2) {{MCT}};
\node (ECT) at (1.5,-2) {{ECT}};

\path[commutative diagrams/.cd, every arrow, every label]
(CT)  edge[dashed] (ECT)
(CT)  edge[dashed] (QT)
(QT)  edge node[above] {\small \emph{coarse}}
           node[below] {\small \emph{graining}} (EQT)
(ECT) edge node[right] {\small \emph{quantization}} (EQT);
\end{tikzpicture}
\end{center}
The relevant arrows, in general, are the ``quantization'' of the {\it Effective Classical Theory} ({ECT}) and the ``coarse graining'' of the {\it Microscopic Quantum Theory} ({MQT}), and the questions all relate to whether they end on the same {\it Effective Quantum Theory} ({EQT}). In some cases, there may exist a {\it Microscopic Classical Theory} ({MCT}) which produces both the ECT and the MQT, via coarse graining and quantization, respectively. In that case, the questions can be more sharply phrased in terms of the commutativity properties of the diagram.

In this paper we explore, in a peculiar example, 
the relation between the non-perturbative quantization of a classical 
effective theory and the effective description of the underlying microscopic
quantum theory.
An effective classical theory 
typically has a kinetic energy term in the Lagrangian, which leads to 
canonically conjugate configuration and momentum pairs, and
second order equations of motion. 
Here we examine an effective
classical theory that has no kinetic energy, and that satisfies a
first order dynamical equation determining the velocity in terms of
the position. The theory is the guiding center approximation for 
the motion of a charged particle confined to a two-dimensional plane, 
in a magnetic field normal to the plane (cf.~Sec.~\ref{SecGuidingApprox}). 
In this approximation, the gyro motion of the 
charge is unresolved, the adiabatically invariant magnetic moment of the
gyro motion is treated as exactly conserved, 
and one follows only the motion of the center of the gyro orbit,
i.e., the so-called ``guiding center'' 
motion \cite{alfven1950cosmical, northrop1963adiabatic}. 
In this two-dimensional case, that motion is purely
the ``drift velocity'', which is determined algebraically 
by the gradient in the magnetic field strength.
The canonical momenta are entirely constrained, so the
reduced phase space is the spatial plane itself, whose coordinates 
have nonvanishing Dirac brackets, determined by the magnetic field strength (cf.~Sec.~\ref{SecGuidingPS}).
After quantization, this implies that the particle lives on a noncommutative plane.

The method of canonical quantization can be
applied to this guiding center  system, although quantization 
is more ambiguous than usual since no spatial metric 
appears in the Hamiltonian so 
there is no Euclidean symmetry group to preserve and realize 
unitarily  (cf.~Sec.~\ref{SecQuantization}). 
In the guiding center
theory, the only background structures appearing in the action are the 
vector potential, characterized by its gauge-invariant field strength two-form $F$, and 
the magnitude $B$ of the magnetic field, which we assume is nowhere vanishing. 
For different magnetic fields, what remains the same is 
the area 2-form $\vartheta:= B^{-1}{F}$, so we
regard $\vartheta$ as a background structure in the effective theory, along with $F$ 
(which is proportional to the symplectic form on the phase space). 
We constrain the quantization ambiguity 
by requiring that the Poisson algebra to be quantized includes a charge that 
symplectically generates
a $\vartheta$-preserving flow on the phase space. 
In practice, this implies that we quantize
a function of the magnetic field strength. 

For a constant magnetic field, we choose any canonically conjugate pair of spatial 
coordinates, and find that the effective theory reproduces the density of states
of a single Landau level (cf. Sec.~\ref{constantBeff}).
In the case of open isomagnetic contours with $dB$  nowhere zero,
we quantize $\log B$, which
leads to a Heisenberg algebra (cf. Sec.~\ref{openBlines}). The Heisenberg equation of motion matches the
classical one.
In the case of closed isomagnetic contours we instead use
the magnetic flux through the contours
as a coordinate, which leads to a centrally-extended Euclidean algebra in 2-dimensions (cf. Sec.~\ref{closedBlines}). Here we find that 
the flux is quantized, which can be interpreted as a quantization of the ``radial position'' of the particle, a feature not present in the microscopic theory. 
The spatial discreteness implied by the 
``flux gap" is never larger than the gyro scale of the microscopic theory, 
so it presents no contradiction above the guiding center resolution 
length scale.
Nonetheless, it demonstrates
that an effective theory may not necessarily reveal its own limitations, so 
a model of the world based on an effective theory
may be misleading about the nature of the microscopic domain.

Finally, we explore to which degree the quantization-coarse graining diagram above commutes in this setting (cf.~Sec.~\ref{SecMatch}). Having quantized the classical guiding center theory, and given the well-known quantization of the classical microscopic theory of charged particles in a magnetic field, we have a  well-controlled situation where all corners of the diagram are known and sufficiently simple. To close the diagram, we thus study how to obtain the quantized guiding center theory from the microscopic quantum theory, finding a consistent relation between these two levels of description in the suitable regime. In the case of a uniform magnetic field, we propose a correspondence between a fixed Landau level and the Hilbert space of the effective theory at fixed magnetic moment $\mu$ (cf. Sec.~\ref{MatchConstantB}). From this, in particular, we suggest an explanation for why the effective quantum theory predicts a discretization of radius in the case of closed isomagnetic contours, despite space being continuous in the microscopic description.
Then, most importantly, we consider the case of non-uniform magnetic field, extending the matching prescription by identifying a candidate subspace of the microscopic Hilbert space that can be associated with the state space of the effective theory (cf. Sec.~\ref{nonuniformBmatch}). 
We analyze rigorous conditions for adiabatic stability of the matching prescription, 
meaning that a state initially in this subspace remains approximately within the subspace under time evolution. We show by a rough estimation, for a simple class of magnetic fields, that the matching prescription is dynamically consistent for long time evolution in the regime of the guiding center approximation.
We also propose a prescription for mapping operators in the microscopic theory to operators in the effective theory, based on time coarse-graining and projection to the matched subspace.
Then, we perturbatively analyze the drift velocity 
to first order in the magnetic field gradient,
showing agreement with the prediction of the effective quantum theory 
(cf. Sec.~\ref{FirstOrderPertAnalysis}). 

We end the main text  with a summary of the paper, a discussion of implications of the results, and some suggestions for further investigations (cf.~Sec.~\ref{SecDiscussion}).  
Five appendices are included. In App.~\ref{AppClassicalGCconditions} the derivation of the classical guiding center approximation
is reviewed. In App.~\ref{AppTorus} the spatial plane is periodically identified, forming a non-commutative
torus, and a quantization of this phase space is derived. 
In App.~\ref{AppClosedBAlt} we explore an alternative quantization for the case of closed isomagnetic contours based on the group $\text{PSL}(2, \bb R)$.
In App.~\ref{AppLLmatrix} we derive some useful formulas
for matrix elements of functions of position between Landau level states. In App.~\ref{AppAdiabIneq} we present
a lengthy derivation establishing rigorous bounds characterizing the validity of the adiabatic 
approximation in the microscopic quantum theory, derive explicit estimates for these bounds in terms of the microscopic Hamiltonian and magnetic field, and use them to quantify the long-time adiabatic stability of the matching prescription.

\section{Guiding center approximation}
\label{SecGuidingApprox}

At the classical level, the motion of a nonrelativistic 
charged particle subjected to a magnetic field 
is described by the action 
\be\label{microS}
S = \int\!dt \left( \tfrac{1}{2}m\, g_{IJ}\,\dot x^I\dot x^J + q\, A_I\,\dot x^I\right).
\ee
Here $m$ and $q$ are the mass and electric charge of the particle, $x^I$ ($I=1,2,3$) denotes its position in three dimensional space, over-dot denotes derivative with respect to $t$, 
$g_{IJ}(x)$ is the spatial metric, and $A_I(x)$ is the magnetic potential 1-form.
Throughout the paper, a sum over repeated pairs of Latin indices is implicit.
As written the action is valid for any coordinate system $x^I$.
If the metric is flat, and we choose Cartesian coordinates so that $g_{IJ}=\delta_{IJ}$, 
the equations of motion take the form of the Lorentz force law,
\be\label{exactEOM}
m\ddot x^I = qF_{IJ} \dot x^J\,,
\ee
where
\be
F_{IJ} = \partial_I A_J-\partial_J A_I
\ee
is the (coordinate components of the) magnetic flux  2-form. 
The magnetic field vector is $B^I=\frac12 \epsilon^{IJK}F_{JK}$,
where $\epsilon^{IJK}$ is the alternating symbol in three dimensions (i.e., it is odd under index permutations and $\epsilon^{123}=1$).
Here $\partial_I$ denotes $\partial/\partial x^I$, and the Cartesian coordinate
indices can be raised and lowered without consequence.

If the magnetic field vector is constant, the particle spirals around a magnetic field 
line with constant angular velocity, and constant linear velocity along the field line direction. The former component
of the motion is called the gyro orbit.
If the magnetic field  is not constant, the motion is more complicated, but an approximation
becomes available if the gyro radius is small compared to the length scale over which the
field changes. The particle then executes gyro motion around the local magnetic field direction,
while the center of the gyro orbit primarily follows the magnetic field lines, with 
a slow transverse drift velocity caused by derivatives of the 
field line strength or direction. 
We first describe how this works in the three-dimensional case,
and then specialize to two dimensional motion. 

In the so-called {\it guiding center approximation} 
one writes an equation of motion for the center of the gyro orbit, without resolving
the gyro motion. An effective Lagrangian for the guiding center motion can be obtained
by decomposing the kinetic energy into the contributions from velocity 
components parallel and perpendicular 
to the magnetic field at each point. The parallel contribution is kept unchanged,
while the perpendicular contribution is replaced by the magnetic dipole energy
$\mu B$, and treated as a potential energy. The magnetic moment $\mu$ associated with the 
gyro motion is proportional to the angular momentum about the field line, and is an 
adiabatic invariant, so it is approximately conserved provided that the magnetic field is
nearly constant over the distance the particle travels during a gyro orbit. It is treated as a constant in the guiding center approximation. The  drift velocity contribution to the kinetic energy is subleading and therefore neglected. It turns out that, precisely because
the Lagrangian lacks a kinetic energy term for the transverse velocity, the Euler-Lagrange equations algebraically determine that velocity in terms of the other quantities.
This approach to deriving the guiding center approximation is reviewed 
in \cite{Jacobson:2024aap}.

In this paper we restrict to the case of 
motion in two spatial dimensions. This case 
is also governed by the action \eqref{microS},
but with the coordinate indices ranging over only $i=1,2$.
The magnetic field is then a (pseudo) scalar $B(x)$,
related to the flux 2-form via
\be\label{F}
F_{ij} = B \, \epsilon_{ij}\,,
\ee
where $\epsilon_{ij}$ is the area 2-form, normalized via $\epsilon_{ij}\epsilon_{kl}g^{ik}g^{jl}=2$.
If $B$ is constant, the orbits are circular, with angular frequency
\be
\Omega_B := q B/m\,,
\ee
called the {\it gyro frequency}. This is the same for all orbits,
but the {\it gyro radius}, i.e, the radius
of the circular orbit, depends on the angular kinetic energy. 
For non-constant $B(x)$, 
the guiding center approximation for two-dimensional motion 
describes only drift velocity, since there is no direction `along the field lines'.
As reviewed in App.~\ref{AppClassicalGCconditions}, the conditions for validity of this approximation are
\be
\rho^i\partial_i B\ll B,\qquad \rho^i\rho^j\partial_i\partial_jB\ll B\,,
\ee
$\rho^i$ is the displacement from the center of the gyro orbit, and absolute values are
implicit. The two-dimensional case of guiding center motion was 
discussed in \cite{Witten:1978rg}, and quantum aspects of this system have been discussed 
in \cite{chan2016quantum, Chan:2017dex,
Klauder:1996rj}. 

\section{Guiding center phase space}
\label{SecGuidingPS}

To canonically quantize the two-dimensional 
motion in the guiding center approximation, 
we must first characterize the corresponding phase space.
To this end we formulate the dynamics in terms of 
an action principle, and derive the symplectic structure from that. 
The action can be found by coarse-graining the microscopic action \eqref{microS},
as described in \cite{Jacobson:2024aap}. 
In two dimensions this reduces to 
\be\label{macroS}
S = \int\!dt \left(q\, A_i\,\dot x^i - \mu B\right),
\ee
where $x^i$ now represents the guiding center of the gyro motion, 
$\mu$ is the magnitude of the adiabatically conserved magnetic moment 
which is treated as a fixed constant in the macroscopic theory, 
and $B$ is the magnitude of 
the magnetic field.\footnote{The energy of a magnetic dipole in a magnetic field 
is in general $-\vec \mu\cdot\vec B$. 
For gyro motion $\vec\mu$ is antiparallel to 
$\vec B$, for either sign of the charge $q$, so $\vec \mu\cdot\vec B<0$ always,
hence the magnetic dipole energy is $|\vec\mu||\vec B|$.} From here on we will
restrict to $B>0$, since the guiding center approximation is never justified if $B=0$, 
so we need not distinguish $B$ from its magnitude. Moreover, 
to simplify the equations we may as well 
adopt units with $q=1$ at this point. (In order to later compare with the
microscopic theory in which $\mu$ is not fixed, we do not set $\mu=1$.) 

Unlike the microscopic action \eqref{microS}, the guiding center action
\eqref{macroS} is independent of the spatial metric, so Cartesian coordinates
have no preferred status in the theory. 
The one residue of the metric in the theory is the relation \eqref{F} 
between $B$ and $F_{ij}$. In the microscopic theory, we normalized
the area element $\epsilon_{ij}$ using the spatial metric. 
In the macroscopic
theory, the only background structures appearing in the action are the 
vector potential, characterized by its gauge-invariant field strength $F_{ij}$, and 
the magnitude $B$ of the magnetic field. 
\emph{In all that follows we shall assume that $F$ is a smooth nondegenerate 2-form and that $B$ is a smooth positive function.}
Given these two objects
we may  identify the area
2-form, 
now denoted by $\vol$, via \eqref{F},
\be
\vol := \frac{F}{B} \,.
\ee

To pass to the Hamiltonian formulation we start by evaluating the conjugate momenta,
\be
p_i := \frac{\partial L}{\partial \dot x^i} = A_i(x)\,.
\ee
Since this defines the momenta as functions of the coordinates, it actually
corresponds to a pair of primary constraints
\be
C_i := p_i -  A_i = 0\,.
\ee
The theory of constrained Hamiltonian systems was first 
worked out by Dirac and Bergmann 
(see \cite{Brown:2022iez} for a concise summary and references).
According to the Dirac-Bergmann algorithm,
the equations of motion are equivalent to 
Hamilton's equations with 
the ``primary Hamiltonian'' 
\be
H = p_i\dot x^i - L + \lambda^iC_i = \mu B + \lambda^iC_i
\ee
where $\lambda^i$ is a Lagrange multiplier 
whose equation of motion enforces the constraints.
The condition that the constraints are preserved in time
imposes the relation
\be\label{lambdaeq}
\lambda^jF_{ij}= \mu \partial_i B 
\ee%
which determines the Lagrange multiplier,
and the equation of motion for $x^i$ is 
\be
\dot x^i = \frac{\partial H}{\partial p_i} = \lambda^i.
\ee
We thus have
\be\label{drift}
\dot x^i = {}-\frac{\mu}{B} \vol^{ij} \partial_j B \,,
\ee
where $\vol^{ik}\vol_{jk}:={\delta^i}_j$.
It is also useful to express the equation of motion
directly in terms of the magnetic flux 2-form 
as 
\be\label{eomF}
\dot x \iprod F = {}- \mu \,dB\,,
\ee
where $\iprod$ denotes contraction on the
first slot of the 2-form $F$ which is written here using index-free notation.
Note that \eqref{eomF} is a fully tensorial equation,
in the sense that it involves no restriction whatsoever on the coordinate system.

Instead of working with the primary Hamiltonian and the
phase space coordinatized by $(x^i,p_j)$, one may pass to 
the reduced phase space in which the constraints are 
imposed ab initio. One restricts to the constraint surface
$p_i =  A_i(x)$, which is coordinatized by $x^i$ alone, 
and the original symplectic form $dp_i\wedge dx^i$ is replaced by 
\be\label{omega}
\omega = d(A_i)\wedge dx^i = F\,.
\ee
The corresponding Poisson bracket is called the {\it Dirac bracket},
which we denote by $\{\,,\,\}$.
The Dirac bracket
between functions $f(x)$ and $g(x)$ is given by
\be
\{ f, g\} = - \frac{1}{B} \vol^{ij} \frac{\partial f}{\partial x^i} \frac{\partial g}{\partial x^j}\,.
\ee
In particular, the Dirac bracket of the spatial coordinates is
\be\label{calg}
\{ x^i, x^j\} = - \frac{1}{B} \vol^{ij}\,,
\ee
so the position 
coordinates do not Poisson commute with each other.
In this reduced phase space formalism, the Hamiltonian is just the magnetic 
dipole energy, 
\be\label{H}
H = \mu B\;.
\ee
In this reduced phase space, the equation of motion in the form
\eqref{eomF} corresponds to Hamilton's equations in differential 
forms notation, $\dot x\iprod\omega = -dH$. 

\section{Quantization of the effective theory}
\label{SecQuantization}

We adopt the viewpoint that canonical quantization amounts to identifying a
Poisson-algebra of observables that is complete (so that 
any observable can be expressed as a function of them),
and constructing a unitary irreducible representation 
of the corresponding commutator-algebra of operators. 
In order to fully respect the structures in the classical theory,
the observables should be globally defined on the phase space, 
Casimir invariants of the classical and quantum algebras should match,
and the Poisson algebra should  exponentiate to a group action on the phase space.\footnote{The last point is central to Isham's group-theoretic approach to quantization \cite{isham1984topological, isham1989canonical}. 
The motivation is that, in quantum mechanics,
a finite dimensional Lie algebra of self-adjoint observables (under suitable regularity assumptions) always exponentiates to a group acting unitarily on the Hilbert space, while 
a Poisson algebra of classical observables 
does not always exponentiate into a group of symplectomorphisms on the phase space.
If this exponentiability property is not demanded for the classical algebra, one of the issues generally encountered is that the spectrum of the quantized observables will include values with no classical counterpart.
For example, quantizing the canonical algebra $\{x, p\} = 1 \mapsto [\wh x, \wh p] = i\hbar$ for the phase space of a particle on the positive half-line, $T^*\bb R^+$, would lead to the standard Heisenberg representation where the spectrum of $\wh x$ is the whole line, $\bb R$; this is directly linked to the fact that, classically, the flow of $p$ would try to move $x$ across the origin, and thus would not yield a group action on the half-line.}
We call such an algebra a ``quantizing algebra'', and the elements  of it 
the ``quantizing observables''.

Note that we are not allowed to just pick 
generic ``area-adapted'' coordinates $x^1$ and $x^2$ (i.e., satisfying $\vol = dx^1 \wedge dx^2$) as the quantizing algebra because, as seen from \eqref{calg}
\be\label{calg2}
\{ x^1, x^2\} = - \frac{1}{B}\,,
\ee
they generally do not form an algebra in the first place, even if we  include $B^{-1}$ as a third element. 
Also, we are not generally allowed to pick 
canonically-conjugate
coordinates $\gamma^1$ and $\gamma^2$ (i.e., satisfying $\{\gamma^1, \gamma^2\} = 1$) because, even though 
the Darboux theorem 
guarantees that 
such coordinates exist locally
on the phase space, 
they may not exist globally,
and even if they do exist
globally they may not 
range  from $-\infty$ to $+\infty$.\footnote{The reason for the demand on the range is that $\gamma^2$ generates an unrestricted translation of $\gamma^1$, so unless $\gamma^1$ (and $\gamma^2$) have range $\bb R$, the algebra will not exponentiate to a group action on the phase space.}
Thus, we must consider another way to select a suitable set of quantizing 
observables.

Experience shows that quantization is more likely to produce a 
 theory that accords with nature when the quantizing algebra
is constructed using the structures present in the classical 
theory and respecting the symmetries among them.
From the viewpoint of the effective theory,
we do not have access to the metric structure; only the magnetic scalar field $B$ and the magnetic flux 2-form $F$ are available.
But experimentation varying the magnetic field
would reveal
that $F$ and $B$ are not
independent, and that in fact the area form $\vol = F/B$ 
is a fixed, fundamental structure of physical space.
We thus take the area form as the background structure
to be respected as far as possible in the quantization.

Let $Q$ be one of the candidate quantizing
observables,
and $X$ be its symplectic flow, defined by $dQ = -X\iprod\omega$. 
The symplectic flow preserves $\omega = F$,
so if we demand that it also preserve the physical area $\vol = F/B$ then
it must preserve $B$, so $X\iprod dB = 0$. 
Since the phase space is two-dimensional
we have $\omega\wedge dB=0$, 
hence\footnote{We use here the fact that 
$\iprod$ is an antiderivation on the exterior algebra.}
\be
0 = X\iprod(\omega\wedge dB) = -dQ\wedge dB\,.
\ee
So, essentially,  $Q$ must be a function of $B$.\footnote{More precisely, $Q$ is a function of $B$ in any region where $dB \ne 0$; and vice versa where $dQ \ne 0$.} Of course, we cannot have a complete set of observables that are all functions of $B$, but we propose that at least one of the canonical observables should be a function of $B$. This is analogous to the usual quantization of a particle on a Euclidean space, where among all sets of conjugate coordinates $q$ and $p$, one chooses those where $p$ generates isometries of the physical space.

The criterion above does not seem to allow us to develop a universal quantization of the effective theory, as we must make 
some assumptions
about the topology of the {\it isomagnetic contours} (i.e., curves, or regions, of constant $B$) and the range of $B$.
We consider three
cases: constant $B$, and open and closed isomagnetic contours (assumed to be curves foliating the space), that we quantize differently.
In the constant case, $Q$ would be a constant and thus not furnish a useful coordinate, so 
we resort to any pair of area-adapted coordinates (which, in this case, do form an algebra together with $1$).
In the open case we use $Q =\log B$ as a coordinate, which requires that each value of $B$ corresponds to a unique contour.\footnote{It 
may be possible to piece together the quantization on patches of space 
foliated by contours on which the $B$ varies monotonically, but we will not address this in this paper.}
In the closed case we use $Q(x) = \int_x F$, where the integral is the magnetic flux enclosed by the isomagnetic contour passing through $x$.

Despite the non-universality of the quantization, we can still make some general statements. For instance, one expects that for any realization of  area-adapted coordinates, $x^1$ and $x^2$, and a given operator-ordering for $B$, \eqref{calg2} would be represented in the quantum theory up to corrections of order $\hbar^2$,
\be
[ \wh x^1, \wh x^2] = - i\hbar\wh{B^{-1}} + \ca O(\hbar^2)\,,
\ee
revealing that space is non-commutative in the guiding center theory.
This leads to an uncertainty relation
\be\label{spaceNC}
\Delta x^1 \Delta x^2 \ge \frac{\hbar}{2}|\la \wh {B^{-1}}\ra| + \ca O(\hbar^2)\,.
\ee
This phenomenon can be understood from the microscopic perspective: as $\Delta x^1$ becomes smaller, $\Delta v^1$ becomes larger due to the standard Heisenberg uncertainty relation, $\Delta v^1 \ge \frac{\hbar}{2m\Delta x^1}$; 
and the classical gyro motion implies that there is a corresponding increase in the spread of $\rho$ by $\Delta\rho \sim \Delta v^1/\Omega$; so the uncertainty of $x^2$ should increase by $\Delta x^2 \sim \Delta \rho \ge \frac{\hbar}{2m \Omega \Delta x^1} = \frac{\hbar}{2B \Delta x^1}$, in agreement with \eqref{spaceNC}.

\subsection{Constant magnetic field}
\label{constantBeff}

As mentioned above, in this case a pair of area-adapted coordinates defines a Heisenberg 
algebra.
Thus, given any pair of such coordinates, $x^1$ and $x^2$, globally defined and ranging from $-\infty$ to $+\infty$, the brackets \eqref{calg2} could be quantized as
\be\label{constBCC}
[ \wh x^1, \wh x^2 ] = - \frac{i\hbar}{B}\,,
\ee
The only unitary irreducible representation of the Heisenberg algebra, according to the Stone-von Neumann theorem, is the familiar one on $L^2(\bb R)$,\footnote{The theorem
also assumes that the representation exponentiates to a strongly continuous, 
unitary irreducible representation of the Heisenberg group \cite{ReedSimon1980}.\label{stonevonneumann}}
carried by complex-valued square-integrable wavefunctions of either variable.
In the $x^1$-realization,
\ba\label{xyrep}
\wh x^1 \psi(x^1) &= x^1 \psi(x^1) \\
\wh x^2 \psi(x^1) &= \frac{i\hbar}{B} \frac{\partial\psi}{\partial x^1}(x^1)\,.\label{x2rep}
\ea

There is, yet, a fair deal of ambiguity due to different choices of area-adapted coordinates. 
Any pair of area-adapted coordinates, say $(x^1,x^2)$ and $(y^1,y^2)$, are related by an area-preserving diffeomorphism, which is a symplectic symmetry. 
While the quantizations of different canonical pairs will produce unitarily equivalent 
representations, due to Stone-von Neumann theorem, in physics one is also interested in how other operators are represented in the Hilbert space. Lacking any other structure to guide the choice of operator orderings, these other operators will be subject
to operator ordering 
ambiguities that can only refer to the choice of the canonical pair, leading to a coordinate-dependent quantization.
We contrast this with the case encountered in the standard quantization of a particle living on a Euclidean space, where the flat metric structure of the background space selects a preferred class of (linearly-related) canonical coordinates --- namely, those where the position variables are Cartesian coordinates (or, equivalently, where the momentum variables generate isometries of the space). In that case, the spatial
symmetry provides a physical structure that suggests
choices of operator orderings in the quantization of other observables.
In the present case, however,  an area structure alone
on physical space does not suffice
to select a preferred class of (linearly-related) canonical coordinates.

Having quantized a particular 
pair of area-adapted coordinates $(x^1,x^2)$, we can estimate the density 
of states. The argument is parallel to the one used for the density of states of a given Landau level in \cite{Landau:1991wop}.  Consider a rectangular region with coordinate lengths $a_1$ and $a_2$ in the $x^1$ and $x^2$ directions. (Since the coordinate operators do not commute, this is meaningful only for rectangles with $a_1 a_2\gg \hbar/B$.)  
If we impose periodic boundary conditions in the $x_1$ direction, the allowed Fourier wavenumbers for $\psi(x_1)$  are $k_n=2\pi n/a_1$, with integer $n$, whose spacing is $2\pi/a_1$. According to \eqref{x2rep}, these wavenumbers are equal to 
eigenvalues of $B\wh x^2/\hbar$, which in the rectangular region range from $0$ to 
$Ba_2/\hbar$. The number of wavenumbers that fit in this interval is 
$(Ba_2/\hbar)/(2\pi/a_1) = Ba_1a_2/2\pi\hbar$, so
the density of states is 
\be\label{densitystatesLandau1}
\text{\it density of states} = \frac{B}{2\pi\hbar}\,,
\ee
in agreement with the standard density of states for a Landau level.
The magnetic length $\ell_B=\sqrt{\hbar/B}$ sets the
non-commutativity scale in \eqref{constBCC} and thus determines the density 
of states \eqref{densitystatesLandau1}. 
In the microscopic theory it also sets the scale of a 
circular lowest Landau level wavefunction; hence, even though the 
gyro motion is not resolved in the effective theory, the length 
scale of the smallest quantum gyro motion governs the 
effective theory. 
We will explore in detail the match between the 
quantum effective theory and the microscopic theory in Sec.~\ref{SecMatch}.

A more precise approach to defining the density of states is to eliminate 
the boundaries before quantization, 
via periodic identification of $x^1=0$ with $x^1=a_1$, 
and $x^2=0$ with $x^2=a_2$.
This results in a phase space with toroidal topology, which 
can be quantized 
by taking advantage of the fact that the torus is a 
quotient of the plane by a discrete group ($\bb Z^2$) of symmetries.
We spell this out in 
App.~\ref{AppTorus}, showing how it leads to the symplectic flux quantization 
condition and to the density of states.

The dynamics for the case of a constant magnetic field 
is particularly simple, since the Hamiltonian is a constant.
To probe the nature and fidelity of this effective theory  
in a less trivial setting, we consider in 
 \cite{GCHallHawking} the effect of adding
an external potential, and apply it to the quantum Hall effect and to 
Stone's model of Hawking radiation of edge modes in a quantum Hall system \cite{Stone:2012cx}.
We find that the predictions of the effective quantization 
match precisely with the main results of the microscopic quantum theory.

\subsection{Open isomagnetic contours}
\label{openBlines}

If $B$ ranges monotonically in the open interval $(B_1,B_2)$, with $B_2>B_1>0$,
a natural coordinate to quantize is $\log [(B - B_1)/(B_2-B)]$, 
which ranges from $-\infty$ to $\infty$. 
The equations of motion \eqref{eomF} are invariant under a constant scaling 
$B\rightarrow \lambda B$, and  
this coordinate is invariant under
that scaling. In the case $B_1=0$ and $B_2\rightarrow\infty$,
this is equivalent to $\log B$ minus the infinite constant $\log B_2$.
For simplicity in the following we consider only the range $B\in (0,\infty)$,
and we take $\log (B/b)$ as one of the coordinates to be quantized. 
The constant $b$, with dimensions of magnetic field, is included
to render the argument of the logarithm dimensionless.
A constant scaling of $B$ corresponds to a constant shift of the coordinate 
$\log B/b$. As the value of $b$ is arbitrary and affects nothing,
we shall henceforth suppress it.

The conjugate variable $\chi$ is defined such that 
\be\label{symplectic}
\omega = d\log B \wedge d\chi\,.
\ee
To see that such a $\chi$ exists, let $y$ be any coordinate independent of
$B$. Since the phase space is two dimensional, there exists 
a function $f(B,y)$ such that $\omega= f(B,y)\, d\log B\wedge dy$.
In terms of the new coordinate 
$\chi(B,y) = \int_0^y f(B,y') dy'$ we have
$\omega=d\log B\wedge d\chi$, since $dB\wedge dB=0$. 
The coordinate $\chi$ is the flow parameter
along the phase space flow generated by $\log B$,
i.e.,  $\partial_\chi\iprod \omega = {} -  d\log B$. 
As we wish to treat $\log B$ and $\chi$ as conjugate variables amenable to canonical quantization, we assume that $\chi$ is globally defined on the phase space, with range from $-\infty$ to $+\infty$.\footnote{An equivalent condition referring directly to the magnetic field is the following. Let $\ca U(B_1, B_2)$ be the region between the isomagnetic contours at $B_1$ and $B_2$, and consider any cut traversal to the isomagnetic contours dividing this region into two halves, $\ca U_+(B_1, B_2)$ and $\ca U_-(B_1, B_2)$. Then $\chi$ is unbounded if and only if $\int_{\ca U_+(B_1, B_2)}\omega$ and $\int_{\ca U_-(B_1, B_2)}\omega$ both diverge for all $B_1\ne B_2$.}
The flow generated by
the Hamiltonian $\mu B$ is 
$(\partial_t\chi) \partial_\chi$, 
with 
\be\label{cchidot}
\partial_t\chi = \mu B\,.
\ee
Our choice of variables is thus similar to action-angle variables. 
The solution, $\chi = \chi_0 + \mu Bt$, does not obviously look like it 
corresponds to the drift velocity which is proportional to the derivative 
of $B$. But recall that \eqref{eomF} holds in any coordinate system. 
Since we have chosen $\log B$ as one of the coordinates, the 
only derivative of $B$ that appears is $\partial B/\partial(\log B) = B$. 

Upon canonical quantization,
Poisson bracket relation
\be
\{\chi, \log B\} = 1
\ee
leads to the commutation relation
\be
[\wh\chi, \wh{\log B}] = i\hbar
\ee
where $\wh\chi$ and $\wh{\log B}$ are self-adjoint operators.
To analyze the dynamics, we adopt the Heisenberg picture.
Since $\wh{\log B}$ is a 
function of the Hamiltonian $\wh{H} = \mu\wh{B}$, it is constant 
in time. The equation of motion for $\wh\chi$ is
\be\label{qchidot}
\partial_t\wh\chi =\frac{1}{i\hbar}\, [\wh\chi,\mu e^{\wh{\log B}}] = \mu \wh B\,, 
\ee
whose solution is 
\be\label{chisol}
\wh\chi = \wh\chi_0 +\mu \wh B t\,, 
\ee
where $\wh\chi_0$ is a constant operator. This is identical in form to the
classical equation of motion and its solution. In a state in which the dispersion
of $\wh B$ is small, the quantum theory thus reproduces something close to the 
classical drift motion. The eigenstates of the Hamiltonian $\wh{\log B}$ 
are localized on a single isomagnetic line, and are non-normalizable.

In the Schr\"odinger picture,
the quantum theory is described by square-integrable, 
complex-valued wavefunctions on $\chi$ (or $\log B$), 
with the canonical variables acting on $\psi(\chi)$ as
\ba
\wh\chi\, \psi &= \chi \psi \no
\wh{\log B}\, \psi &= -i\hbar \frac{\partial \psi}{\partial \chi}\,.
\ea
The Hamiltonian, being a simple function of $\log B$, admits a straightforward quantization,
\be
\wh H\psi = \mu e^{\wh{\log B}} \psi = \mu e^{-i\hbar \frac{\partial}{\partial \chi}}\psi\,.
\ee
The energy eigenstates are $\sim e^{ik\chi}$, with constant $k$. They are 
non-normalizable, and have energy eigenvalue $\mu e^{\hbar k}$.
The domain of the Hamiltonian consists of functions whose Fourier transform
decays faster than $e^{-\hbar k}$. 
The  action of $H$ on such a $\psi(\chi)$ is 
\be
\wh H \psi(\chi) = \mu \int dk\, \wt{\psi}(k) e^{ik(\chi -i\hbar)}\,,
\ee
where $\wt{\psi}$ is the Fourier transform of $\psi$. If $\psi$ is analytic, $\wh H \psi(\chi) = \mu \psi\left(\chi - i\hbar\right)$.

To gain some insight into this unusual action, we
consider its effect on  a coherent state with spread $\sigma \in \bb R^+$ and center $\zeta \in \bb C$, $\phi_{\sigma,\zeta} \sim e^{-(\chi - \zeta)^2/2\sigma^2}$. 
The Hamiltonian maps this to another coherent state, 
$ \mu \phi_{\sigma, \zeta + i\hbar}$,
whose center is shifted  in the direction transverse to the isomagnetic lines, corresponding to a change in
the expectation value of $\log B$ by an amount $ \hbar^2/\sigma^2$. 
The time evolution of the initial coherent state is thus given by 
\be
\phi(\chi, t) = e^{-i\wh H t}\phi(\chi,0)= \sum_{n=0}^{\infty}
\frac{1}{n!} \left( \frac{-i\mu t}{\hbar}\right)^n  \phi_{\sigma, \zeta + in\hbar}\,.
\ee%
This superposition of states, shifted by multiples of $\hbar$ in the 
direction transverse to the isomagnetic lines, appears in tension 
with the solution \eqref{chisol} 
to the Heisenberg equation of motion for $\chi$, 
which indicates a constant velocity shift {\it along} those lines.
Evidently, the superposition 
must in fact combine to produce propagation along those lines.

\subsection{Closed isomagnetic contours}
\label{closedBlines}

Now we consider the case where all contours of constant $B$ are closed, providing a foliation of the phase space by  (topological) circles. As before, we assume that $F$ and $B>0$ are smooth, and for technical reasons we will also assume that the total magnetic flux through the plane is infinite and that a certain regularity condition (explained below) holds at the center of the foliation.

\subsubsection{Poisson algebra}
\label{subsecPoissonclosed}
In this case, while $\log B$ might provide the role of a ``radial'' coordinate, the conjugate variable
$\chi$ is not continuous.
Accordingly, $\chi$ is not a suitable observable. Together with $\log B$, we could try to quantize the continuous variables $\sin(2\pi \chi/\chi_0(B))$ and $\cos(2\pi \chi/\chi_0(B))$, where $\chi_0(B)$ is the range (or period) of $\chi$ along the isomagnetic contour at $B$.\footnote{The range can be
determined using \eqref{symplectic2}: 
$d\log B\wedge d \chi=\frac{1}{2\pi}d\phi \wedge d\theta =\frac{1}{2\pi}\frac{d\phi}{d\log B}d\log B\wedge d\theta =  d\log B\wedge d \big(\frac{d\phi}{d\log B}\frac{\theta}{2\pi}\big)$, 
so $\chi_0(B) = \frac{d\phi}{d\log B}$.}
However, the resulting candidate algebra would not close on account of the 
$B$-dependence of $\chi_0(B)$.
In particular, 
the bracket 
\be
\big\{\sin\left( {2\pi\chi}/{\chi_0(B)}\right), \log B\big\} = ({2\pi}/{\chi_0(B)}) \cos\left( {2\pi\chi}/{\chi_0(B)}\right)
\ee
is not, in general, equal to a linear combination of these three observables 
with constant coefficients.

We thus seek another coordinate $Q$, satisfying $dQ\wedge dB = 0$, 
whose canonical conjugate has the same period on every contour. 
It turns out that $Q(x)$ must be a linear function of the magnetic flux 
through the compact region ${\cal R}(x)$ enclosed by the $B$ contour that
passes through $x$, 
\be\label{phi}
\phi(x) :=  \int_{\ca R(x)}  F\,.
\ee
Note that $\phi$ vanishes at the 
origin and, since $B>0$, increases monotonically away from it.
Therefore $\phi$ serves as a label for the contours.
As shown in the next paragraph, there exists a conjugate coordinate
$\theta$ (unique up to a $\phi$-dependent shift) such that
\be\label{symplectic2}
\omega = F = \frac{1}{2\pi}d\phi \wedge d\theta\,,
\ee
everywhere except at $\phi = 0$,
with the range of $\theta$ equal to $2\pi$ on all 
of the closed $B$ contours.
Although $\theta$ is discontinuous (or multi-valued), 
the 1-form $d\theta$ is well defined except at $\phi=0$. 
Since any other suitable coordinate $Q$ satisfying $dQ\wedge dB = 0$ will be some monotonic function $g$ of $\phi$, we have
\be
\omega = \frac{1}{2\pi}dg(\phi) \wedge d\left(\theta/g'(\phi)\right).
\ee
The period of the variable conjugate to $g(\phi)$ on the contour passing through $x$ is $2\pi/g'(\phi(x))$, which is independent of $\phi$ if and only if $g(\phi) = a\phi + b$ for constants $a\ne0$ and $b$, as we claimed earlier. We choose to take simply $Q = \phi$.

To see that a coordinate $\theta$ conjugate to $\phi$ and 
of fixed period exists, we can make use of the assumption 
that 
$\phi$ is smooth everywhere (including at the origin) and that it has a nondegenerate minimum at the origin (i.e., its Hessian is positive definite).
(These assumptions are unnecessary for establishing this particular result, but they make the argument a bit simpler
and will be needed for showing that the algebra we later construct exponentiates to a group action on the phase space---for a more general argument, see footnote~\ref{footnoteFalpha}.)
It then follows from the 
Morse-Darboux lemma \cite{colindeverdiere1979lemme} 
that there exist (smooth) local coordinates $u$ and $v$ 
in a neighborhood of the origin such that
\be\label{Fuv}
F = du\wedge dv
\ee
and 
\be
\phi = \a (u^2 + v^2)
\ee
for some smooth function $\a$ satisfying $\a(0)=0$ and $\a'(0)>0$.
Since $\phi$ is by definition the flux of $F$ through the region $\ca R(u,v)$ bounded by isomagnetic contours, and $\phi$ is constant along curves of constant $u^2+v^2$, it follows that $\ca R(u,v)$ is the disc of coordinate radius $u^2+v^2$.
Thus 
\be\label{phiuv}
\phi(u,v) = \int_{\ca R(u,v)} du'dv' = \pi(u^2 + v^2)\,,
\ee
so $\a$ is just multiplication by $\pi$.
Introducing the angular variable 
$\theta := \text{arg}(u+iv)$
it follows that, in this neighborhood of the origin,
\be\label{Fptt}
F = \frac{1}{2\pi}d\phi \wedge d\theta\,.
\ee
To establish that this form holds in the entire plane
we need to show that the angle coordinate $\theta$ can be 
extended to outside the neighborhood of the origin while 
preserving \eqref{Fptt} and the $2\pi$ periodicity of $\theta$.

To prove that $\theta$ can be extended in this manner,
we start with an arbitrary extension, so that the extended $\theta$
is constant along ``radial'' curves that join smoothly to the 
constant $\theta$ curves in the interior. Then $(\phi,\theta)$ provide
good coordinates everywhere except at the origin, so there exists a smooth function
$f(\phi,\theta)$ such that
outside the neighborhood we have 
\be\label{Ff}
F = f(\phi,\theta)\, d\phi\wedge d\theta
= \frac{1}{2\pi}d\phi\wedge d\tilde\theta = \frac{1}{2\pi}d(\phi\, d\tilde\theta)\,,
\ee
where 
$\tilde\theta(\phi,\theta) = \int_0^\theta f(\phi,\theta') d\theta'$.
Integrating the first and last expressions for $F$ 
in \eqref{Ff} over the annulus bounded on the inside
of the neighborhood by the $\phi_i$ contour and on the outside by any $\phi$ contour, yields the equation $\phi - \phi_i = \phi \oint d\tilde\theta/2\pi - \phi_i$, which implies that the period of the coordinate $\tilde\theta$ is 
also $2\pi$. Dropping the tilde, this establishes the validity of 
\eqref{symplectic2}.\footnote{To see strictly that a coordinate $\theta$ exists and has period $2\pi$ on each isomagnetic circle, it suffices to assume that $F$ and $B$ are smooth. It is clear from \eqref{Ff} that a coordinate $\wt\theta$ conjugated to $\phi/2\pi$ exists everywhere away from the origin, so it only needs to be shown that it has period $2\pi$ on each contour. Consider a deformation of $F$ by a smooth function $\alpha(\phi)$, $F_\alpha := \alpha F$, with $\alpha = 0$ in the interval $[0,\varepsilon/2]$ and $\alpha=1$ for $\phi \ge \varepsilon$, interpolating smoothly in the interval $[\varepsilon/2,\varepsilon]$. We have $2\pi F_\alpha = \alpha(\phi) d\phi \wedge d\wt\theta = d\left(\int_0^\phi\!d\phi'\alpha(\phi') \,d\wt\theta\right)$, which is regular everywhere. Stokes' theorem gives
\[
2\pi \int_{\ca R(x)}\! F_\alpha = \int_0^{\phi(x)}\!d\phi'\alpha(\phi') \int_{\partial{\cal R}(x)} d\wt\theta\,.
\]
Now taking the limit $\varepsilon\to 0^+$ and using that, 
for a smooth $F$,
$\phi$ is continuous (and $0$ at the origin), 
the left-hand side approaches $2\pi\phi(x)$, while the right-hand side approaches $\phi(x) \int_{\partial{\cal R}(x)} d\wt\theta$.
It follows that $\oint d\wt\theta=2\pi$ on all closed isomagnetic contours. \label{footnoteFalpha}}

The functions $\sin\theta$ and $\cos\theta$ are everywhere defined 
except at $\phi=0$, 
and together with $\phi$ they do form a closed algebra,  
\ba
&\{\sin\theta,\, \phi\} = 2\pi \cos\theta \no
&\{\cos\theta,\, \phi\} = - 2\pi \sin\theta \no
&\{\sin\theta,\, \cos\theta\} = 0\,. \label{closedBcharges0}
\ea
This is $\fr e(2)$, the Euclidean algebra in 2-dimensions.
There is, however, an issue: this algebra does not exponentiate to a group action on the phase space. This is apparent from the fact that $\sin\theta$ and $\cos\theta$ are ill-defined at the origin, $\phi = 0$, and so are their symplectic flows at that point.\footnote{If one were to ignore its failure to act globally on the phase space, and insist in quantizing this algebra,
one would necessarily obtain a representation (assumed unitary and irreducible) 
where the spectrum of $\phi$ is unbounded from below, which is inconsistent with the classical property $\phi\ge0$.}

A simple fix is to weight the sine and cosine by $\sqrt{\phi}$. Based on the coordinates $(u,v)$ introduced above, in a neighborhood of the origin, we see that a weight proportional to $\sqrt{\phi}$ would regularize the sine and cosine, since $\sqrt{\phi}\cos\theta \propto u$ and $\sqrt{\phi}\sin\theta \propto v$, which are well-defined and smooth near the origin.
Accordingly, we define the variables
\ba
&\Phi := \frac{1}{2\pi}\phi \no
&X := \sqrt{2\Phi} \cos\theta \no
&Y := \sqrt{2\Phi} \sin\theta\,, \label{closedBcharges}
\ea
which are well-defined and smooth on the full phase space. Together with constant function $1$, they form a closed algebra
\ba
&\{X, Y\} = -1 \no
&\{X, \Phi\} = - Y \no
&\{Y, \Phi\} = X \,.\label{slcharges}
\ea
This is $\wt{\fr e(2)}$, the centrally-extended Euclidean algebra in 2-dimensions.\footnote{Note that the redefinition of variables in \eqref{closedBcharges} then amounts to centrally extending the algebra defined by the variables in \eqref{closedBcharges0}.}
If $\Phi$ is unbounded from above, as assumed in the beginning of this section,\footnote{This assumption is actually necessary for 
validity of the guiding center approximation in the underlying microscopic theory.
Consider for example a centrally symmetric field with $B(r) \sim r^{-2 -\alpha}$ for large $r$, so the  large $r$ contribution to the 
flux integral is $\sim \int^r r'^{-1-\alpha}dr'$. This is bounded above if $\alpha>0$. 
For fixed $\mu$ (which is proportional to the magnetic flux through the gyro orbit), 
the gyro radius $\rho$ scales according to  $\rho \sim \sqrt{\mu/B}\sim r^{1 +\alpha/2}$.
But if $\alpha >0$ then for sufficiently large $r$ this becomes inconsistent with 
the guiding center approximation requirement $B'\rho \lesssim B$, since 
with this sort of $B(r)$ we have $B/B'\sim r$.} this algebra does exponentiate to a group of symplectomorphisms on the phase space, namely the centrally-extended Euclidean group in 2-dimensions, $\wt{E(2)}$.
In fact, $X$ and $Y$ are the generators of translations, $\Phi$ is the generator of rotations (with respect to the canonical angle $\theta$) 
about the origin, and the central element acts trivially.

It is worth asking whether the proposed fix of multiplying the sine and cosine by $\sqrt{\phi}$ was the most general that would produce an algebra of observables that exponentiates to a group action on the phase space. The answer is negative, and in App.~\ref{AppClosedBAlt} we show that the most general $\phi$-dependent weight for the sine and cosine is actually $\sqrt{2\Phi + \eta\Phi^2}$, with $\eta\ge 0$. The case $\eta =0$ is the one we considered above, which will lead to a quantization based on $\wt{E(2)}$. The case $\eta>0$ would lead to a quantization based on $\text{PSL}(2,\bb R)$, which we also describe in App.~\ref{AppClosedBAlt}. We regard the $\wt{E(2)}$ quantization as the most natural choice, since the alternative $\text{PSL}(2,\bb R)$ quantization is based on a classical algebra that depends on the parameter $\eta$ which has dimensions of inverse flux, and there is no scale in the classical theory that can be used to define such a parameter.

\subsubsection{Quantization}
\label{QuantizationclosedBlines}

To quantize the algebra \eqref{slcharges}
is to choose an irreducible unitary projective representation 
of the group it generates, $\wt{E(2)}$.
To this end, one
promotes the generators to operators, 
replaces the Poisson brackets by 
$1/i\hbar$ times commutators, 
and chooses an irreducible representation of the 
algebra in which the operators are self-adjoint and the states
have positive Hilbert space norm. When choosing the
representation, we have advanced in \cite{AndradeeSilva:2020ofl} 
that one should presumably require matching of any 
Casimir invariants of the Poisson algebra, since those commute with all observables and hence are not subject to quantum uncertainty relations. 
For the algebra \eqref{slcharges},
the only Casimir invariant is
\be\label{ClassicalCasimir}
C = 2\Phi - X^2 - Y^2
\ee
which classically evaluates to $0$.
Another well-motivated quantization
principle (discussed
for example in Sec.~2.1 of \cite{e2023quantization}) is that if the classical algebra contains a classically-complete subalgebra (i.e., a subalgebra that separates points in phase space), then that should also be represented irreducibly (since irreducibility is the quantum equivalent of completeness).

The quantized algebra thus reads
\ba
&[\wh X, \wh Y] = -i\hbar \no
&[\wh X, \wh \Phi] = - i\hbar \wh Y \no
&[\wh Y, \wh \Phi] = i\hbar \wh X \label{quantume2}
\ea
and the quantum Casimir is
\be
\wh C = 2\wh\Phi - \wh X^2 - \wh Y^2\,.
\ee
In any (complex) irreducible representation, Schur's lemma implies that a Casimir operator must be a multiple of the identity. Thus 
\be
\wh C = \gamma \,\hbar
\ee
for some $\gamma \in \bb R$ (since $\wh C$ is symmetric). It follows that $\wh \Phi$ can be expressed as
\be\label{whPhirep}
\wh\Phi = \frac{\wh X^2 + \wh Y^2}{2} + \frac{\gamma\,\hbar}{2}\,.
\ee
Since the Heisenberg subalgebra of  \eqref{quantume2} 
is classically complete, it should be 
represented irreducibly, according to the quantization 
principle stated below \eqref{ClassicalCasimir}. In fact, this is 
automatically satisfied in any irreducible representation of the 
full algebra since, according to  \eqref{whPhirep}, 
$\wh\Phi$ is then a function of $\wh X$, $\wh Y$ and $\wh 1$,
hence any invariant subspace for the Heisenberg subalgebra would 
also be invariant for the full algebra.

By the Stone-von Neumann theorem, the only (regular---see footnote~\ref{stonevonneumann})
unitary irreducible representation of the Heisenberg algebra
acts on wavefunctions $\psi \in L^2(\bb R)$ as
\ba
\wh X \psi(X) &= X \psi(X) \no
\wh Y \psi(X) &= i\hbar \frac{\partial\psi}{\partial X}(X)\,.
\ea
The flux operator $\wh \Phi$ acts as the harmonic oscillator Hamiltonian shifted by a $\gamma$-dependent constant. The representations are thus labeled only by the value of $\gamma$.
In any $\gamma$-representation, the spectrum of $\wh\Phi$ is discrete,
with uniform spacing equal to $\hbar$, and minimal value $\hbar(\gamma +1)/2$.
Applying the Casimir matching principle \cite{AndradeeSilva:2020ofl} gives $\gamma = 0$, since $C=0$ classically.
We will take $\gamma = 0$ henceforth. The spectrum of $\wh\Phi$ is thus
\be\label{specPhieff}
\text{Spec}(\wh\Phi) = \left\{ \hbar\left(j+\frac{1}{2}\right),\,\, j\in \bb N \cup\{0\} \right\}\,.
\ee

It is convenient to introduce a basis of normalized eigenvectors of $\wh\Phi$, with elements denoted by $|j\ra$, for $j\in \bb N \cup\{0\}$,
\be
\wh \Phi |j\ra = \hbar\left(j+\frac{1}{2}\right) |j\ra\,.
\ee
Also, importing from the familiar objects employed in the analysis of the quantum harmonic oscillator, we define 
\be\label{whadef}
\wh a := \frac{\wh X - i \wh Y}{\sqrt{2\hbar}} \quad\text{and}\quad
\wh a^\dag := \frac{\wh X + i \wh Y}{\sqrt{2\hbar}}
\ee
so $[\wh a, \wh a^\dag] = 1$. They act as ladder operators for $|j\ra$,
\ba
\wh a|j\ra &:= \sqrt{j}|j-1\ra \no
\wh a^\dag|j\ra &:= \sqrt{j+1}|j+1\ra
\ea
where a particular choice for the phases of $|j\ra$ is assumed.
Define also the number operator 
\be
\wh N := \wh a^\dag \wh a
\ee
which acts as $\wh N|j\ra = j |j\ra$.
Two other operators worth defining would be $\wh{e^{i\theta}}$ and $\wh{e^{-i\theta}}$, that is, the quantum correspondents of the classical functions $e^{i\theta}$ and $e^{-i\theta}$. Classically, these phases form a basis for complexified observables, since they must be periodic in $\theta$. Note that, classically, if we define $a$ and $a^*$ by the analogue of \eqref{whadef}, but omitting the $\hbar$, we would have
\be
a^* = e^{i\theta} \sqrt{\Phi} = e^{i\theta}|a^*|\,.
\ee
Of course one could try to directly quantize this relation, but an operator ordering prescription would be required. It seems more straightforward to see this as a polar decomposition of the function $a^*$, and consider the corresponding polar decomposition of operators in quantum mechanics. That is, for any closed, densely-defined (possibly unbounded) operator $A$, there is a unique polar decomposition given by
\be\label{polardecompgen}
A = U|A|
\ee
where $|A| = \sqrt{A^\dag A}$ and $U$ is a partial isometry 
with kernel equal to  $\ker(|A|)$ \cite{ReedSimon1980}. 
(A partial isometry is a linear map that is an isometry on the orthogonal complement of its kernel.) Applying this to $A = \wh a^\dag$
we obtain
\be
\wh a^\dag = \wh{e^{i\theta}} |\wh a^\dag|\,,
\ee
where $\wh{e^{i\theta}}$ is the (partial) isometry
defined by this decomposition.\footnote{We could have similarly defined the polar decomposition with the (partial) isometry on the right, $A = |A|U$, where $|A|$ would now be $\sqrt{AA^\dag}$ but $U$ would \emph{remain the same operator}. Therefore, the chosen order in the polar decomposition does not affect the definition of $\wh{e^{i\theta}}$.}
Since $|\wh a^\dag| = \sqrt{\wh a \wh a^\dag} = \sqrt{\wh N + 1}$ has a trivial kernel,
$\wh{e^{i\theta}}$ is an actual isometry on the Hilbert space. In fact, we have
\be
\wh{e^{i\theta}} = \wh a^\dag (\wh N + 1)^{-1/2}
\ee
so 
\be
\wh{e^{i\theta}}|j\ra = |j+1\ra\,.
\ee
Notice that, while this is obviously an isometry,
it is not unitary since its range does not include $|0\ra$. 
We define $\wh{e^{-i\theta}}$ as its adjoint,
\be
\wh{e^{-i\theta}} := \wh{e^{i\theta}}^\dag
\ee
which acts as
\be
\wh{e^{-i\theta}}|j\ra = \left\{ 
  \begin{array}{ c l }
    |j-1\ra & \quad \text{if $j \ge 1$} \\
    0                 & \quad \text{if $j=0$}\;.
  \end{array}
\right.
\ee
We see that $\wh{e^{-i\theta}}\wh{e^{i\theta}} = 1$, mimicking the classical relation $e^{-i\theta}e^{i\theta} = 1$, but $\wh{e^{i\theta}}\wh{e^{-i\theta}} = 1 - |0\ra\la 0|$.
Another good property of this definition of $\wh{e^{i\theta}}$ is that it transforms 
naturally under $\theta$-rotations, that is 
\be
[\wh\Phi, \wh{e^{i\theta}}] = \hbar\, \wh{e^{i\theta}}\,,
\ee
which is the direct quantization of the classical relation $\{\Phi,e^{i\theta}\} = -i e^{i\theta}$. The same good property holds for $\wh{e^{-i\theta}}$.

Let us study some dynamical properties of this quantum theory.
First, note that the classical Hamiltonian, given in \eqref{H}, is a function only of $\Phi$ in this setting, $H=\mu B(\Phi)$, so
a classical particle orbits on an isomagnetic loop, with fixed enclosed flux. 
Hamilton's equation  $\partial_t\theta = \partial H/\partial\Phi$
implies that $\theta$ evolves as
\be\label{thetat}
\partial_t \theta = \mu B'(\Phi)
\ee
so the particle orbits with the ``canonical angle''
$\theta$ changing at the constant rate $\mu B'(\Phi)$.
Quantum mechanically, the Hamiltonian acts as 
\be
\wh H |j\ra = \mu B(\wh\Phi)|j\ra = \mu B\big(\hbar(j+\sfrac{1}{2})\big) |j\ra\,.
\ee
Since $\wh H$ commutes with $\wh\Phi$, it immediately follows that the enclosed magnetic flux $2\pi\Phi$ is conserved in time, like in the classical dynamics.
The time evolution of $\wh{e^{i\theta}}$
is governed by the Heisenberg equation
\be
\partial_t \wh{e^{i\theta}} = \frac{1}{i\hbar}[\wh{e^{i\theta}},\wh H]\,.
\ee
By a straightforward computation we obtain
\be\label{EOMeithete}
\partial_t \wh{e^{i\theta}} =i{\mu}B^\Delta(\wh\Phi)\wh{e^{i\theta}}
\ee
where 
\be\label{qB'}
B^\vartriangle(\wh\Phi) := \frac{B(\wh\Phi)-B(\wh\Phi-\hbar)}{\hbar}
\ee
is the ``quantum derivative"
of $B$ with respect to its argument.
The solution to this evolution equation is
\be\label{evo}
 (\wh{e^{i\theta}})_t =e^{i\mu B^\vartriangle(\wh\Phi)t} (\wh{e^{i\theta}})_0\,.
\ee
Note that it has the same form as the classical evolution implied by \eqref{thetat},
\be
e^{i\theta(t)} = e^{i\mu B'(\Phi)t} e^{i\theta(0)}\,,
\ee
except in the replacement
of the derivative $B'(\Phi)$ by the quantum derivative $B^\vartriangle(\wh\Phi)$.
We can also define the \emph{angular velocity} operator by  
\be
\wh{\partial_t\theta} := -i \wh{e^{-i\theta}} \partial_t\wh{e^{i\theta}}.
\ee
Using the equation of motion \eqref{EOMeithete} and the relation $\wh{e^{-i\theta}}f(\wh\Phi)\wh{e^{i\theta}} = f(\wh\Phi+\hbar)$, we find
\be
\wh{\partial_t\theta} = \mu B^\vartriangle(\wh\Phi + \hbar)
\ee
which, in particular, shows that this operator is self-adjoint, and it is diagonalized by the energy eigenstates $|j\ra$.
Also, note that it matches with the classical angular velocity, given in \eqref{thetat}, only in the limit $\hbar\to 0$.

Finally, let us discuss the interesting quantization of flux implied by the representation theory. The spatial gap between neighboring states encloses a magnetic flux
\be
\Delta\phi = 2\pi\hbar
\ee
and the smallest flux is equal to half of this gap. 
As $\phi$ labels the circular isomagnetic orbits, it serves as a proxy for ``radius''. We thus interpret this an actual quantization of space, along the radial direction, in the effective theory. 
We would like to compare these gaps with radial distances in the microscopic
theory.
If $B$ is a centrally symmetric field that
does not vary much in the region between two 
metric radii $r$ and $r+\Delta r$ corresponding to adjacent allowed 
orbits in the effective theory, then the flux gap is 
$\Delta \phi \sim \pi B((r+\Delta r)^2 - r^2)
\sim 2\pi Br\Delta r$, so 
\be
\Delta r \sim \frac{\hbar}{Br} = \frac{\ell^2}{r}
\ee
where $\ell=\sqrt{\hbar/B}$ is the characteristic magnetic length.
The radius gap $\Delta r$ is largest when
$r$ is smallest, and since the minimal flux eigenvalue 
is $\phi = \pi \hbar$, the minimal
magnetic contour radius is $\sqrt{\hbar/B}= \ell$, so the 
maximal radius gap is $\Delta r\sim \ell$. 
The gap is therefore never larger than 
the minimal resolution scale of the microscopic quantum states,
and as $r$ grows the ratio $\Delta r/\ell$ 
decreases as $\ell/r$.

\section{Match to the microscopic theory}
\label{SecMatch}

The classical effective theory is derived by considering the limit where the gyro motion is very small compared to macroscopic scales.
In that limit, the magnetic moment (which is proportional to the angular momentum about the center of the gyro orbit) is an adiabatic invariant, fixed in the effective theory. This consists of a restriction on the class of classical states described by the effective theory. We thus expect that, in the quantum microscopic theory, a similar restriction to a subset of the full Hilbert space should  lead to the effective quantum theory. 
In this section we consider proposals for identifying such sectors of the quantum microscopic theory that can be matched to the effective theory (at each $\mu$).

To develop intuition we start by studying the case of a constant magnetic field, 
noticing that the effective theory corresponds closely to the restriction of the microscopic theory to a single Landau level. This establishes the matching between the state spaces of the two theories in a gauge invariant manner, but to understand the state-to-state matching we consider two choices of gauge: 
the Landau gauge and the symmetric gauge. 
The Landau gauge is well-adapted to the case of open isomagnetic contours.
In particular, the Landau levels are represented by a free particle wavefunction in one direction times a harmonic oscillator in the other direction, so we argue that the free particle factor corresponds to the wavefunction in the effective theory, while the harmonic oscillator factor is coarse-grained over. The symmetric gauge is well-adapted to the quantization of the closed isomagnetic contours. In particular, we can understand the relationship between the flux operator and the canonical angular momentum, which are identified in the $\wt{E(2)}$ quantization.
Finally, we propose a matching prescription for the general, non-uniform magnetic field, and analyze the predicted drift velocity in a perturbative setting of a quasi-uniform magnetic field, showing that it agrees with the prediction of our quantum effective theory.

\subsection{Constant $B$ case}
\label{MatchConstantB}

Let us begin with the case of a constant magnetic field in a two-dimensional Euclidean space. 
Adopting Cartesian coordinates $x^i$, 
the Hamiltonian of the microscopic quantum theory is given by
\be
\wh H = \frac{1}{2m} \left(\wh\pi_1^2 + \wh\pi_2^2\right)
\ee
where
\be
\wh\pi_i = \wh p_i -  A_i(\wh x)
\ee
is the kinematical momentum (again we are choosing units such that $q=1$).

In both the classical and quantum effective theory, for a uniform magnetic field, all states will have the same energy given by
\be\label{HmuBrel}
H = \mu B\,.
\ee
Underlying this relation is the assumption that (almost) all the energy is stored in the fast gyroscopic motion, with negligible contribution from the slow drift motion. 
It is thus reasonable to consider that the subspace of states in the quantum microscopic theory that corresponds to the effective theory is spanned by eigenstates of energy, with eigenvalue close to $\mu B$.
The eigenstates of energy are the well-known Landau levels, labeled by a non-negative integer $n$,
\be
\wh H\Psi_n = \hbar\Omega_B \left(n + \frac{1}{2} \right) \Psi_n
\ee
which leads to the identification
\be
\text{\it Microscopic theory at Landau level } n \,\longleftrightarrow \,\,\text{\it Effective theory with } \mu = \frac{\hbar}{m} \left(n + \frac{1}{2} \right)\,.
\ee
We note that this constitutes a kind of ``minimal identification'' of the theories, in the sense that only those states of a single Landau level are compared to the effective theory. It provides a linear subspace of the Hilbert space of the microscopic theory in which {\it all} states are (expected) to behave like states in the effective theory for a given $\mu$. There are of course states containing 
small admixtures of other Landau level states that are indistinguishable from these via measurements accessible in the effective theory, but we will not attempt here to derive an inequality characterizing such distinguishability.

Next we elaborate on further aspects of the matching prescription, for uniform $B$, by particularizing to the Landau and symmetric gauges. In each gauge, we propose a state-to-state mapping from the microscopic to the effective theory, refining the correspondence between state spaces enunciated above. 

\subsubsection{Landau gauge}
\label{Landaugauge}

The Landau gauge consists of taking
\be
A = B x\, dy\,.
\ee
This gauge maintains translation-invariance along the $y$-direction, and in this manner it is well-adapted to the quantization of open isomagnetic contours.\footnote{One could also take $A = -B y dx$, which maintains translation invariance along the $x$-direction.}
In this subsection, we shall study the state-to-state mapping from the quantum microscopic theory to our effective theory.

The wavefunctions of the lowest Landau level take the form~\cite{Landau:1991wop}
\be\label{LLLLandaugauge}
\Psi_{0,k}(x,y) = \frac{1}{\pi^{3/4}\sqrt{2l_B}}e^{iky} e^{-(x - l_B^2 k)^2/2l_B^2}
\ee
where $\hbar k$ is the (canonical) momentum in the $y$-direction, $\wh p_y \Psi_{0,k} = \hbar k \Psi_{0,k}$. These wavefunctions consist of plane waves along the $y$-direction times the ground state of a harmonic oscillator in the $x$-direction centered at $l_B^2 k$.
The lowering  operator (with the somewhat unconventional
phase of $-i$ times the standard one)
takes the form 
\be\label{a-Landau}
a = -i \frac{\ell_B}{\sqrt{2}} \left( \frac{\partial}{\partial x} + i \frac{\partial}{\partial y} + \frac{x}{\ell_B^2} \right)
\ee
and higher Landau levels are given by
\be\label{Landaugaugebasis}
\Psi_{n,k} = \frac{1}{\sqrt{n!}} (a^\dag)^n \Psi_{0,k}
\ee
which have energy eigenvalue $\hbar\Omega_B \left(n + \frac{1}{2} \right)$ and
are normalized as $\la\Psi_{n',k'}|\Psi_{n,k}\ra = \delta_{nn'}\delta(k-k')$.

As $[p_y, a^\dag] = 0$, the higher Landau levels $\Psi_{n,k}$ are also eigenstates of $\wh p_y$, with eigenvalues $\hbar k$. They are
plane waves along the $y$-direction times an excited state of the harmonic oscillator in the $x$-direction. 
We can define proper smearings of 
these states,
\be\label{Psismearing}
\Psi_{n,f} := \int\!dk\, f(k) \Psi_{n,k}
\ee
whose norms are $|\!|\Psi_{n,f}|\!|^2 = \int\!dk\,|f(k)|^2$.
The dispersion of $x$ in a normalized state $\Psi_{n,f}$ is, using formula \eqref{matrixLandaugauge},
\be
\Delta_{n,f}x = \sqrt{\la x^2 \ra_{n,f} - \la x\ra_{n,f}^2} = \ell_B \sqrt{n + \frac{1}{2} + \int\!dk\, |f(k)|^2\ell_B^2 k^2  - \left(\int\!dk\, |f(k)|^2\ell_B k \right)^2} \,.
\ee
In a state highly concentrated near a particular value of $k$, $f(k) = \sqrt{\delta(k-k_0)}$, the wavefunction will be completely delocalized in the $y$-direction, and spread in $x$ as
\be
\Delta_nx = \ell_B \sqrt{n + \frac{1}{2}}
\ee
independently of $k_0$.\footnote{This is the minimal possible spread at level $n$, i.e., $\Delta_{n,f}x > \Delta_nx$ for all $f$ with unit $L^2$-norm.}
We can interpret those states as a superposition of gyro orbits uniformly distributed along the $y$-direction (so as to ensure they are translation invariant along $y$). 
Note that the ``size'' of the gyro orbits are of the same order as the classical radius of gyration --- in fact, using the matching prescription to relate $\mu$ with $n$, we have
\be
\mu = \frac{\hbar}{m}\left( n + \frac{1}{2} \right) \sim \frac{1}{2}\Omega_B \left(\sqrt{2}\Delta_n x\right)^2
\ee
so, comparing with the classical relation \eqref{mudef}, we identify 
\be\label{rhonestimation}
\rho_n := \sqrt{2}\Delta_n x = \ell_B \sqrt{2n+1}
\ee
as the approximate radius of gyration. 

The quantization of the effective theory, for the case of constant $B$, was developed in Sec.~\ref{constantBeff}. That quantization is in some regards close to the quantization of the open isomagnetic contours case, based on the Heisenberg algebra, except that 
among the area-adapted coordinates
there is no ``preferred'' choice of coordinates (such as $\chi$ and $\log B$). Due to the absence of a metric structure at the effective level, the comparison with the microscopic theory is 
therefore not completely clear. 
For an area-adapted pair of coordinates $x$ and $y$, 
in the $y$-representation of the effective theory we have
\ba
\wh y \psi(y) &= y \psi(y) \\
B\wh x \psi(y) &= -i\hbar\frac{\partial\psi}{\partial y}(y)
\ea
so  
that $B\wh x$ acts as a (canonical) linear momentum, generating translations in $y$. In particular, $e^{iky}$ is an eigenstate of $B\wh x$ with eigenvalue $\hbar k$. 
If $x$ and $y$ are moreover Cartesian with respect
to the Euclidean metric of the microscopic theory, this 
suggests a state-to-state mapping between the theories,
\be\label{statemapping}
\Psi_{n,f}(x,y) \,\,\longleftrightarrow\,\, \psi_f(y) := \frac{1}{\sqrt{2\pi}} \int\!dk\, f(k) e^{iky}\,,
\ee
where $\psi_f$ belongs to the effective theory with $\mu = \frac{\hbar}{m}\left(n + \frac{1}{2}\right)$. This map is clearly isometric with respect to each theory's Hilbert space inner product. If on the other hand $x$ and $y$ are not Cartesian,
the particular state mapping \eqref{statemapping} would not apply exactly, but the
effective theory would differ from that defined by quantizing the metric-adapted coordinates only by higher order terms in $\hbar$.

\subsubsection{Symmetric gauge}
\label{secsymgauge}

Another popular gauge choice is the symmetric gauge, which realizes rotation symmetry 
(about one preferred point)
exactly.
The vector potential is
\be
A = \frac{B}{2}\left(-x_2 dx_1 + x_1 dx_2\right)\,.
\ee
The lowest Landau level, $n=0$, is spanned by the (normalized) wavefunctions~\cite{Landau:1991wop}
\be\label{LLLwavef}
\Psi_{0,j}(z,\bar z) = \frac{1}{\sqrt{\pi j! (2l_B^2)^{j+1}}} z^j e^{-|z|^2/4l_B^2}
\ee
where $j \in \{0\}\cup\bb N$ and $l_B = \sqrt{\hbar/B}$. These states are all annihilated by the lowering operator
\be\label{c-symmetric}
c = -i \frac{l_B}{\sqrt{2}} \left( 2 \frac{\partial}{\partial \bar z} + \frac{z}{2l_B^2} \right)
\ee
and higher Landau levels are obtained by acting with $c^\dag$,
\be
\Psi_{n,j} = \frac{1}{\sqrt{n!}} (c^\dag)^n \Psi_{0,j}\,.
\ee
These states are eigenvectors of the Hamiltonian $\wh H$ and the canonical angular momentum $\wh J$ $(= \wh x_1 \wh p_2 - \wh x_2 \wh p_1)$, with eigenvalues 
\ba
\wh H \Psi_{n,j} &= \hbar \Omega_B \left(n + \frac{1}{2}\right) \Psi_{n,j} \label{Heigenmicro} \\
\wh J \Psi_{n,j} &= \hbar(j-n) \Psi_{n,j} \label{Jeigenmicro}
\ea
respectively. 

It is worth making a brief comment on the ``size'' of the states. A straightforward computation using \eqref{matrixsymmetricgauge0} reveals that 
\be\label{r2micro}
\la \Psi_{n,j'}| \wh r^2 |\Psi_{n,j}\ra = 2l_B^2(n+j+1) \delta_{j'j}
\ee
which may lead one to think that, at fixed level $n$, states with higher $j$ are gyrating on a larger orbit. However, the situation is actually analogous to what happened in the Landau gauge: all states at level $n$ are gyrating with the same orbit size, but superposed uniformly along circles (with radii growing with $j$) so as to ensure the rotation symmetry of the state. Thus, it is only really the size of $\Psi_{n,0}$ that yields a good estimation for the size of the gyro orbits at level $n$. Namely, $\rho_n \sim \sqrt{\la \wh r^2 \ra_{n,0}} = l_B \sqrt{2(n+1)}$, in (approximate) agreement with the estimation in \eqref{rhonestimation}.

To discern the state-to-state matching for the case of 
closed isomagnetic contours, the key is to 
understand why $\Phi$ behaves as the generator of ``rotations'' in the effective theory, whereas it is the canonical angular momentum $J$ that plays this role in the macroscopic theory.\footnote{While the $\wt{E(2)}$ quantization is not the most natural one for the constant $B$ case, nothing prevents one from applying it, based on an arbitrary choice of origin and $S^1$ foliation.}
This can be understood both classically and quantum mechanically.
Classically, if we adopt $\Phi$ and $\theta$ 
as ``magnetic coordinates'' for space, and adopt the gauge with
    the only nonzero component of the vector potential being $A_\theta = \Phi$,
the action \eqref{microS} for the microscopic theory 
(with $q=1$) takes the form 
\be\label{microS2}
S = \int\!dt \left[ \frac{1}{2}m\left(g_{\Phi\Phi}\dot\Phi^2 +2g_{\Phi\theta}\dot\Phi\dot\theta + g_{\theta\theta}\dot\theta^2\right) + \Phi\dot\theta\right].
\ee
The momentum conjugate to $\theta$ is 
\be
p_\theta = \Phi + mg_{\Phi\theta}\dot\Phi + mg_{\theta\theta}\dot\theta
\ee
In the effective theory, the kinetic terms are dropped, so the action 
reduces to \eqref{macroS} and the momentum conjugate to $\theta$ 
is simply $p_\theta =\Phi$. 
This is how the flux and the generator of $\theta$
translations become identified. In the special case of a constant $B$
this becomes precisely the canonical angular momentum if concentric circles
are chosen for the $\Phi$ contours.

At the quantum level, we see it as follows.
As stated in \eqref{Jeigenmicro}, $\Psi_{n,j}$ are eigenvectors of the canonical angular momentum with eigenvalues $\hbar(j-n)$. 
Assuming that the contours are taken to be circular with respect to the metric in the microscopic theory,
the ($2\pi$-rescaled) flux  operator will be given by
\be\label{Phimicro}
\wh\Phi = \frac{1}{2\pi}B\pi \wh r^2
\ee
From \eqref{r2micro}, we see that $\wh\Phi$ is diagonal within the level $n$, with eigenvalues $\hbar (n+j+1)$. Thus, restricted to the level $n$, we have
\be
\big(\wh\Phi - \wh J\,\big) \big|_n = \hbar (2n +1) \sim B\rho_n^2 
\ee
The quantity on the right-hand side roughly corresponds to ($1/\pi$ times) the flux enclosed by the gyro orbit. This quantity is negligible in the effective theory, compared to the flux enclosed by macroscopic circles. In other words, the effective theory is capable, through coarse-grained measurements, of detecting only $j$, while $n$ is hidden inside the effective parameter $\mu$. 
We emphasize that, in the microscopic theory, $\wh J$ and $\wh \Phi$ are entirely different operators, with the former having discrete spectrum $\hbar \bb Z$ 
and the latter a continuous spectrum $\hbar (\bb R^+ \cup \{0\})$. However, $\wh r^2$ does not act within each Landau level, since $[\wh r^2, \wh H] \ne 0$, so it does not directly correspond to an operator in the effective theory. Instead, it is $\wh r^2|_n = 2\wh \Phi|_n/B$ that  passes through the matching. However, as we saw above, $\wh\Phi|_n$ is equal to $\wh J|_n$ up to an additive ($n$-dependent) constant, so it does have a discrete spectrum $\hbar(j+n+1)$, suggesting an explanation for the spurious quantization of radii there.

The analysis of the previous paragraph revealed that, in the microscopic theory, the spectrum of $\wh\Phi|_n$ is given by $\hbar\{n+1, n+2, \ldots\}$, but in the effective theory we found (Sec.~\ref{QuantizationclosedBlines}) that the spectrum of the effective operator $\wh\Phi$ is $\hbar\{\sfrac{1}{2},\sfrac{3}{2},\cdots\}$.
Thus, before identifying the precise state-to-state mapping from the microscopic to the effective theory, it is worth refining slightly this analysis to address this spectral shift.
In fact, this relative shift by $\hbar (n+\sfrac{1}{2})$ can be explained by noticing that, as defined in \eqref{Phimicro}, $\Phi$ measures the flux through the circle 
defined by the microscopic radial coordinate $r$ of the particle, while in the effective theory $\Phi$ measures the flux through the circle 
defined by the radial coordinate of the center of the gyro orbit. 
In the Heisenberg picture, the position operator, expressed as a complex variable $\wh r = \wh x + i \wh y$, evolves as
\be
\wh r(t) = \wh c + \frac{i}{m\Omega_0}\wh \pi e^{-i\Omega_0 t}
\ee
where $\wh c = \wh r - \frac{i}{m\Omega_0}\wh \pi$. We can therefore identify $\wh c = \wh c_x + i \wh c_y$ as the center of the gyro orbit, so the flux enclosed by it is given by
\be
\wh\Phi' := \frac{1}{2\pi} B\pi\! \left( \wh c_x^2 + \wh c_y^2\right) 
= \frac{\wh H}{\Omega_0} + \wh J\,.
\ee
This operator acts within each Landau level and, according to \eqref{Heigenmicro} and \eqref{Jeigenmicro}, its restriction to the level $n$ gives
\be\label{Phi'spec}
\wh\Phi'\Psi_{n,j} := \hbar \left( j + \frac{1}{2} \right)\Psi_{n,j}\,.
\ee
The spectrum of $\wh\Phi'$ is thus $\hbar\{\sfrac{1}{2}, \sfrac{3}{2}, \ldots\}$,\footnote{The intuition for this difference in the spectra of $\wh\Phi$ and $\wh\Phi'$ is that $\pi |r|^2$ measures the area of the ``blurred'' region that the particle encloses, which is larger than the region enclosed only by its center, with area $\pi |c|^2$. In fact, this can be understood from a semi-classical perspective by noticing that the time-averaging over the period of the gyro orbit would give $\text{T.A.}[\frac{B}{2}|r|^2] = \frac{B}{2}(|c|^2 + \rho^2)$, and quantum mechanically $\frac{B}{2}\rho_n^2 \sim \hbar (n + \sfrac{1}{2})$.}
which matches exactly with the spectrum of $\wh\Phi$ in the effective theory, given in \eqref{specPhieff}.
Accordingly, this suggests a state-to-state mapping where eigenvectors of the flux operator through the ``guiding center radius''
in the microscopic theory are matched to eigenvectors of the flux operator in the effective theory,
\be\label{closedlinesStS}
\Psi_{n,j} \,\,\longleftrightarrow\,\, |j\ra
\ee
where $|j\ra$ belongs to the Hilbert space of the quantum effective theory, described in Sec.~\ref{QuantizationclosedBlines}, with $\mu = \frac{\hbar}{m}\left(n + \frac{1}{2}\right)$.

\subsection{Non-uniform $B$ case}
\label{nonuniformBmatch}

The most interesting case, to compare with our effective theory, is that of a non-uniform magnetic field. 
The aim is to identify the states and observables of the effective theory
with a suitable space of states and coarse-grained observables of the microscopic theory, and to characterize the
fidelity of the effective description for these states. 

To identify the subspace of the microscopic theory corresponding to the state space 
of the effective theory, we employ a similar logic as that used in Sec.~\ref{MatchConstantB}, 
based on the energy relation $H=\mu B$.
Define the operator
\be
\wh W_\mu := \wh H - \mu \wh B
\ee
where $\wh B := B(\wh x, \wh y)$ and $\mu$ is a real parameter (with dimensions of magnetic moment).
A first thought would be to identify the kernel of $\wh W_\mu$ as the space of 
microscopic states associated with the effective theory at magnetic moment $\mu$, 
thereby selecting states that satisfy the effective theory energy relation, $\wh H = \mu \wh B$.
However, the kernel may be trivial or contain too few elements to match to the effective theory.
In view of this, we should allow for eigenvectors of $\wh W_\mu$ with eigenvalues in some 
\emph{neighborhood of zero}, according to the following matching prescription
\be\label{matchnonuniformB}
\text{\it Microscopic theory restricted to } \ca V_\mu := \text{Eigen}\left(W_\mu; \approx 0 \right)\,\longleftrightarrow \,\,
\text{\it Effective theory at } \mu
\ee
where it is \emph{assumed} that the spectrum of $\wh W_\mu$ has a cluster (or band) of eigenvalues near $0$, 
separated by a finite gap to other eigenvalues, and 
$\text{Eigen}\left(W_\mu; \approx 0 \right)$ denotes the span of all 
eigenvectors associated with this cluster 
near $0$.\footnote{In many instances, in this section, we use the terms 
``eigenvalues'' and ``eigenstates'' in the loose sense, i.e., 
even when referring to continuous portions of the spectrum 
(whose points are not technically eigenvalues of any 
eigenvector in the Hilbert space). Rigorously, 
we can identify $\ca V_\mu$ with the image of the 
{\it Riesz projector}, $P_\mu =\frac{1}{2\pi i}\oint_{\ca C}\!dz\,(z - W_\mu)^{-1}$, 
where $\ca C$ is a contour in $\bb C \backslash \text{Spec}(W_\mu)$ 
enclosing only the spectral cluster/band near $0$.\label{Rieszfn}}
When $B$ is uniform this proposal is clearly 
equivalent to the one employed in Sec.~\ref{MatchConstantB}, provided that $\mu = \mu_n :=\frac{\hbar}{m}(n + \frac{1}{2})$, for integer $n$. 
For a non-uniform $B$, we expect that the values of $\mu$ must be similarly restricted
in order for the spectral assumption to hold.
That this spectral assumption is plausible, 
under the conditions of the guiding center approximation, can be seen as follows.
For an energy eigenstate sufficiently localized around 
a point $x$, we expect its energy to be approximately given by 
$E_n(x) \approx \mu_n B(x)$.
Then, also in this local approximation, $\wh W_\mu$ would have an 
eigenvalue $w_n(x) \approx \mu_n B(x)- \mu B(x)$.
If $\mu = \mu_n$, the eigenvalues $w_n(x)$ associated with the ``Landau band'' at level $n$ will be nearly zero, and separated 
from the eigenvalues $w_{n\pm 1}(x)$ of the neighboring bands by roughly 
$\frac{\hbar B(x)}{m} \ge \hbar \Omega_\text{min}$, where $\Omega_\text{min} := B_\text{min}/m$ (with $B_\text{min}>0$ the minimum value of $B$).
Finally, if the matching proposal is to be well-defined, it must be that $\ca V_\mu$ is isomorphic to the Hilbert space of the effective theory (which is reasonable to expect based on the uniform $B$ case analysis).
\emph{For notational simplicity in the following, we omit the label $\mu$ in $W_\mu$ 
and $\ca V_\mu$, and also the ``hats'' on operators}.

A central feature of the guiding center approximation is that $\mu$ is an adiabatic invariant. It is thus natural to consider that, if the criterion above properly implements the guiding center approximation in the quantum microscopic theory, the subspace $\ca V(t):= \text{Eigen}\left(W(t); \approx 0 \right)$ should be approximately conserved in time, where 
\be
W(t) := e^{iHt/\hbar}W e^{-iHt/\hbar} = H -\mu B(t)
\ee
is the Heisenberg picture evolution of $W$, and $B(t) := e^{iHt/\hbar}B(x,y) e^{-iHt/\hbar}$. 
Note that since $W(t)$ is a unitary evolution of $W$, its spectrum is time independent; in particular, for all $t$, $\ca V(t)$ remains associated to the same cluster of eigenvalues near zero, so its (approximate) conservation  would imply in the Schr\"odinger picture that the state remains close 
to a zero eigenvalue eigenvector of $W$.
Such a conservation of $\ca V(t)$ is not guaranteed, but should rather be ensured by further conditions which we describe below.
These conditions can be viewed as the quantum counterparts of the classical guiding center conditions derived in App.~\ref{AppClassicalGCconditions}.
The fact that such conditions exist, permitting us to establish consistency of the matching prescription under time evolution given natural hypotheses, provides further evidence for the soundness of this proposal.

Let $\ca H$ be the whole Hilbert space of the microscopic theory, and $P(t)$ be the projector onto $\ca V(t) \subset \ca H$. It satisfies a Heisenberg-type evolution law
\be
P(t) = e^{iHt/\hbar}P(0) e^{-iHt/\hbar} \,.
\ee
The statement that $\ca V(t)$ is approximately conserved in time can be translated into\footnote{A bound for an operator $\scr O$ is a non-negative number $b$ such that $|\scr O \psi| \le b |\psi|$ for all $\psi$ in the Hilbert space; if no such $b$ exists the operator is unbounded. The norm of an operator $\scr O$, denoted by $|\!|\scr O|\!|$, is defined as the infimum of its bounds, or $\infty$ if it is unbounded.} 
\be
|\!| P(t) - P(0) |\!| \ll 1\,.
\ee 
It may be difficult to verify this condition exactly for a general magnetic field, but we can have some control over it by applying a strong version of the adiabatic theorem (Theorem~\ref{TheoremAdiabatic}), as we explain in App.~\ref{AppAdiabIneq}. In that appendix, we describe the precise necessary assumptions and derive a rigorous, exact formula for a bound on $|\!| P(t) - P(0) |\!|$. In the present section, we content ourselves with stating the result schematically. 

The main assumption is that the part of the spectrum of $W(t)$ associated with $P(t)$ is separated by a gap from the rest.
Since the spectrum of $W(t)$ is time independent, the gap is the same for all $t$,
so let
\be
\Delta := \text{Dist}\left[\text{Spec}\big( W|_{\ca V} \big), \text{Spec}\big( W|_{{\ca V}^\perp} \big)\right] > 0
\ee
where ${\ca V}^\perp$ is the orthogonal complement of $\ca V$. We also assume that $B$ is bounded and preserves the domain of $H$, i.e., $B\text{Dom}(H)\subset\text{Dom}(H)$. (A sufficient condition is that $B$ and all its spatial derivatives up to second order are bounded.)
It then follows (Corollary~\ref{CorollaryAdiabatic})
that there exist constants $\varepsilon_1$ and $\varepsilon_2$ such that
\be\label{AdiabIneq}
|\!| P(t) - P(0) |\!| \le \varepsilon_1 + \varepsilon_2 t\,.
\ee 
The constant $\varepsilon_1$ is dominant for small times, expressing the ``periodic error'', 
and it is itself bounded (Theorem~\ref{theoremvareps1bound}) by
\be\label{vareps1bound}
\varepsilon_1 \le 2\varepsilon := \frac{4\mu}{\Delta^2} |\!| Q[B,H] P|\!|
\ee
where $P = P(0)$ is the projector to $\ca V = P\ca H$, and $Q = 1 - P$ is the complementary projector.
The term associated with $\varepsilon_2$ grows linearly in time and represents the ``secular error'', that is, deviations from exact adiabatic conservation that accumulate over time.
We also derive an exact bound for $\varepsilon_2$ (Theorem~\ref{theoremvareps2bound}), and by a subsequent rough estimation described in App.~\ref{AppAdiabIneq} show that
\be
\varepsilon_2 \lesssim 100 \varepsilon^2 \Omega
\ee
where $\Omega \sim B/m$. 
As long as $\Omega t \ll 0.01/\varepsilon^2$, the secular error remains small. Thus, if $\varepsilon$ is sufficiently small, the adiabatic conservation of $P(t)$ will hold over macroscopically large times.

To roughly estimate $\varepsilon$, defined in \eqref{vareps1bound}, let us consider a quasi-uniform $B = B(x)$ which behaves homogeneously under differentiation (i.e., if $B$ varies over a length scale $L$, then $\partial^sB/\partial x^s \sim B/L^s$).
Roughly, say that $\mu \sim \hbar n/m$ and $\Delta =: \hbar \Omega\,\, (\sim \hbar B/m)$. 
Say also that the eigenstates of $W$ are approximately $\Psi_{n,f}$, the Landau-gauge energy eigenstates, so $P \sim P_n$, where $P_n$ is the projector to the Landau level $n$. 
The only non-vanishing matrix elements of $Q_n[ B,H] P_n$
are between states with $n' = n +s \ne n$, 
\ba
\la\Psi_{n +s,f'}|Q_n[B,H] P_n|\Psi_{n,f}\ra &\sim \la\Psi_{n +s,f'}|[B,H]|\Psi_{n,f}\ra \no
&\sim -s\hbar\Omega \la\Psi_{n +s,f'}|B |\Psi_{n,f}\ra \no
&\sim -i^{-s}s\hbar\Omega \frac{\sqrt{n!(n+s)!}}{\text{min}(n!,(n+s)!)s!} \la\Psi_{0,f'}| \left(\frac{\ell}{\sqrt{2}}\frac{\partial}{\partial x}\right)^{|s|} B(x) |\Psi_{0,f}\ra
\ea
where we used formula \eqref{matrixLandaugauge} and kept only the term with the lowest order derivative acting on $B(x)$, as higher derivatives will contain extra factors of $\ell/L$ which are presumably small. 
In fact, the dominant contribution will come from $s=\pm1$, which gives 
\be
\la\Psi_{n +1,f'}|Q_n[B,H] P_n|\Psi_{n,f}\ra \sim i\hbar\Omega \sqrt{n+1}\la\Psi_{0,f'}| \frac{\ell}{\sqrt{2}}\frac{\partial B}{\partial x} |\Psi_{0,f}\ra
\ee
so, roughly,
\be
|\!|Q_n [ B,H] P_n |\!| \sim \frac{1}{2}\hbar\Omega \rho_n \Big|\!\Big| \frac{\partial B}{\partial x} \Big|\!\Big|
\ee
where $\rho_n = \sqrt{2(n+1)}\,\ell$ is of order of the size of the gyro orbit. Then we get
\be
\varepsilon \sim n\rho_n \Big|\!\Big| \frac{\partial \log B}{\partial x} \Big|\!\Big|\,.
\ee
The condition $\rho_n\ll L/n$, that the size of the gyro orbit be much smaller than the length scale over which $B$ varies divided by $n$, is thus sufficient to guarantee the 
adiabatic conservation of $P(t)$, at least up to times where the secular error (the $\varepsilon_2 t$ term) remains small. 

The factor of $n$ in the denominator renders this condition stronger than that required for accuracy of the guiding approximation in the classical case.
The reason for this can be understood as follows. The matching proposal \eqref{matchnonuniformB} entails a sharp restriction to a particular subspace of states, which can only be satisfied if transitions to neighboring orthogonal states are strongly suppressed. In the adiabatic approximation, the transition rate $|\!|\dot P(t)|\!|$ scales with how fast $W(t)$ is changing in time, compared to the spectral gap, which as shown in App.~\ref{AppAdiabIneq} (following from Lemma~\ref{lemmaQdotPPsylvester} and formula \eqref{QdotWPQBHP}) is proportional to $\mu/\Delta \sim n/\Omega$.
This explains the $n$-dependence of the more stringent quantum condition.
Note that such a condition is only relevant when resolving states with a sharply defined value of $n$.
In the classical regime, where $n$ is large, states will tend to be like coherent states, which are superpositions of many levels, with a spread of order $\sqrt{n}$. In that case, the matching proposal could be modified to include such states spread across many eigensubspaces of $W$, centered around near-zero eigenvalues, so that the suppression of transitions to neighboring levels would not need to be enforced as strongly. 

We need also a prescription for matching operators in the microscopic theory to operators in the effective theory.
Note that the guiding center approximation relies on coarse graining not only over small length scales (of the order of the gyro radius), but also over fast time scales (of the order of the cyclotron period). This leads us to the following additional ingredient for the matching proposal. Let $\scr O(t)$ be a Heisenberg picture operator in the microscopic theory, and define its ``time coarse graining'' by
\be\label{timeavgprescription}
\ol{\scr O}(t) := \int\! d\tau K(\tau) \scr O(t-\tau) \,,
\ee
where $K$ is a real-valued convolution kernel whose purpose is to suppress the fast time scales of $\scr O(t)$.\footnote{Assuming that $K$ is real ensures that the time coarse graining preserves self-adjointness. One could further assume that it is even to make the $K$-weighted average centered at $t$, without introducing a directional bias.} In other words, $K$ is a low pass filter and, in QFT terms, this time coarse graining amounts to ``integrating out the high frequencies''. 
As the Fourier transform gives\footnote{For possibly unbounded operator-valued functions, it is simpler to interpret the relation below in the ``weak sense'', i.e., as a relation between matrix elements.}
\be
\ca F[\ol{\scr O}](\omega) = \ca F[K](\omega) \ca F[\scr O](\omega) \,,
\ee
we could take $K$ such that $\ca F[K](\omega)$ vanishes for high frequencies and equals $1$ for low frequencies, assuming that there is a clear separation of time scales between the fast microscopic gyro motion ($\Omega\sim B/m$) and the slower time scales resolved in the effective regime. Note that since $\ca F[K](0) = 1$, $K$ satisfies the normalization condition $\int\!d\tau K(\tau) = 1$, which implies that the time coarse graining of a time-independent operator is equal to the operator itself.
This proposal is obviously not fully precise in general, due to the arbitrary choice of $K$, but we will see in the perturbation analysis of the next section that relevant, effective results can be insensitive to the particular choice of $K$.
Finally, as noticed in Sec.~\ref{secsymgauge}, it is possible that $\ol{\scr O}(t)$ does not act within the subspace $\ca V$. Accordingly, one must project the time-coarse grained operator onto $\ca V = P\ca H$,
\be
\scr O(t) \mapsto \ol{\scr O}(t) \mapsto P\ol{\scr O}(t) P
\ee
and the resulting operator on $\ca V$ is then to be matched with a corresponding operator in the effective theory.

\subsection{First order in field gradient perturbative analysis}
\label{FirstOrderPertAnalysis}

In a generic magnetic field
the microscopic dynamics is not analytically tractable. Nevertheless, the 
effective theory is only expected to be effective when the magnetic gradient length scale
is long compared to the radius of the gyro orbits, so we shall content ourselves to examine
the precise correspondence in the neighborhood of a generic point, where the magnetic field
can be approximated as a linear function of one of the Cartesian coordinates.
As a check of consistency, we will compute the drift velocity for microscopic states in the subspace $\ca V_\mu$, and confirm that they match the corresponding result for the effective theory.\footnote{The quantum drift velocity has also been studied perturbatively in \cite{chan2016quantum,Chan:2017dex}. They derive a Heisenberg picture formula for the $\text{grad}(B)$ and $E\times B$ drift velocities, and compare with a numerical evolution of the Schr\"odinger equation.
Our aim, in contrast, is to identify a correspondence between a suitable set of quantum states of the microscopic 
theory with states of the effective theory, and 
hence our perturbative analysis is organized differently.}

Let the magnetic field depend linearly on the coordinate $x$,
\be\label{BlinearL}
B = B_0 \left(1 +\frac{x}{L} \right)
\ee
where $L \gg \ell$ ($:= l_{B_0}$).
We can use perturbation theory to study the kernel of $W = H - \mu B$. The perturbation theory is formally in $L^{-1}$, but as this is a dimensionful parameter we should understand that the perturbation works only for states sufficiently concentrated in a region $|x| \ll L$. 
From the classical intuition, we expect that a state will evolve in time by moving in the $y$-direction, thus allowing us to track it for a macroscopic time scale without leaving the perturbative regime.
Taking the magnetic potential to be
\be
A = B_0\left(x +\frac{x^2}{2L}\right) dy
\ee
the Hamiltonian becomes, to first order in $L^{-1}$,
\be
H = H^{(0)} + L^{-1} H^{(1)}
\ee
with
\ba
H^{(0)} &= \frac{1}{2m} \left(p_x^2 + (p_y - B_0 x)^2 \right) \\
H^{(1)} &= - \frac{B_0}{2m} x^2 (p_y - B_0 x)\,.
\ea
Similarly, $W = H - \mu B$ is decomposed as 
\be
W = W^{(0)} + L^{-1}W^{(1)}
\ee
with $W^{(0)} = H^{(0)} - \mu B_0$ and $W^{(1)} = H^{(1)} - \mu B_0 x$.

At zeroth order in $L^{-1}$, the eigenstates of $W^{(0)}$ coincide with the Landau levels. Moreover, since the perturbation commutes with $p_y$, it is convenient to use the Landau-gauge basis, $\{\Psi_{n,k}\}$, defined in \eqref{Landaugaugebasis}.
The exact eigenstates of $W$ can thus be taken to be $p_y$ eigenstates $\propto e^{iky}$,
and the perturbation will only mix states with the same value of 
$k$.\footnote{We could factor this $y$-dependence out of the states, but to maintain a unified notation we will subsume that factor into the notation for the states, at the cost of including $\delta(k'-k)$ in many formulas.}
To first order in $L^{-1}$, the degeneracy of the level $n$ may be lifted according to the eigenvalues of $P_nW^{(1)}P_n$, where $P_n$ is the projector to that level.
Using formula \eqref{matrixLandaugauge}
for computing matrix elements of functions of $x$, 
we find
\ba
\la \Psi_{n,k'}| W^{(1)} |\Psi_{n,k}\ra &= -B_0 \la \Psi_{n,k'}| \left( \frac{\hbar}{2m} x^2 \left(k - \frac{x}{\ell^2}\right) + \mu x \right)|\Psi_{n,k}\ra \no
&= B_0 k \ell^2 \left[ \frac{\hbar}{m} \left( n + \frac{1}{2} \right) - \mu \right] \delta(k-k')\,.
\ea
As expected from the commutativity of the perturbation with $p_y$, $\Psi_{n,k}$ are eigenvectors of $P_nW^{(1)}P_n$. 
Consequently, up to first order in $L^{-1}$, the corrected eigenvectors of $W$ are expressed as
\be
\Upsilon_{n,k} = \Psi_{n,k} + L^{-1} \Upsilon_{n,k}^{(1)}
\ee
with eigenvalues
\be\label{eigenvaluemunonuniB}
W \Upsilon_{n,k} = B_0 \left( 1 + \frac{k \ell^2}{L} \right) \left[ \frac{\hbar}{m} \left( n + \frac{1}{2} \right) - \mu \right] \Upsilon_{n,k}
\ee
and\footnote{Technically, in degenerate perturbation theory there could be also corrections to $\Psi^{(1)}$ coming from higher order perturbations, which would mix states within each degenerate subspace. In the present case, however, we do not need to worry about this aspect since the perturbation commutes with $p_y$, so no mixing within a level $n$ (among different $k$'s) can occur.} 
\be\label{Psi0k(1)}
\Upsilon_{n,k}^{(1)} = -\sum_{n'\ne n}\int\!dk'\, \Psi_{n',k'} \,\frac{m}{B_0\hbar (n' - n)} \la \Psi_{n',k'}|W^{(1)} |\Psi_{n,k}\ra \,.
\ee
We define also proper smearings of these states,
\be\label{Upsilonsmearing}
\Upsilon_{n,f} := \int\!dk\, f(k) \Upsilon_{n,k}
\ee
whose norms are $|\!|\Upsilon_{n,f}|\!|^2 = \int\!dk\,|f(k)|^2$.

Suppose there is an $n \in \bb N \cup \{0\}$ such that $\mu = \tfrac{\hbar}{m} \left( n + \tfrac{1}{2} \right)$. In this case, the spectrum of $W$ will contain an infinitely-degenerate zero value, separated by a gap of order $\hbar\Omega_0$ to the nearest clusters (associated with $n\pm 1$).\footnote{Note that formula \eqref{eigenvaluemunonuniB} should be considered valid only within the perturbative regime, which presumes $n\ell^2k/L \ll 1$ (see \eqref{pertcondalpha2} below). 
For large $k$ ($\ell^2k \sim L$) one would even go beyond the regime where the linear $B$ field would permit the guiding center approximation (as such $B$ is unbounded from below). Thus, only with this restriction on $k$ we can say that the clusters in the spectrum are separated by $\sim \hbar\Omega_0$.}
The matching prescription then states
\be\label{matchingBpert}
\text{\it Microscopic theory restricted to states $\Upsilon_{n,f}$} \,\longleftrightarrow \,\,\text{\it Effective theory at } \mu = \frac{\hbar}{m}\left(n+\frac{1}{2}\right)
\,.
\ee
Note that this example illustrates how an effective theory at arbitrary $\mu$ cannot be sharply matched to the microscopic theory: if $\frac{m\mu}{\hbar} + \frac{1}{2}$ were not a natural number, one would have to take the $n$ that best approximates it in order to ``best match'' the two theories.
Moreover, although the degeneracy of the zero eigenvalue of 
$W_\mu$ is preserved to first order in the particular magnetic field \eqref{BlinearL}, 
it is possibly lifted at second order, and is presumably lifted even at first order
in more generic fields. The spectrum of $W$ near zero is itself therefore generically 
a spread-out cluster (as alluded to in Sec.~\ref{nonuniformBmatch}) for all $\mu$.

Using again formula \eqref{matrixLandaugauge}, we find that the matrix elements of $W^{(1)}$, with $n' = n + s\ne n$, are
\be
\la\Psi_{n+s,k'}|W^{(1)}|\Psi_{n,k}\ra = -\frac{\hbar B_0\ell}{2m} \alpha_{n,s}(k\ell) \delta(k-k') 
\ee
with
\ba
\alpha_{n,1}(k\ell) &= \frac{i}{\sqrt{2}} \sqrt{n+1} \left(\ell^2k^2 - \frac{n}{2} + \frac{1}{2}\right) \no
\alpha_{n,2}(k\ell) &=  \sqrt{(n+2)(n+1)}\, \ell k \no
\alpha_{n,3}(k\ell) &= -\frac{i}{2\sqrt{2}} \sqrt{(n+3)(n+2)(n+1)}
\label{alphanscoef}
\ea
and $\alpha_{n,-s} = \alpha_{n-s,s}^*$, and all other $\alpha_{n,s}$ ($s \ne \pm 1, \pm 2, \pm 3$) vanish.
Thus, the corrected state is
\be\label{Upsilonalpha}
\Upsilon_{n,k} = \Psi_{n,k} + \frac{\ell}{2L} \sum_{s\ne 0} \frac{\alpha_{n,s}(k\ell)}{s} \Psi_{n+s,k}\,.
\ee
The perturbation theory is expected to be accurate, up to first order, only if the corrections to the states are small. Considering the magnitudes of the $\alpha_{n,2}$ and $\alpha_{n,3}$ coefficients, this requires that
\ba
\frac{\ell}{2L}n\ell k &\ll 1 \label{pertcondalpha2} \\
\frac{\ell}{2L} \left(\frac{n}{2}\right)^{3/2} &\ll 1
\ea
in the limit of large $n$. (In this limit, the condition coming from $\alpha_{n,1}$ is implied by the ones above.)

Let us now study the drift velocity in this microscopic theory. 
At first sight, $v_y = \pi_y/m$ is the natural operator in the microscopic theory corresponding to the drift velocity in the effective theory. However, its Heisenberg picture version oscillates on time scales of the ``fast'' cyclotron period.
Accordingly, to identify the actual operator corresponding to the drift velocity, we apply 
the time-coarse-graining 
(and projection) prescription described at the end of Sec.~\ref{nonuniformBmatch}. For a time-independent (Schr\"odinger picture) operator $\scr O$, we have in terms of matrix elements
\be\label{timeavgprescription1}
\la\Upsilon_{n,f'}|\ol{\scr O}(t) |\Upsilon_{n,f}\ra = \int\!d\tau K(\tau) \la\Upsilon_{n,f'}| e^{iH(t-\tau)/\hbar}\scr O e^{-iH(t-\tau)/\hbar}|\Upsilon_{n,f}\ra \,.
\ee
These matrix elements should then be matched to matrix elements of a corresponding (Heisenberg picture) operator $\wt{\scr O}(t)$ in the effective theory, $\la\psi_{f'}|\wt{\scr O}(t)|\psi_{f}\ra$. 

The simplest way to evaluate the integral above is to find the eigenstates of energy. The corrected eigenstates of $H$, up to first order in $L^{-1}$, are given by
\be
\Psi'_{n,k} = \Psi_{n,k} + \frac{\ell}{2L} \sum_{s\ne 0} \frac{\beta_{n,s}(k\ell)}{s} \Psi_{n+s,k}
\ee
where 
\ba
\beta_{n,-1}(k\ell) &= -\frac{i}{\sqrt{2}} \sqrt{n} \left(\ell^2k^2 + \frac{3}{2}n \right)  \no
\beta_{n,1}(k\ell) &= \frac{i}{\sqrt{2}} \sqrt{n+1} \left(\ell^2k^2 + \frac{3}{2}(n+1)\right) 
\ea
and $\beta_{n,s} = \alpha_{n,s}$ for all other $s$.
The corresponding eigenvalues are
\be
H \Psi'_{n,k} =  \hbar \Omega_0 \left( 1 + \frac{k \ell^2}{L} \right) \left( n + \frac{1}{2} \right) \Psi'_{n,k}
\ee
where $\Omega_0 := B_0/m$.
Rewriting $\Upsilon_{n,k}$ in the energy eigenbasis yields, to first order in $L^{-1}$,
\be
\Upsilon_{n,k} = \sum_{n'}\int\!dk'\, \Psi'_{n',k'} \la\Psi'_{n',k'}|\Upsilon_{n,k}\ra = \Psi'_{n,k} + \frac{\ell}{2L} \sum_{s\ne 0} \frac{\alpha_{n,s}(k\ell) - \beta_{n,s}(k\ell)}{s} \Psi_{n+s,k}
\ee
and so we have
\ba
\la\Upsilon_{n,k'}| e^{iHt/\hbar}\scr O e^{-iHt/\hbar}|\Upsilon_{n,k}\ra &=  e^{i\Omega_0(n+\sfrac{1}{2})(k'-k)\ell^2t/L}\la\Psi'_{n,k'}| \scr O |\Psi'_{n,k}\ra \nonumber\\
&\qquad + \frac{\ell}{2L}\times \text{\it ``terms with factors $e^{\pm i \Omega_0 t}$''}\,.
\ea
The phases $e^{\pm i \Omega_0 t}$ are fast oscillating from the perspective of the effective theory, so $K$ will eliminate such terms. 
The phase factor on the first term is, on the other hand, slowly varying due to condition \eqref{pertcondalpha2}, and therefore 
the low-pass filter $K$ will preserve it. We thus obtain
\be
\la\Upsilon_{n,k'}|\ol{\scr O}(t) |\Upsilon_{n,k}\ra =  e^{i\Omega_0(n+\sfrac{1}{2})(k'-k)\ell^2t/L}\la\Psi'_{n,k'}| \scr O |\Psi'_{n,k}\ra \,.
\ee
For operators $\scr O$ that, like $v_y$, commute with $p_y$,  this phase factor drops out since its argument is proportional to $k-k'$ and $\la\Psi'_{n,k'}| \scr O |\Psi'_{n,k}\ra \propto \delta(k-k')$.
Therefore
\be
\la\Upsilon_{n,k'}|\ol{\scr O} |\Upsilon_{n,k}\ra = \la\Psi'_{n,k'}| \scr O |\Psi'_{n,k}\ra
\ee
revealing that, for such operators and to first order in $L^{-1}$, matrix elements in the $\Upsilon$ basis should be replaced by the corresponding matrix elements in the $\Psi'$ basis.

For the drift velocity we have
\ba
\la \Psi'_{n,k'}| v_y | \Psi'_{n,k}\ra =&\,\, \la \Psi_{n,k'}| v_y^{(0)} | \Psi_{n,k}\ra + \frac{1}{L} \la \Psi_{n,k'}| v_y^{(1)} | \Psi_{n,k}\ra \no
& + \frac{\ell}{2L} \sum_{s\ne 0} \frac{\beta_{n,s}(k\ell)}{s} \la \Psi_{n,k'}| v_y^{(0)} | \Psi_{n+s,k}\ra \no
& + \frac{\ell}{2L} \sum_{s\ne 0} \frac{\beta_{n,s}(k'\ell)^*}{s} \la \Psi_{n+s,k'}| v_y^{(0)} | \Psi_{n,k}\ra
\ea
where $v_y = v_y^{(0)} + L^{-1} v_y^{(1)}$, with
\ba
v_y^{(0)} &= \frac{1}{m} (p_y - B_0 x) \\
v_y^{(1)} &= - \frac{B_0}{2m}x^2\,.
\ea
Using formula \eqref{matrixLandaugauge} we find that the only non-vanishing matrix elements are
\ba
\la \Psi_{n,k'}| v_y^{(1)} | \Psi_{n,k}\ra &= -\frac{\hbar}{2} \left(\ell^2k^2 + n + \frac{1}{2}\right) \delta (k-k') \\
\la \Psi_{n+1,k'}| v_y^{(0)} | \Psi_{n,k}\ra &= \frac{i\hbar}{\sqrt{2}\ell} \sqrt{n+1} \,\delta(k-k') \\
\la \Psi_{n-1,k'}| v_y^{(0)} | \Psi_{n,k}\ra &= -\frac{i\hbar}{\sqrt{2}\ell} \sqrt{n}\, \delta(k-k')
\ea
and their complex conjugates. 
In any normalized state $\Upsilon_{n,f}$, the expectation value of the time-coarse-grained velocity operator is then
\be
\la \Upsilon_{n,f}|\, \ol{v_y} \,| \Upsilon_{n,f}\ra = \frac{\hbar}{mL} \left( n + \frac{1}{2} \right)\,.
\ee
Given the matching prescription stated in \eqref{matchingBpert}, we see that the right-hand side is identified with $\mu/L$, which exactly matches the classical result.

The result is also compatible with the drift velocity predicted by the quantum effective theory for open isomagnetic contours, discussed in Sec.~\ref{openBlines}.
For $B$ given by \eqref{BlinearL}, the variable (classically) conjugated to $\log B$ is given by 
\be
\chi(x,y) = \frac{L}{B_0} B(x)^2 y\,.
\ee
To first order in $L^{-1}$ we get, for the quantum operator,
\be
\wh y = \frac{1}{LB_0}\wh\chi
\ee
noticing that no operator-ordering ambiguities appear (at this order).
We thus see from \eqref{chisol} that
\be
\wh v_y = \frac{d\wh y}{dt} = \frac{1}{LB_0} \frac{d\wh\chi}{dt} = \frac{\mu}{L} + \ca O(L^{-2})
\ee
which also gives no quantum correction to the classical result.\footnote{Note that Oikawa {\it et al.}~\cite{chan2016quantum,Chan:2017dex} 
have studied the quantum mechanical drift velocity from a Heisenberg picture perspective, at first order in $L^{-1}$. In \cite{chan2016quantum} the operator-level formula for the drift velocity is $v_y = H/B_0$, which matches our result when restricted to states in $\ca V_\mu$. However, they compute the expectation value of that operator in a particular Gaussian state (with dispersion equal to $\ell$ in both directions), obtaining
\[
v_y = \frac{1}{L}\frac{\la \pi_x \ra^2 + \la \pi_y \ra^2}{2m B_0} + \frac{3\hbar}{4mL} \,,
\]
and propose that the second term represents a quantum correction to the classical drift result. In our interpretation, that $\hbar$-term is due to a different choice of state (outside our $\ca V_\mu$, defined in \eqref{matchnonuniformB}) and different identification of $\mu$ (because $\la \pi \ra^2 \ne \la \pi^2 \ra$).}

\section{Discussion}
\label{SecDiscussion}
 
We analyzed the non-perturbative quantization of the effective theory of 
guiding center drift motion of a charged particle in two spatial dimensions in a 
time-independent magnetic field. The effective theory is a good approximation when the gyro radius is small compared to the gradient scale of the magnetic field.
The quantization is developed from the perspective of beings that can only observe the guiding center motion. In particular, the guiding center Lagrangian involves no
metric structure (notion of lengths and angles) on such space; rather, the space is only equipped with an area structure, so the quantization cannot refer to a metric
without importing information from outside the effective theory.

The phase space of the system is just the physical space itself, with a symplectic form given by the magnetic flux 2-form.
We adopted the viewpoint that the quantization should  be based on constructing 
a unitary representation of
a transitive group of symplectomorphisms on the phase space, preserving any other background structure to the best extent possible. In this case, the group was required to possess a non-trivial subgroup preserving the area 2-form. The 
Hamiltonian
charges associated with the algebra are then the ``quantizing observables'', undergoing (undeformed) quantization. The area-preserving requirement
implies that at least one of these observables, $Q$, must be a function of the magnetic scalar $B$ (or, more precisely, $dB\wedge dQ = 0$). The quantization always leads to a non-commutative physical space.

Since the area structure is much weaker than a metric structure, the quantization is generally more ambiguous. But some special configurations of the magnetic field admit a ``most natural'' quantization, such as 
the case where space is foliated by open or concentric closed isomagnetic contours. 
The quantization with open contours 
(and nowhere vanishing $dB$)
is relatively simple, based on the standard Heisenberg group (where the configuration variable was taken as $\log B$), and it does not reveal any surprising features. In particular, the Heisenberg equations of motion mimic exactly the classical equations. 
The case of closed contours is 
more subtle. We 
quantize it in terms of the $\wt{E(2)}$ group, where the $SO(2)$ generator is associated with the magnetic flux enclosed by contours of constant $B$,
and the non-compact generators
correspond to non-commutative translations in the plane. 
By studying the unitary irreducible representations of the Lie algebra $\wt{\fr e(2)}$, we find that the spectrum of the magnetic flux operator $\wh\phi$ has uniform spacing $2\pi\hbar$ and is bounded from below, with minimum value $\pi\hbar$ if 
the value of the quantum Casimir matches the classical one. Some 
surprising aspects emerge. The quantum equations of motion feature a ``quantum derivative'' of the magnetic scalar, $B^\Delta(\wh\Phi) := [B(\wh\Phi)-B(\wh\Phi-\hbar)]/\hbar$, instead of the usual derivative appearing in the classical equations. 
Most surprisingly, the effective theory predicts a quantization of space itself:
the flux is a proxy for the radius operator (relative to the central point
of the nested isomagnetic contours), so its discrete spectrum entails a quantization of 
the radial position of the particle.
In the microscopic theory a particle can be placed anywhere, whereas
in the effective theory the particle can ``live'' only on certain contours.

One of the main questions posed in the introduction, as a motivation, was whether coarse-graining commutes with non-perturbative (i.e., global in phase space) quantization. The theory we considered is simple enough that one can quantize both the microscopic and macroscopic theories (the coarse graining being over the small size and fast time scales of the gyro motion). 
Referring to the diagram presented in the introduction, we also analyzed the ``quantum coarse graining arrow'' from MQT to EQT. We identified a certain subspace $\ca V$ of the Hilbert space of the MQT, associated with an isolated spectral cluster of $\wh W_\mu = \wh H - \mu \wh B$ near zero, that could be matched to the Hilbert space of the EQT. We derived
a quantitative condition for this matching prescription to be stable under dynamical evolution of the MQT, provided that $\wh W_\mu$ satisfies a natural spectral condition. In a rough estimation, under the condition that $n\rho_n/L \ll 1$ (where $n \sim m\mu/\hbar$, $\rho_n$ is the radius of the gyro orbit for states in $\ca V$, and $L$ is the length scale over which $B$ varies), a state initially in $\ca V$ will evolve adiabatically, 
over macroscopically large times,
to another state approximately in $\ca V$. This condition is more stringent than its classical counterpart, $\rho/L \ll 1$, which we interpreted to be a consequence of defining the matching prescription with respect to a sharp subspace $\ca V$ of the MQT, as opposed to considering coherent states peaked at $\ca V$. 
To complete the matching prescription, we also proposed a prescription to map MQT operators to EQT operators via time coarse graining over the fast oscillatory gyro motion. With this, we concluded that the drift velocity predicted by the MQT, in the case of a quasi-uniform linear $B$ field, agrees with the prediction of the EQT at leading order in $L^{-1}$.

Regarding the aforementioned spurious EQT prediction 
of spatial quantization in the case of closed isomagnetic contours,
we have argued that this is not truly incompatible with the fundamental theory, since the discretization scale is never larger than the resolution scale (i.e., the gyro orbit size). 
By considering the matching prescription in the case of a uniform magnetic field, we proposed an explanation for this flux-quantization feature from the viewpoint of the MQT. In fact, there are two aspects at play. First, the flux operator (proportional to the squared-radius), which has a continuous spectrum in the MQT, does not act within a single Landau level, 
so cannot be directly compared to an operator in the EQT.
Projecting it to any particular level produces an operator with discrete spectrum 
that agrees, up to a shift, with that of the squared-radius operator of the EQT.
Remarkably, identifying the radius operator corresponding to the center of the gyro orbit by subtracting the fast-oscillating part of $\wh r(t)$, one obtains a guiding center squared-radius operator in the MQT that \emph{does} act within each Landau level, 
and has precisely the same discrete spectrum as the squared-radius operator of the EQT 
based on the $\wt{E(2)}$ quantization.
(It is interesting to notice that the alternative EQT based on the $\text{PSL}(2)$ quantization, which we deemed \emph{a priori} less natural due to the introduction of a parameter with dimensions of flux in the classical algebra, predicts a shifted spectrum for the squared-radius operator, yielding a slightly imperfect match to the MQT.)
In the non-uniform $B$ case we can define the ``contour flux operator'' in the MQT
by $\phi(\wh x, \wh y)$, where $\phi(x,y)$ is the magnetic flux enclosed by the isomagnetic contour passing through $(x,y)$. 
We conjecture that time coarse graining 
and projecting onto $\ca V$, according to the operator matching prescription,  will similarly produce an operator with discrete spectrum equal (or at least close) to that of the EQT.

We conclude that,  for the selected aspects considered in this paper,
the quantization-coarse graining diagram commutes within the regime of the guiding center approximation.
However, we must emphasize that nothing in the quantization of the effective theory indicates that the prediction of space quantization, in the case of closed isomagnetic contours, is spurious. Thus, scientists in a hypothetical world that have attempted to quantize their classical guiding center theory would not suspect, from purely theoretical considerations, that space is {\it not} quantized. Of course, if they ever reached the necessary experimental precision to probe this effect, they would realize that their theory breaks down, and a new degree of freedom (the gyro motion) would manifest. 
This suggests the more general lesson that non-perturbative predictions of the quantization of an effective classical theory should be taken with caution, since there may be no internal, theoretical indications of 
a description breakdown.
In particular, such predictions coming from nonperturbative quantization of general relativity,
such as for instance the quantization of the area operator in Loop Quantum Gravity \cite{Ashtekar:2021kfp},
might similarly be spurious even if the theory is internally consistent at all scales.

We considered in this paper only the simplest settings in which 
to examine the quantization of guiding center motion 
in a static magnetic field on a plane
and compare to the microscopic theory. The study
could be extended in a number of  directions.
One we have already pursued \cite{GCHallHawking} is to 
add an external electrostatic potential to the case
of a uniform magnetic field, to study the 
quantum Hall effect and  
Stone's model of Hawking radiation 
of edge modes in a quantum Hall system \cite{Stone:2012cx},
where we find that 
the predictions of the effective quantization match 
with the main results of the microscopic quantum theory. 
Other potentially interesting ideas could be to consider
magnetic fields with more complicated contour topology,
motion on two-dimensional 
spatial manifolds with nontrivial topology,
or motion in three dimensions. In the latter case the reduced phase space would be four dimensional, with two coordinates
labeling the field lines, and a pair
labeling position and momentum along the
field lines. The symplectic form
likely couples these variables, 
and the resulting quantum theory
would include something like one-dimensional 
quantum mechanics 
along the magnetic field lines, 
coupled  with transverse
drift due not only to $B$ gradients but also to field line 
curvature.
Alternatively, one could consider 
time-dependent magnetic fields, which would entail 
both induced electric fields and 
a time-dependent symplectic form, and thus a
time-dependent non-commutative quantum geometry for
spatial coordinate operators. 
Finally, a dynamical magnetic moment degree of freedom could be introduced, 
allowing for transitions which might correspond in the microscopic quantum theory
to Landau level transitions.

\section*{Acknowledgements}
This work was supported in part by NSF grant PHY-2309634
and by Perimeter Institute for Theoretical Physics. 
Research at Perimeter Institute is supported in part by the Government of Canada through the Department of Innovation, Science and Economic Development and by the Province of Ontario through the Ministry of Colleges and Universities.


\appendix
\numberwithin{equation}{section}

\section{Validity conditions for the guiding center approximation}
\label{AppClassicalGCconditions}

If $\Omega=qB/m$ were uniform, the particle would move along a circle with angular velocity $\Omega$. Up to a choice of center and initial phase, the trajectory (of radius $\rho$) would be given by
\be\label{rBconst}
r(t) = \rho \left( \cos(\Omega t), - \sin(\Omega t) \right)\,.
\ee
We are interested in the limit where 
the particle orbit is small compared to the scale over which the magnetic field 
changes. In this case the particle moves in loops, with frequency $\Omega$, velocity $v$ and radius $\rho \approx v/\Omega$, while its center slowly drifts due to the inhomogeneous magnetic field. When $\rho$ is very small, one would only notice the overall drifting motion, and not the fast gyro motion of the particle. Starting from the exact, ``microscopic'' theory, let us study this coarse-grained, ``macroscopic'' dynamics.

Let $x(t) = z(t) + r(t)$, where $z$ should describe the trajectory of the {\emph guiding center} and $r$ the {\emph gyro motion}. We can approximate \eqref{exactEOM}, expanding $\Omega(z+r)$ to first order in $r$, as
\be\label{approxEOM}
\frac{d^2 z^i}{dt^2} = \Omega(z) \epsilon_{ij} \frac{dz^j}{dt} + \epsilon_{ij} \frac{\partial\Omega(z)}{\partial z^k} r^k \frac{dz^j}{dt} + \epsilon_{ij} \frac{\partial\Omega(z)}{\partial z^k} r^k \frac{dr^j}{dt}
\ee
where we have taken $r(t)$ to be given as in \eqref{rBconst}, with $\Omega$ evaluated at $z(t)$, so $\frac{d^2 r^i}{dt^2} = \Omega(z) \epsilon_{ij} \frac{dr^j}{dt}$.  
As $z$ presumably moves slowly, not varying much through several cycles of $r$, we may time average this equation over time scales much longer than $\Omega^{-1}$. 
That is, we replace
\ba
r^k &\rightarrow \ol{r^k} = 0 \no
r^k \frac{dr^j}{dt} &\rightarrow \ol{r^k \frac{dr^j}{dt}} = \frac{1}{2}\Omega(z) \rho^2 \epsilon_{jk}\,.
\label{timeavg}
\ea
The quantity multiplying $\epsilon_{jk}$ above can be interpreted as the {\it dipole moment}, $\mu$, associated with the gyro motion: the electric current is one per cycle (since $q=1$), whose period is $2\pi/\Omega$, and the enclosed area is $\pi \rho^2$, so we have
\be\label{mudef}
\mu := Ia = \frac{\Omega}{2\pi}(\pi\rho^2) = \frac{1}{2}\Omega \rho^2\,.
\ee
In this regime, $\mu$ is an adiabatic invariant.\footnote{This can be seen as follows. Given a closed curve $\ca C$ in a symplectic manifold, with symplectic potential $\theta$, the quantity $J := \int_{\ca C}\theta$ is conserved as $\ca C$ is pushed through a Hamiltonian flow. In our situation, a cycle of gyro motion is almost a closed curve, so we can take $\ca C$ to be a closed loop approximating it. Also, in a cycle, $\theta = p_i dx_i \approx m v_i dr_i + qA_idx^i$. So $J \approx \int_0^{2\pi/\Omega} m v^2 dt  + q\int_{\ca C}A = 2\pi m v^2/\Omega + q(-B)\pi\rho^2$. Finally, $v \approx \rho\Omega$, implying that $J \approx 2\pi m \mu$.} We thus have, at this stage,
\be\label{approxEOM2}
\frac{d^2 z_i}{dt^2} = \Omega(z) \epsilon_{ij} \frac{dz_j}{dt} - \mu \frac{\partial\Omega(z)}{\partial z_i}\,.
\ee
It turns out that the second-derivative term is negligible in this context (which will be justified below), so the equation reduces to
\be\label{driftEOM}
\frac{dz_i}{dt} = - \mu \epsilon_{ij} \frac{\partial \log\Omega}{\partial z_j}
\ee
which is a first-order differential equation describing the guiding center motion. 

To see in what conditions equation \eqref{driftEOM} is valid, we employ an a posteriori argument. 
First, to justify the time-averaging employed in \eqref{timeavg}, we need that the drift motion should be slow compared to the gyro motion.
Assuming \eqref{driftEOM} is accurate, over a (supposedly small) period of the gyro motion, $z$ would  drift by an amount
\be
\delta z_i \approx \frac{dz_i}{dt} \frac{2\pi}{\Omega} = - \frac{1}{2} \rho^2 \epsilon_{ij} \frac{\partial \log\Omega}{\partial z_j}\,.
\ee
Thus, in order for $\delta z \ll \rho$, we need
\be
\frac{1}{2}\rho \left|\frac{\partial\log\Omega}{\partial z}\right| \ll 1\,.
\ee
Second, to justify that the second-order derivative is negligible in \eqref{approxEOM2}, we again assume that \eqref{driftEOM} is accurate, from which it would follow
\be
\frac{d^2 z_i}{dt^2} = - \mu \epsilon_{ij} \frac{\partial^2 \log\Omega}{\partial z_k \partial z_j} \frac{dz_k}{dt}\,.
\ee
The required condition is that this needs to be much smaller than the analogous term in the right-hand side of \eqref{approxEOM2},
\be
\left| \mu \frac{\partial^2 \log\Omega}{\partial z_i \partial z_j} \right|\ll \Omega\,,
\ee
which can be rewritten  as
\be
\frac{1}{2}\rho^2\left| \frac{\partial^2 \log\Omega}{\partial z_i \partial z_j} \right|\ll 1\,.
\ee
In conjunction, these conditions make precise what is meant by a regime of ``strong magnetic field'', in which the guiding center motion can be well described by equation \eqref{driftEOM}: we require that the zeroth and first  order Taylor series of $\log \Omega$ are accurate within the area enclosed by a cycle of the gyro motion (around any point).

\section{Toroidal phase space}
\label{AppTorus}

In this appendix we derive an exact formula for the density of states in the effective theory, when the region accessible to the particle is bounded.
For this purpose, we could introduce an infinite potential barrier confining the particle to such a region of the plane. This is complicated, however, due to the necessity of deciding on an operator ordering for $V(x^1,x^2)$, and then studying the corresponding eigenstates of energy. A simpler approach is to place the particle in a torus space, $T^2$. In this case, we cannot quantize the Heisenberg algebra $\{x^1,x^2,1\}$ 
since $x^1$ and $x^2$ are not well-defined on the torus --- they are not periodic. In fact, there is no complete finite dimensional quantizing algebra in this case, and one is forced to use the algebra of all smooth 
functions on the torus, denoted by $\ca A_{T^2}$. A basis for these functions is
\be\label{torusalgebra}
\left\{P_{n_1,n_2}:=e^{i2\pi (n_1 x^1/a_1 + n_2 x^2/a_2)}\,;\,\, (n_1,n_2)\in \bb Z^2\right\}
\ee
where $a_1$ and $a_2$ are the periods in the $x^1$ and $x^2$ directions, respectively.
The quantization of this problem has been studied previously (see for example 
\cite{manoliu1997quantization,  gotay1995full,
gotay2000obstructions}).
Here we provide a simple
derivation of a suitable quantization of this system.\footnote{Our result agrees with the one from geometric quantization \cite{manoliu1997quantization}. In particular, both predict that a quantization is only possible when the total magnetic flux is an integer multiple of $2\pi\hbar$, and the corresponding Hilbert space has a dimension equal to the number of flux quanta. On the other hand, it is shown in \cite{gotay1995full}  that the case of unit flux 
also admits a quantization of the full algebra of observables, represented on an infinite-dimensional Hilbert space. However, that quantization violates the Casimir matching principle \cite{AndradeeSilva:2020ofl} to a large degree, and some classically bounded observables 
are quantized to unbounded operators \cite{gotay2000obstructions}.}

The basic idea is to take advantage of the fact that the torus is a quotient space, $T^2 \sim \bb R^2/\bb Z^2$. Accordingly, we can hope to derive a
quantization of the torus by a restriction
of the Hilbert space for the plane respecting this quotient, 
as we will explain. Consider any quantization of the elements of \eqref{torusalgebra} to operators on $L^2(\bb R)$,
\be
P_{n_1,n_2} \mapsto \wh P_{n_1,n_2}\,.
\ee
In particular, one could consider promoting the functions $P_{n_1,n_2}(x_1,x_2)$ to operators $\wh P_{n_1,n_2} := P_{n_1,n_2}(\wh x_1,\wh x_2)$, using the representation \eqref{xyrep}, but of course this procedure would be subjected to operator-ordering ambiguities. 
Given any quantization prescription, denote the operator algebra generated by $\wh P_{n_1,n_2}$ on $L^2(\bb R)$ by $\wh{\ca A}_{T^2}$.
Note that $\wh{\ca A}_{T^2}$ is not required to furnish a strict representation of ${\ca A}_{T^2}$, and in general it will rather furnish a representation of some $\hbar$-deformation of ${\ca A}_{T^2}$.
Now consider the operators
\ba
\wh T_1 &:= e^{i B a_1 \wh x^2/\hbar} \\
\wh T_2 &:= e^{- i B a_2 \wh x^1/\hbar}\,.
\ea
These generate translations in $x^1$ by $a_1$ and $x^2$ by $a_2$, respectively. Classically, the corresponding symplectomorphisms are simply the identity on the torus phase space. We 
thus demand that $\wh{\ca A}_{T^2}$ must satisfy
\be\label{torusZ2invariance}
\wh T_i^\dag \wh P \wh T_i = \wh P \,,\,\,\,\text{\it for all}\,\, \wh P \in \wh{\ca A}_{T^2}
\ee
with $i=1,2$. This expresses the invariance of these operators under the $\bb Z^2$ quotient, and thus their well-definedness on the torus. 
Note that since the operators $\wh T_i$ are unitary, condition \eqref{torusZ2invariance} implies that they are Casimir operators 
for the algebra $\wh{\ca A}_{T^2}$. However, they are not proportional to the identity on $L^2(\bb R)$, 
revealing that $\wh{\ca A}_{T^2}$ is not irreducibly represented on $L^2(\bb R)$. 
The goal is then to find the appropriate invariant irreducible subspace of $L^2(\bb R)$ to which $\wh{\ca A}_{T^2}$ could be restricted.\footnote{As we will see, the candidate irreducible subspaces consist entirely of non-normalizable states, and therefore are not legitimate Hilbert subspaces. This is a peculiarity of infinite-dimensional representations, where reducible representations need not contain any nontrivial irreducible invariant subspaces. This notwithstanding, in this appendix we will pursue this ``heuristic'' approach, which nonetheless will produce the correct irreducible representation. Alternatively, one could presumably formulate the construction rigorously in the framework of rigged Hilbert spaces, in particular using the ``rigging map'' \cite{Marolf:2000iq,Giulini:1999kc,Alonso-Monsalve:2025lvt} associated with the $\bb Z^2$-action generated by $\wh T_i$, to arrive at the same irreducible representation.}

For any $\wt\psi$ in such irreducible subspace, we must have
\be\label{Tiwtpsi}
\wh T_i\wt\psi = e^{i\alpha_i}\wt\psi
\ee
where $\alpha_i \in \bb R$, to accord with Schur's lemma and the unitarity of $\wh T_i$. We shall restrict to the case $\alpha_i = 0$, which should lead to the non-projective representation.
Again using the Baker-Campbell-Hausdorff formula
we see that
\be\label{T2T1}
\wh T_2 \wh T_1 = e^{-iBa_1a_2/\hbar} \wh T_1 \wh T_2\,.
\ee
Note that there would be no $\wt\psi$ satisfying \eqref{Tiwtpsi} unless the phase factor 
in \eqref{T2T1}
equals $1$. That is,
\be\label{Diracmonopole}
\frac{Ba_1a_2}{\hbar} = 2\pi N \,,\,\, \text{\it with }\, N \in \bb Z\,.
\ee
This is the torus version of Dirac's famous monopole charge quantization condition, since $Ba_1a_2$ is the total magnetic flux through the torus. (In geometric quantization, this is the standard prequantization condition for a complex line bundle to exist, $\frac{1}{2\pi\hbar} \int_{T^2} \omega \in \bb Z$.)

Given the flux quantization condition  \eqref{Diracmonopole},
we can now find the solutions to 
\ba
\wh T_1\wt\psi(x^1) &= \wt\psi(x_1 - a_1) = \wt\psi(x_1) \\
\wh T_2\wt\psi(x^1) &= e^{- i 2\pi N x^1/a_1}\wt\psi(x^1) = \wt\psi(x^1)\,.
\ea
The first equation simply says that $\wt\psi(x^1)$ must be periodic with period $a_1$. The second equation implies that $\wt\psi(x^1)$ must be a superposition of Dirac deltas equally spaced,
\be\label{deltas}
\wt\psi(x^1) = \sum_{k\in\bb Z} C_k\, \delta\!\left(x^1 - a_1\frac{k}{N}\right)
\ee
with $C_k \in \bb C$. 
Thus, in order for the sum of Dirac deltas to be $a_1$-periodic, we need
\be\label{C's}
C_{k+N} = C_k
\ee
revealing that there are only $|N|$ independent parameters spanning the subspace with $\wh T_i = 1$. This suggests that $\wh{\ca A}_{T^2}$ should be represented irreducibly on an $|N|$-dimensional Hilbert space. 
The norm on this Hilbert space cannot be the one induced from $L^2(\bb R)$, since all states are non-normalizable. Nevertheless, we can easily construct an actual 
representation based on the action of $\wh{\ca A}_{T^2}$ on wavefunctions 
of the form \eqref{deltas} satisfying \eqref{C's}, given that
according to the requirement \eqref{torusZ2invariance} 
such an action is necessarily closed within this subspace. 

As a particular example, consider the quantization of $\ca A_{T^2}$ obtained by promoting $P_{n_1,n_2}$ with the following operator-ordering choice,
\be\label{X2n2psi}
\wh P_{n_1,n_2}\psi(x^1):= e^{i2\pi (n_1 \wh x^1/a_1 + n_2 \wh x^2/a_2)}\psi(x^1) = e^{-i2\pi^2n_1n_2 \hbar/(Ba_1a_2)} e^{i2\pi n_1 x^1/a_1}\psi\left(x^1 - \frac{2\pi\hbar}{Ba_2}n_2\right)\,,
\ee
If $\{|k\ra;\, k = 0,1,\ldots |N|-1\}$ is an orthonormal basis for an associated $|N|$-dimensional Hilbert space $\ca H_{T^2}$, then the action of the $\wh P_{n_1,n_2}$ above 
on the Dirac comb functions $\wt\psi$ \eqref{deltas}
motivates the corresponding action of associated operators $\wh P_{n_1,n_2}$ on $\ca H_{T^2}$,
\be
\wh P_{n_1,n_2}|k\ra:=  e^{-i\pi n_1n_2/N} e^{i2\pi n_1 k/N}|k + n_2 \,\,\text{mod}\,N\ra\,.
\ee
This properly defines a (non-projective) unitary, irreducible representation of $\wh{\ca A}_{T^2}$ on $\ca H_{T^2}$, and therefore yields a valid quantization on the torus.

Despite the inherent ambiguity in quantizing $\ca A_{T^2}$ into $\wh{\ca A}_{T^2}$, the dimensionality of the torus Hilbert space is robust, provided that condition \eqref{torusZ2invariance} is satisfied. In fact, the derivation relies only on the properties of $\wh T_i$.
We conclude that the density of states in the effective theory is
\be
\text{\it density of states} = \frac{|N|}{a_1a_2} =  \frac{|B|}{2\pi\hbar}\,,
\ee
matching with \eqref{densitystatesLandau1}.

\section{Alternative quantization for closed isomagnetic contours}
\label{AppClosedBAlt}

In this appendix we consider the most general ($\phi$-dependent) weights to regularize the algebra of $\phi$, $\sin\theta$ and $\cos\theta$ \eqref{closedBcharges0} in Sec.~\ref{subsecPoissonclosed}, and thereby produce an algebra of smooth observables that exponentiates to a group action on the phase space for closed isomagnetic contours. We show that these   weights can either lead to the $\wt{\fr{e}(2)}$ algebra discussed in that section, or to a one-parameter family of $\fr{sl}(2)$ algebras which we explore here.

Consider the set of functions 
\ba
J &:= h(\Phi) \\
X &:= f(\Phi)\cos\theta \\
Y &:= g(\Phi)\sin\theta
\ea
for arbitrary functions $h$, $f$ and $g$ that are smooth away from the origin. From \eqref{closedBcharges0}, we find
\ba
&\{X, Y\} = -f(\Phi)g'(\Phi)\sin^2\theta - f'(\Phi)g(\Phi) \cos^2\theta \label{weightedalg0}\\
&\{X, J\} = - f(\Phi) h'(\Phi) \sin\theta \label{weightedalg1}\\
&\{Y, J\} = g(\Phi) h'(\Phi)\cos\theta\,. \label{weightedalg2}
\ea
In order for this to form a closed algebra, 
it is necessary that the right-hand sides of \eqref{weightedalg1} and \eqref{weightedalg2} give something proportional to $Y$ and $X$, respectively. Thus,
\ba
f(\Phi) h'(\Phi) &= a g(\Phi) \\
g(\Phi) h'(\Phi) &= b f(\Phi)\,.
\ea
Assuming that $f$ and $g$ are almost everywhere nonzero, we conclude that $(h'(\Phi))^2 = ab$, which implies that
$h(\Phi) = c\Phi + d$
for constants $c$ and $d$. Since a scaling and shift of $\Phi$ would not change the algebra fundamentally (as it would correspond to a change of basis plus, possibly, a trivial central extension), we choose for simplicity 
\be
h(\Phi) = \Phi\,.
\ee
Then, $g(\Phi) = b f(\Phi)$, and by absorbing $b$ into $f$ (or, equivalently, rescaling $X$), we can set $b=1$.
The right-hand side of \eqref{weightedalg0} then gives
$- f(\Phi) f'(\Phi)$.
For algebraic closure of $J = \Phi$, $X$ and $Y$, and possibly also including the constant function $1$, this must be a linear function of $\Phi$,
\be
-f(\Phi) f'(\Phi) = -\eta \Phi - \nu
\ee
for constants $\eta$ and $\nu$. The general solution for $f$ is
\be
f(\Phi) = \sqrt{\eta \Phi^2 + 2 \nu \Phi + \gamma}
\ee
for some constant $\gamma$. 
From the argument 
leading to
\eqref{closedBcharges}
we see that $X$ and $Y$ will be smooth provided that $f(\Phi) \sim \sqrt{\Phi}$ near the origin. This requires that $\gamma = 0$ and $\nu>0$. With a further rescaling of $X$ and $Y$, we can set $\nu = 1$, which gives the most general allowed weight
\be
f(\Phi) = g(\Phi) = \sqrt{2\Phi + \eta\Phi^2}\,.
\ee
To ensure that $f$ is real everywhere, we must have $\eta \ge 0$. 

The resulting $\eta$-parametrized family of algebras is then
\ba
&\{X, Y\} = -\eta\Phi - 1 \\
&\{X, \Phi\} = - Y \\
&\{Y, \Phi\} = X \,.
\ea
The choice $\eta=0$ leads to an 
$\wt{\fr{e}(2)}$ algebra, as considered before, while the choice $\eta>0$ leads to a centrally-extended $\fr{sl}(2)$ algebra. This central extension is trivial, as it can be eliminated by redefining 
\ba
J &= \Phi + \frac{1}{\eta} \\
X &:= \sqrt{\frac{2}{\eta}\Phi + \Phi^2}\,\cos\theta \\
Y &:= \sqrt{\frac{2}{\eta}\Phi + \Phi^2}\,\sin\theta
\ea
leading to a standard $\fr{sl}(2)$ algebra for $J$, $X$ and $Y$,
\ba
&\{X, Y\} = -J \\
&\{X, J\} = - Y \\
&\{Y, J\} = X \,.
\ea
This algebra exponentiates to a $\text{PSL}(2,\bb R)$ group of symplectomorphisms on the phase space.\footnote{To identify the group, note that this Poisson algebra is 
isomorphic to a commutator algebra of the 
$2\times2$ matrices,  
with $Y\rightarrow \s_3/2$, $X\rightarrow\s_1/2$, $J \rightarrow i\s_2/2$, 
where $\s_i$ are the Pauli matrices, which form the Lie algebra of $\text{SL}(2,\bb R)$. 
Since $\Phi$ (and $J$) generates $\theta$-rotations on the phase space, $\exp(2\pi\{{}\cdot{},J\})$ is the identity,
whereas $\exp(2\pi i \s_2/2)$ is minus the identity, hence the symplectic group generated by the Poisson algebra is the quotient by $\{\pm I\}$, 
$\text{PSL}(2,\bb R)$.} 

Now we discuss the quantization in terms of this group.
The quantized algebra reads
\ba
&[\wh X, \wh Y] = - i\hbar \wh J \\
&[\wh X, \wh J] = - i\hbar \wh Y \\
&[\wh Y, \wh J] = i\hbar \wh X \,.
\ea
It has a Casimir operator given by
\be
\wh C := \wh J^2 - \wh X^2 - \wh Y^2\,.
\ee
In any irreducible (complex) representation, it must be a multiple of the identity. As standard in the analysis of representations of $\fr{sl}(2)$, the value of this Casimir is typically written as
\be
\wh C = \left(\zeta^2 - \frac{1}{4}\right)\hbar^2
\ee
where $\zeta^2 \in \bb R$, so $\zeta$ is either real or purely imaginary. (We introduced the $\hbar^2$ for dimensional consistency.) 
Classically this Casimir evaluates to $1/\eta^2$, and thus imposing Casimir matching yields
\be\label{zetaCasimirmatch}
\zeta = \pm \sqrt{\frac{1}{4} + \frac{1}{\hbar^2\eta^2}}\,.
\ee
It is convenient to define ladder operators
\be
\wh a := \frac{\wh Y+ i \wh X}{\hbar} \quad\text{and}\quad \wh a^\dag := \frac{\wh Y- i \wh X}{\hbar}\,. 
\ee
The irreducible representations of $\fr{sl}(2)$ are labeled by $\zeta$ and $s \in [0,1)$ and they have the general form \cite{bargmann1947irreducible,gel1947unitary,harish1952plancherel,
Kitaev:2017hnr}
\ba
\wh J |j; \zeta, s\ra &= \hbar(j+s) |j; \zeta, s\ra \\
\wh a^\dag |j; \zeta, s\ra &= \left( j+s+\zeta +\frac{1}{2}\right) |j+1; \zeta, s\ra \label{adagjzetas}\\
\wh a |j; \zeta, s\ra &= \left( j+s-\zeta -\frac{1}{2}\right) |j-1; \zeta, s\ra \label{ajzetas}
\ea
where $j$ is in some subset of $\bb Z$, and the states $|j; \zeta, s\ra$ are not in general
normalized. The representations with $s\ne 0$ exponentiate to projective representations of $\text{PSL}(2,\bb{R})$.
By computing $|\!|\wh a^\dag |j; \zeta, s\ra |\!|^2$ we obtain the relation
\be\label{slnormrel}
\left(j + s - \zeta + \frac{1}{2} \right) \la j; \zeta, s |j; \zeta, s\ra =  \left(j + s + \bar\zeta + \frac{1}{2} \right) \la j+1; \zeta, s |j+1; \zeta, s\ra\,.
\ee
We are interested in the representations where the spectrum of $J$ (and $\Phi$) are bounded from below, since classically $\Phi \ge 0$. Thus, there must exist some state $|j_0; \zeta, s\ra$ that is annihilated by $\wh a$. 
There are two cases:
\begin{enumerate}[label=(\roman*)]
\item $j_0 + s - \zeta -\sfrac{1}{2} = 0$, so the coefficient in \eqref{ajzetas} vanishes when $j=j_0$\,; 

\item $j_0 + s + \bar\zeta -\sfrac{1}{2} = 0$, so the norm of $|j_0-1; \zeta, s\ra$ vanishes\,. 
\end{enumerate}
Case (i) implies $j_0 + s = \zeta + \sfrac{1}{2}$, while case (ii) implies $j_0 + s = -\bar\zeta + \sfrac{1}{2}$.
Applying \eqref{slnormrel} for $j=j_0$, and taking from \eqref{zetaCasimirmatch} that $\zeta$ is real, gives for case (i), 
\be
\la j_0; \zeta, s |j_0; \zeta, s\ra =  \left(2\zeta + 1 \right) \la j_0+1; \zeta, s |j_0+1; \zeta, s\ra
\ee
and for case (ii),
\be
\left(-2\zeta + 1 \right) \la j_0; \zeta, s |j_0; \zeta, s\ra =  \la j_0+1; \zeta, s|j_0+1; \zeta, s\ra\,.
\ee
But notice from \eqref{zetaCasimirmatch} that 
$|\zeta| > \sfrac{1}{2}$, so if positive $\zeta$ is selected only case (i) will correspond to positive norms, and if negative $\zeta$ is selected only case (ii) will correspond to positive norms.
For either sign of $\zeta$, the spectrum of $\wh J$ will contain a minimum value element equal to 
$\hbar(|\zeta| + \sfrac{1}{2})$, with all other values related by $\hbar$ increments.
In fact, both signs of $\zeta$ define the same \emph{discrete series representation} of $\fr{sl}(2)$.\footnote{The action of $\wh a^\dagger$ and $\wh a$ for different signs of $\zeta$ may appear to be different from 
\eqref{adagjzetas} and \eqref{ajzetas}, but that is due to the fact that the basis is not normalized. Using an orthonormal basis we would have $\wh a^\dag |j\ra = \sqrt{(j+s+\sfrac{1}{2})^2 -\zeta^2}|j+1\ra$ and $\wh a |j\ra = \sqrt{(j+s-\sfrac{1}{2})^2 -\zeta^2}|j-1\ra$, with $j+s\ge |\zeta|+\sfrac{1}{2}$, which are clearly independent of the sign of $\zeta$.}
The spectrum of $\wh\Phi$ is then given by
\be
\text{Spec}(\wh \Phi) = \left\{\hbar(\kappa + j), j\in \bb N\cup\{0\}\right\}
\ee
where $j$ was redefined to start at $0$ and
\be
\kappa = \sqrt{\frac{1}{4} + \frac{1}{\hbar^2\eta^2}} + \frac{1}{2} - \frac{1}{\hbar \eta}\,.
\ee
Note that $\kappa \in (\sfrac{1}{2}, 1)$. In particular, the limit $\eta\to 0$ gives $\kappa = \sfrac{1}{2}$, recovering the $\wt{E(2)}$ quantization. (In fact, $\wt{E(2)}$ is the In\"on\"u-Wigner contraction of $\text{PSL}(2)$ via the limit $\eta\to0$.)

\section{Matrix elements between Landau level states}
\label{AppLLmatrix}

In this appendix we derive a formula for computing the matrix elements of functions of position between arbitrary Landau level states in the quantum microscopic theory. The formula reduces the computation to a sum of matrix elements between $n=0$ level states. We present two (almost identical) versions, in the Landau and symmetric gauges.

Let $f(x,y)$ be an operator-valued function of spatial (Cartesian) coordinates, which by a slight abuse of notation can also be expressed in terms of complex coordinates, $z = x+iy$ and $\bar z = x-iy$, as $f(z,\bar z)$.
In the Landau gauge we have, for $a$ defined in \eqref{a-Landau},
\ba
[a, f(z,\bar z)] &= - i \sqrt{2}\ell \frac{\partial f}{\partial \bar z}(z,\bar z) \\
[a^\dag, f(z,\bar z)] &= - i \sqrt{2}\ell \frac{\partial f}{\partial z}(z,\bar z)\,.
\ea
Formal identities, to be understood in terms of their Taylor series, follow,
\ba
e^{ita}f(z,\bar z)e^{-ita} &= f\left(z, \bar z + \sqrt{2}\ell t\right)  \\
e^{-isa^\dag}f(z,\bar z)e^{isa^\dag} &= f\left(z - \sqrt{2}\ell s, \bar z\right)\,,
\ea
and from these we obtain, also using $[a,a^\dag] = 1$, the ``generating functional''
\be
e^{ita}f(z,\bar z) e^{isa^\dag} = e^{-ts} e^{isa^\dag} f\left(z - \sqrt{2}\ell s, \bar z + \sqrt{2}\ell t \right) e^{ita}
    \ee
Contracting with $n=0$ states and looking at the Taylor monomial of order $t^{n'}s^{n}$ gives
\be
\la \Psi_{0,k'}| a^{n'} f(z,\bar z) (a^\dag)^n | \Psi_{0,k}\ra = (-i)^{n+n'} \left(\frac{\partial}{\partial t}\right)^{n'} \left(\frac{\partial}{\partial s}\right)^{n} \la \Psi_{0,k'}| e^{-ts} f\left(z - \sqrt{2}\ell s, \bar z + \sqrt{2}\ell t \right)| \Psi_{0,k}\ra \bigg|_{t,s=0}
\ee
which yields
\be\label{matrixLandaugauge0}
\la \Psi_{n',k'}| f(z,\bar z) | \Psi_{n,k}\ra = \sum_{r=0}^{\text{min}(n,n')} \frac{i^{n-n'}\sqrt{n!n'!}}{r!(n-r)!(n'-r)!} \la \Psi_{0,k'}| \left(\sqrt{2}\ell\frac{\partial}{\partial \bar z}\right)^{n'-r}\! \left(\sqrt{2}\ell\frac{\partial}{\partial z}\right)^{n-r}\!\!f(z,\bar z) | \Psi_{0,k}\ra \,.
\ee
Of particular interest to Sec.~\ref{nonuniformBmatch} is the case of a function $g(x)$ of only $x$, for which this formula gives
\be\label{matrixLandaugauge}
\la \Psi_{n',k'}| g(x) | \Psi_{n,k}\ra = \sum_{r=0}^{\text{min}(n,n')} \frac{i^{n-n'}\sqrt{n!n'!}}{r!(n-r)!(n'-r)!} \la \Psi_{0,k'}| \left(\frac{\ell}{\sqrt{2}}\frac{\partial}{\partial x}\right)^{n+n'-2r}\!\!g(x) | \Psi_{0,k}\ra \,.
\ee

The exact same relations hold in the symmetric gauge, simply replacing $a$ (and $a^\dag$) by $c$ 
 (and $c^\dag$) as defined in \eqref{c-symmetric}. Formula \eqref{matrixLandaugauge0} becomes
\be\label{matrixsymmetricgauge0}
\la \Psi_{n',j'}| f(z,\bar z) | \Psi_{n,j}\ra = \sum_{r=0}^{\text{min}(n,n')} \frac{i^{n-n'}\sqrt{n!n'!}}{r!(n-r)!(n'-r)!} \la \Psi_{0,j'}| \left(\sqrt{2}\ell\frac{\partial}{\partial \bar z}\right)^{n'-r}\! \left(\sqrt{2}\ell\frac{\partial}{\partial z}\right)^{n-r}\!\!f(z,\bar z) | \Psi_{0,j}\ra \,.
\ee

\section{Adiabatic stability of the matching prescription}
\label{AppAdiabIneq}

In this appendix we establish the inequality \eqref{AdiabIneq}, 
\be\label{AdiabIneqApp}
|\!| P(t) - P(0) |\!| \le \varepsilon_1 + \varepsilon_2 t
\ee
used in Sec.~\ref{nonuniformBmatch} to quantify the adiabatic stability of the proposed matching prescription.\footnote{We acknowledge ChatGPT (5.6 Sol) for useful discussions related to this appendix.}
In this inequality, $P(t)$ denotes the Heisenberg-picture projector to the vector subspace $\ca V(t) \subset \ca H$, where $\ca V(0)$ is the vector subspace proposed to be associated with the Hilbert space of the effective theory at fixed magnetic moment $\mu$. As argued in that section, for the prescription to be dynamically consistent, $P(t)$ must remain close to $P(0)$, at least for some macroscopic interval of time.
Thus, our goal is to show that this inequality holds with constants $\varepsilon_1$ and $\varepsilon_2$ that are small in the guiding center regime.
The main result follows directly from a strong version of the adiabatic theorem enunciated in \cite{teufel2003adiabatic}, and attributed to Kato \cite{kato1950adiabatic}. This version applies to unbounded Hamiltonians, as well as cases in which the slowly varying vector subspace is infinite-dimensional or associated with a continuous portion of the spectrum (but still separated by gap from the rest of the spectrum), as occurs in our physical setting.
Due to the technical nature of this appendix, we organize the intermediate derivations into a series of lemmas, and state the main results as theorems (plus one corollary) for ease of reference.
We restate Kato's adiabatic theorem (Theorem~\ref{TheoremAdiabatic}) and its assumptions in a way that is automatically applicable in our context, and derive \eqref{AdiabIneqApp} as a simple consequence (Corollary~\ref{CorollaryAdiabatic}).
Then, we derive exact bounds for $\varepsilon_1$ (Theorem~\ref{theoremvareps1bound}) and $\varepsilon_2$ (Theorem~\ref{theoremvareps2bound}) in terms of expressions directly involving the microscopic Hamiltonian $H$ and the magnetic field $B$. 
To gain further intuition about these bounds, we also give a rougher estimate in the case of a sufficiently regular, quasi-uniform $B$ configuration with homogeneous derivatives, in the sense discussed at the end of Sec~\ref{nonuniformBmatch}.

The main idea for obtaining our result is to express the time-evolution of 
\be
W =  H - \mu B
\ee
in terms of the ``interaction picture'' with respect to the $\mu B$ term. That is, define
\be
S(t) := e^{i\mu B t/\hbar}
\ee
and
\be\label{wtWdef}
\wt W(t) := S(t) W S^\dag(t)\,.
\ee
Unlike the Heisenberg picture, in which the operators evolve with respect to $H$, the $\mu B$ Hamiltonian will evolve operators in a ``slow time scale'' allowing a direct application of the adiabatic theorem. In particular, note that for a constant $B$ field, the interaction picture operators will be constant in time, while their Heisenberg picture counterparts will in general oscillate at frequency $\Omega$.

Since operator norms appear throughout this appendix, it is worth introducing the relevant notation. 
If $\scr O : \ca K \to \ca L$ is a linear operator between normed spaces, its norm is defined by
\be
{|\!|\scr O|\!|}_{\ca K\to\ca L} := \sup_{\psi\in\ca K \backslash\{0\}}\frac{{|\scr O\psi|}_{\ca L}}{\,{|\psi|}_{\ca K}} \,,
\ee
where ${|\cdot|}_{\ca K}$ and ${|\cdot|}_{\ca L}$ are the norms on $\ca K$ and $\ca L$, respectively.
The space of operators $\scr O : \ca K \to \ca L$ with finite norm is denoted by $\ca B(\ca K, \ca L)$. If $\ca K = \ca L$, we write simply $\ca B(\ca K) = \ca B(\ca K, \ca L)$.
If $\ca L$ is the Hilbert space $\ca H$, $\scr O$ is an operator defined on a domain $\text{Dom}(\scr O) \subset \ca H$, and $\ca K = \text{Dom}(\scr O)$ (with its norm inherited from $\ca H$), the subscripts will be omitted, that is,
\be
{|\!|\scr O|\!|} := {|\!|\scr O|\!|}_{\text{Dom}(\scr O)\to\ca H}
\ee
and $|\cdot| := {|\cdot|}_{\ca H}$.
Given an operator $A :\ca D \to \ca H$, where $\ca D \subset \ca H$, define the \emph{graph norm of $A$} as
\be
|\psi|_A := \sqrt{|\psi|^2 + |A\psi|^2} \,.
\ee
Define $\ca D_A$ to be $\ca D$ equipped with the graph norm of $A$. 
For operators $\scr O : \ca D \to \ca H$, we denote
\be
{|\!|\scr O|\!|}_A := {|\!|\scr O|\!|}_{\ca D_A\to\ca H} = \sup_{\psi\in \ca D \backslash\{0\}}\frac{|\scr O\psi|}{\,\,{|\psi|}_A} \,.
\ee
Since the graph norm $|\cdot|_A$ is stronger than the Hilbert space norm $|\cdot|$, $\scr O$ can belong to $\ca B(\ca D_A, \ca H)$ even if it is not bounded with respect to $|\!|\cdot|\!|$.

We consider the following assumptions:
\begin{assumption}
\end{assumption}
\begin{enumerate}[label=(\roman*)]
\item \emph{Suppose that $H$ is a
self-adjoint, bounded from below operator on a dense domain $\ca D \subset \ca H$, and that $B$ is a bounded self-adjoint operator on $\ca H$ satisfying $B\ca D \subset \ca D$.}\footnote{The microscopic Hamiltonian $H = \frac{1}{2m}\! \left(\pi_1^2 + \pi_2^2\right)$, with $\pi_i = p_i -  A_i$, satisfies this assumption. Since $H$ acts as a second-order differential operator, a simple sufficient condition for $B\ca D \subset \ca D$ is that $B$ and all its spatial derivatives up to second order are bounded.\label{assumpiapp}}

\item \emph{Suppose that the spectrum of $W := H - \mu B$ is decomposable into a subset $I$ satisfying
\[
I\subset [r,s]
\]
with $s>r$, and a complement satisfying 
\[
\text{Spec}(W) \backslash I \subset (-\infty, r - \Delta]\cup [s + \Delta, \infty)
\]
for some $\Delta > 0$. (In this manner, $I$ is separated by the gap $\Delta$ from the rest of the spectrum.)
Define $\ca V \subset \ca H$ to be the vector subspace associated with the $I$ part of the spectral decomposition of $W$.}
\end{enumerate}
Now we prove a preliminary result necessary for applying the adiabatic theorem:
\begin{lemma}\label{assumpiii}
Under assumption (i), the ``interaction picture'' operator $\wt W(t)$ defined in \eqref{wtWdef} is self-adjoint on $\ca D$, and uniformly bounded from below, for all $t\in\bb R$. Furthermore, the curve $t \mapsto \wt W(t)$ is a real-analytic map from $\bb R$ into $\ca B(\ca D_W, \ca H)$.\footnote{An operator-valued curve $A :\bb R\to \ca B(\ca D_W,\ca H)$ is real-analytic on $\bb R$ if, for each $t_0\in\bb R$, there exist operators $A_n(t_0) \in \ca B(\ca D_W,\ca H)$, with $n\in \bb N \cup \{0\}$, such that the series $\sum_{n\ge 0} \frac{1}{n!}A_n(t_0) (t-t_0)^n$ converges to $A(t)$ in the operator norm $|\!|\cdot|\!|_{\ca D_W\to\ca H}$, in a neighborhood of $t_0$.}
In particular, $t \mapsto \wt W(t)$ is twice continuously differentiable as a map from $\bb R$ into $\ca B(\ca D_W, \ca H)$,\footnote{An operator-valued curve $t \mapsto A(t)$ is continuous as a map from $\bb R$ into $\ca B(\ca D_W, \ca H)$ if, for each $t_0\in \bb R$, $\lim_{t\to t_0}|\!|A(t) - A(t_0)|\!|_{\ca D_W\to\ca H} = 0$. The curve is first differentiable at $t_0$ if there exists an operator $\dot{A}(t_0) \in \ca B(\ca D_W,\ca H)$ such that
\[
\lim_{t\to t_0}\left|\!\left| \frac{A(t) - A(t_0)}{t - t_0} - \dot{A}(t_0) \right|\!\right|_{\ca D_W\to\ca H} = 0\,.
\]
The curve is first continuously differentiable as a map from $\bb R$ to $\ca B(\ca D_W,\ca H)$ if it is first differentiable at every $t_0$ and the curve $t \mapsto \dot{A}(t)$ is continuous. The curve $t \mapsto A(t)$ is twice continuously differentiable if $t \mapsto \dot{A}(t)$ is first differentiable at every $t$ and $t \mapsto \ddot{A}(t)$ is continuous.}
and given any $T\in\bb R^+$, $\wt W(t)$, $\dot{\wt W}(t)$ and $\ddot{\wt W}(t)$ are uniformly bounded for $0\le t\le T$.
\end{lemma}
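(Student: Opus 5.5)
The plan is to exploit the fact that $S(t)=e^{i\mu Bt/\hbar}$ is a \emph{norm-continuous} (indeed entire) unitary group, since $B$ is bounded. First, self-adjointness and the lower bound. Since $B$ is bounded and self-adjoint, $W=H-\mu B$ is self-adjoint on $\ca D$ by the Kato--Rellich theorem (a bounded perturbation), and $W\ge \inf\text{Spec}(H)-\mu|\!|B|\!|$. Because $S(t)$ is unitary, $\wt W(t)=S(t)WS^\dag(t)$ is self-adjoint on $S(t)\ca D$ with the same spectrum, hence with the same lower bound for all $t$. We then need $S(t)\ca D=\ca D$. Here I would use $B\ca D\subset\ca D$. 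By the closed graph theorem, $B$ restricted to $\ca D$ is bounded on $\ca D_H$, equivalently on $\ca D_W$, since the graph norms of $H$ and $W$ are equivalent because $\mu B$ is bounded. Call this bound $b:=|\!|B|\!|_{\ca D_W\to\ca D_W}$. Then the exponential series $\sum_n (i\mu t/\hbar)^nB^n/n!$ converges in $\ca B(\ca D_W)$, with norm at most $e^{\mu b|t|/\hbar}$. Its limit agrees with $S(t)$ on $\ca D$, so $S(t)\ca D\subset\ca D$, and applying this at $-t$ gives equality.

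Second, analyticity. Write $\wt W(t)=W+\mu\big(B-S(t)BS^\dag(t)\big)+[S(t),H]S^\dag(t)$. A cleaner route uses the commutator expansion $\wt W(t)=\sum_n \frac{(i\mu t/\hbar)^n}{n!}\mathrm{ad}_B^n(W)$, where $\mathrm{ad}_B(X)=[B,X]$. The $n=0$ term is $W$, which lies in $\ca B(\ca D_W,\ca H)$ with norm at most $1$. For $n\ge1$, the term $\mathrm{ad}_B^n(W)=\sum_k\binom{n}{k}(-1)^{n-k}B^kWB^{n-k}$ maps $\ca D_W\to\ca H$ with norm at most $\sum_k\binom nk|\!|B|\!|^k b^{\,n-k}=(|\!|B|\!|+b)^n$. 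Here I use that $B^{n-k}$ is bounded on $\ca D_W$, that $W:\ca D_W\to\ca H$ has norm at most $1$, and that $B^k$ is bounded on $\ca H$. The series therefore converges absolutely in $\ca B(\ca D_W,\ca H)$ for all $t$, with radius of convergence infinite. The same expansion around any $t_0$, namely $\wt W(t)=S(t-t_0)\wt W(t_0)S^\dag(t-t_0)$ with $\wt W(t_0)$ having the same structure, gives real-analyticity at every point. Justifying that the series actually equals $S(t)WS^\dag(t)$ on $\ca D$ follows by multiplying the two convergent exponential series in $\ca B(\ca D_W)$ and $\ca B(\ca H)$ and resumming (Cauchy product).

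Third, the derivatives. Termwise differentiation of the everywhere-convergent series gives $\dot{\wt W}(t)=\frac{i\mu}{\hbar}S(t)[B,W]S^\dag(t)$ and $\ddot{\wt W}(t)=-\frac{\mu^2}{\hbar^2}S(t)[B,[B,W]]S^\dag(t)$. Both are continuous in $\ca B(\ca D_W,\ca H)$ with norms bounded by $e^{C|t|}$-type constants, so they are uniformly bounded on $[0,T]$, and the same holds for $\wt W(t)$ itself. The main obstacle is the step $B\in\ca B(\ca D_W)$. The closed graph argument requires that $B|_{\ca D}$ be closed as an operator in $\ca D_H$: if $\psi_k\to\psi$ in graph norm and $B\psi_k\to\phi$ in graph norm, then $\phi=B\psi$ by the boundedness of $B$ on $\ca H$, which gives closedness. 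So the argument should go through.
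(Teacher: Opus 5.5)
Your proposal is correct and follows the paper's proof essentially step for step: you use Kato--Rellich for $W$, then the closed graph theorem to get $B|_{\ca D}\in\ca B(\ca D_W)$, then the exponential series in $\ca B(\ca D_W)$ to show $S(t)\ca D=\ca D$. The only difference is cosmetic: you prove analyticity through the explicit commutator series $\sum_n \frac{(i\mu t/\hbar)^n}{n!}\mathrm{ad}_B^n(W)$, which is just the resummed Cauchy product underlying the paper's remark that $\wt W(t)=S(t)W\un S(-t)$ is a product of real-analytic maps into $\ca B(\ca H)$, $\ca B(\ca D_W,\ca H)$ and $\ca B(\ca D_W)$.
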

\begin{proof}
Since $H$ is self-adjoint on $\ca D$ and $B$ is self-adjoint and bounded on $\ca H$, it follows from \emph{Kato-Rellich theorem} that $W = H - \mu B$ is self-adjoint on $\ca D$.
As $H$ is bounded from below, and $B$ is bounded, $W$ is bounded from below.
Since $\wt W(t) = S(t)WS^\dag(t)$ and $S(t)$ is unitary, $\wt W(t)$ is self-adjoint on the domain $S(t)\ca D$. Moreover, as the spectrum of $\wt W(t)$ is the same as that of $W$, it is bounded from below by a $t$-independent value.
We want to show that the domain of $\wt W(t)$ is actually $\ca D$, so in what follows we prove that $S(t)\ca D = \ca D$. 

As $W$ is self-adjoint on $\ca D$, its graph $\{(\psi, W\psi), \psi \in \ca D\}$ is closed in $\ca H \oplus \ca H$. Equivalently, $\ca D_W$ is closed with respect to the graph norm $|\cdot|_W$, and thus $\ca D_W$ is a Banach space.
Let $\un B$ denote the restriction of $B$ to $\ca D$. Because $B\ca D \subset \ca D$, $\un B$ can be regarded as a map from $\ca D_W$ into $\ca D_W$,
\be
\un B : \ca D_W \to \ca D_W\,.
\ee
We show next that $\un B \in \ca B(\ca D_W)$, i.e., $\un B$ is bounded on $\ca D_W$. Let $\psi_n$ be a sequence in $\ca D_W$ such that 
\ba
\psi_n &\to \psi \\
\un B \psi_n &\to \phi
\ea
converge in $\ca D_W$. Since $|\cdot|_W$ is stronger than the Hilbert space norm $|\cdot|$, both convergences also hold in $\ca H$. In particular, as $B$ is bounded on $\ca H$, $\psi_n \to \psi$ implies that $B\psi_n \to B\psi$. Since $\un B = B$ on $\ca D_W$, it follows that $\un B\psi_n \to \un B\psi$.
By uniqueness of the limit, $\phi = \un B\psi$. Therefore, the graph of $\un B$ is closed and, by the \emph{closed graph theorem}, $\un B$ is bounded on $\ca D_W$.
Now consider the sequence of partial sums
\be
\un S_N(t) = \sum_{n=0}^N \frac{1}{n!}\left(\frac{i\mu t}{\hbar}\right)^n \!\un B^n\,.
\ee
For each $N$, $\un S_N(t) \in \ca B(\ca D_W)$. Moreover, 
\be
\sum_{n=0}^\infty \Big|\!\Big| \frac{1}{n!}\left(\frac{i\mu t}{\hbar}\right)^n \!\un B^n \Big|\!\Big|_W \le e^{\mu|t| |\!|\un B|\!|_W/\hbar} < \infty
\ee
so the sequence $\un S_N(t)$ is Cauchy, and because $\ca B(\ca D_W)$ is Banach, it converges to a bounded operator $\un S(t) \in \ca B(\ca D_W)$.
But since $B$ is also bounded on $\ca H$, the partial sums
\be
S_N(t) = \sum_{n=0}^N \frac{1}{n!}\left(\frac{i\mu t}{\hbar}\right)^n \! B^n
\ee
converge on $\ca B(\ca H)$ to $S(t)$. For any fixed $\psi \in \ca D$, the sequences $\un S_N(t)\psi$ and $S_N(t)\psi$ are equal and thus must converge to the same vector, which implies
\be
\un S(t)\psi = S(t)\psi\,.
\ee
Since $\un S(t)\psi \in \ca D$, we conclude that $S(t)\psi \in \ca D$. As this is true for any $\psi \in \ca D$, we obtain
\be\label{StcaDsubsetcaD}
S(t)\ca D \subset \ca D\,.
\ee
In fact, this is true for any $t$, so we also have $S(-t)\ca D \subset \ca D$. Since $S(t)S(-t) = 1$, we have $\ca D \subset S(t)\ca D$, which in combination with \eqref{StcaDsubsetcaD} gives
\be
S(t)\ca D = \ca D\,.
\ee
Thus, as we wished to show, $\wt W(t)$ is self-adjoint on $\ca D$.

The argument above also reveals that $S(t)$ is real-analytic as a map from $\bb R$ into $\ca B(\ca H)$ and $\un S(t)$ is real-analytic as a map from $\bb R$ into $\ca B(\ca D_W)$. Moreover since $\un S(t)\psi = S(t)\psi$ for every $\psi$ in the domain of $\wt W(t)$, we can write
\be
\wt W(t) = S(t)WS^\dag(t) = S(t)WS(-t) = S(t)W\un S(-t)\,.
\ee
On the right-hand side, we have the product of real-analytic bounded maps 
\be
\ca B(\ca H) \cdot \ca B(\ca D_W,\ca H) \cdot \ca B(\ca D_W) \to \ca B(\ca D_W, \ca H)
\ee
and therefore $\wt W(t)$ is a real-analytic map from $\bb R$ into $\ca B(\ca D_W, \ca H)$.
In particular, $\wt W(t)$ is infinitely continuously differentiable as a map from $\bb R$ into $\ca B(\ca D_W, \ca H)$. Restricting to any compact interval of time, say $0\le t \le T$, $\wt W(t)$ and any finite number of its derivatives are uniformly bounded.
\end{proof}

Let us define a few quantities relevant for stating the adiabatic theorem. 
Let 
\be
U(t) := e^{-iHt/\hbar}
\ee
be the unitary operator implementing the dynamical evolution of $H$, and let 
\be
\wt U(t) = S(t) U(t)
\ee
be the ``interaction picture'' unitary evolution operator, which satisfies the dynamical equation
\be
i\hbar \dot{\wt U}(t) = \wt W(t) \wt U(t)\,.
\ee
Let $P$ be the projector to $\ca V$ and $Q = 1 - P$ the complementary projector. Since $\wt W(t)$ evolves unitarily, its spectrum is time-independent. In particular, the subset $I$ remains part of the spectrum of $\wt W(t)$, and it remains separated from the rest of the spectrum by the gap $\Delta$ at all times. Moreover, the projector to the vector subspace associated with $I$ in the spectral decomposition of $\wt W(t)$ is given by
\be
\wt P(t) = S(t) P S^\dag(t)\,.
\ee
Similarly, $\wt Q(t) = S(t) Q S^\dag(t)$.
Now define the ``deformed evolution operator'' $\wt U_*(t)$ by the equation
\be
i\hbar \dot{\wt U}_*(t) = \left(\wt W(t) + i\hbar [\dot{\wt P}(t), \wt P(t)]\right) \wt U_*(t)
\ee
with initial condition $\wt U_*(0) = 1$. 
Finally, define the time-dependent operator
\be\label{frFdef}
\F(t) := \frac{\hbar}{2\pi i} \oint_{\ca C}\!dz\,  \wt Q(t) R(z;t) \dot R(z;t) + \text{adj.}
\ee
where ``$\text{adj.}$'' denotes the adjoint of the first term, $\ca C$ is a contour in $\bb C \backslash \text{Spec}(W)$ encircling $I$ but no part of $\text{Spec}(W) \backslash I$, and $R(z;t)$ is the resolvent operator associated with $\wt W(t)$,
\be
R(z;t) := \big(\wt W(t) - z\big)^{-1}\,.
\ee
With this, we are ready to restate (a slightly particularized version of) Theorem~2.2 of \cite{teufel2003adiabatic} in a language directly applicable to our context:
\begin{theorem}[Kato's adiabatic inequality]\label{TheoremAdiabatic}
Let $H$, $B$ and $W$ satisfy assumptions (i) and (ii) above, and consequently $\wt W(t)$ have the properties established in Lemma~\ref{assumpiii}. Then, for $0\le t \le T$, with any $T\in \bb R^+$, the deformed evolution operator $\wt U_*(t)$ satisfies the exact intertwining relation
\be\label{intertprop}
\wt U_*(t)P = \wt P(t) \wt U_*(t)
\ee
and moreover we have the following bound
\be\label{Katobound}
\big|\!\big| \big(\wt U(t) - \wt U_*(t)\big) P \big|\!\big| \le \Theta(t)
\ee
with
\be
\Theta(t) := |\!| \F(0) |\!| + |\!| \F(t) |\!| + \int_0^t\!d\tau\left( |\!| \dot \F(\tau) |\!| + |\!| \F(\tau) [\dot{\wt P}(\tau), \wt P(\tau)] |\!| \right)\,.
\ee
\end{theorem}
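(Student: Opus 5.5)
The plan is to prove Theorem~\ref{TheoremAdiabatic} by the standard Kato argument: first the exact intertwining relation, then an integration by parts that exploits the spectral gap. I will work throughout with the regularity provided by Lemma~\ref{assumpiii}.

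\emph{Step 1: intertwining relation.}
Since $\wt P(t)=S(t)PS^\dag(t)$ and $S$ is real-analytic in $\ca B(\ca H)$, $\wt P$ is differentiable in norm. Differentiating $\wt P^2=\wt P$ gives $\wt P\dot{\wt P}\wt P=0$. Together with $[\wt W(t),\wt P(t)]=0$, this lets me check that $\wt P(t)$ satisfies
\[
i\hbar\,\dot{\wt P}=\big[\wt W+i\hbar[\dot{\wt P},\wt P],\wt P\big].
\]
Hence $\wt U_*^\dag(t)\wt P(t)\wt U_*(t)$ has zero derivative and equals $P$ at $t=0$, which is \eqref{intertprop}. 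Existence of $\wt U_*$ as a unitary propagator follows because $i\hbar[\dot{\wt P},\wt P]$ is a bounded, norm-continuous perturbation of the family $\wt W(t)$. That family has the fixed domain $\ca D$ and is $C^2$ in $\ca B(\ca D_W,\ca H)$, so Kato's evolution theorem applies.

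\emph{Step 2: Duhamel formula.}
I differentiate $\wt U^\dag(t)\wt U_*(t)$ and use the intertwining relation to obtain
\[
\big(\wt U_*(t)-\wt U(t)\big)P=\wt U(t)\int_0^t d\tau\,\wt U^\dag(\tau)\,[\dot{\wt P}(\tau),\wt P(\tau)]\,\wt P(\tau)\,\wt U_*(\tau)P .
\]
The integrand contains $[\dot{\wt P},\wt P]\wt P=\dot{\wt P}\wt P=\wt Q\dot{\wt P}\wt P$, up to adjoint terms, which is off-diagonal with respect to the splitting $\wt P$, $\wt Q$. The key identity is that this off-diagonal operator is a commutator with $\wt W$, namely $\dot{\wt P}\wt P+\text{adj.}=[\wt W,X]$. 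Here $X$ is built from the resolvent as
\[
X=\frac{1}{2\pi i}\oint_{\ca C}R\,\dot{\wt P}\,R\,dz ,
\]
or equivalently via $\wt Q R\dot R$. This is exactly why $\F=\hbar X$ is defined as in \eqref{frFdef}. The identity is checked by using $\dot{\wt P}=-\frac{1}{2\pi i}\oint R\dot{\wt W}R\,dz$ and the resolvent identity. The gap $\Delta$ enters here through the existence of the contour $\ca C$.

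\emph{Step 3: integration by parts.}
Because $i\hbar\,\partial_\tau(\wt U^\dag \F\wt U_*)=\wt U^\dag\big([\F,\wt W]+\dots\big)\wt U_*+i\hbar\,\wt U^\dag\dot\F\wt U_*$ up to the deformation term, the commutator $[\wt W,\F]$ inside the Duhamel integrand can be written as a total derivative plus remainders. Those remainders are $\dot\F$ and $\F[\dot{\wt P},\wt P]$, the latter coming from the extra generator in $\wt U_*$. Integrating from $0$ to $t$ and using unitarity of $\wt U$ and $\wt U_*$ gives the boundary terms $\|\F(0)\|+\|\F(t)\|$ plus $\int_0^t(\|\dot\F\|+\|\F[\dot{\wt P},\wt P]\|)$, which is $\Theta(t)$.

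\emph{Main obstacle.}
I expect the hardest part to be making the manipulations of Step 3 rigorous for unbounded $\wt W$. One has to show that $\F(t)$ maps $\ca H$ into $\ca D$, and that $\F$ and $\dot\F$ are bounded and norm-continuous on $[0,T]$. I would first try to derive this from $R(z;t)\in\ca B(\ca H,\ca D_W)$ uniformly for $z\in\ca C$ (compactness of $\ca C$ and the gap), together with $\dot R=-R\dot{\wt W}R$, $\ddot{\wt W}\in\ca B(\ca D_W,\ca H)$ and the uniform bounds of Lemma~\ref{assumpiii}. The contour $\ca C$ is noncompact if $I$ is unbounded, but here $I\subset[r,s]$ is bounded, so no issue arises. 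Once this is in place, the remaining derivative computations hold strongly on $\ca D$ and extend by density, exactly as in Theorem~2.2 of \cite{teufel2003adiabatic}.
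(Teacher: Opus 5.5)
The paper gives no proof of its own here; it restates Theorem~2.2 of \cite{teufel2003adiabatic}, and your outline is exactly the Kato--Teufel argument behind that theorem, so it is correct and matches: intertwining from $\frac{d}{dt}\big(\wt U_*^\dag\wt P\wt U_*\big)=0$, then the Duhamel formula, then writing $[\dot{\wt P},\wt P]$ as a commutator with $\wt W$ via the off-diagonal $\F=\hbar X$ (which holds by uniqueness of the Sylvester solution), then integrating by parts. Fix two sign slips, neither of which affects the norms in $\Theta(t)$: with $R=(\wt W-z)^{-1}$, your $X$ and a positively oriented $\ca C$, the identity is $[\dot{\wt P},\wt P]=\dot{\wt P}\wt P-\wt P\dot{\wt P}=[X,\wt W]$ (an anti-self-adjoint combination, whereas $\dot{\wt P}\wt P+\text{adj.}=\dot{\wt P}$ is self-adjoint), and $\dot{\wt P}=+\frac{1}{2\pi i}\oint_{\ca C}R\,\dot{\wt W}R\,dz$.
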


From this theorem, it immediately follows:
\begin{corollary}\label{CorollaryAdiabatic}
Particularizing further to our context, we have also
\be\label{QUP}
|\!| Q U(t) P |\!| \le \Theta(t)
\ee
and, with $P(t) = U^\dag(t)PU(t)$,
\be\label{PtP0bound}
|\!| P(t) - P(0) |\!| \le \varepsilon_1 + \varepsilon_2 t
\ee
for constants $\varepsilon_1$ and $\varepsilon_2$ given by
\be\label{varepsdef}
\varepsilon_1 := 4 |\!|\F(0)|\!| \quad\text{and}\quad
\varepsilon_2 := 2|\!| \dot \F(0) |\!| + 2|\!| \F(0) [\dot{\wt P}(0), \wt P(0)] |\!|\,.
\ee
\end{corollary}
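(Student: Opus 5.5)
The plan is to pass everything to the ``interaction picture'' and use the intertwining relation \eqref{intertprop} to kill the adiabatic part. Then I would exploit the covariance of all relevant objects under $S(t)$, which makes the time-dependent norms in $\Theta(t)$ constant.

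\emph{Step 1: the bound \eqref{QUP}.} Since $U(t)=S^\dag(t)\wt U(t)$ and $Q S^\dag(t)=S^\dag(t)\wt Q(t)$, unitarity of $S(t)$ gives
\be
|\!|QU(t)P|\!| = |\!|\wt Q(t)\wt U(t)P|\!|\,.
\ee
I would split $\wt U(t)P=\wt U_*(t)P+(\wt U(t)-\wt U_*(t))P$. By \eqref{intertprop}, $\wt Q(t)\wt U_*(t)P=\wt Q(t)\wt P(t)\wt U_*(t)=0$. Since $|\!|\wt Q(t)|\!|\le 1$, \eqref{Katobound} then gives $|\!|QU(t)P|\!|\le\Theta(t)$.

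\emph{Step 2: reducing $\Theta(t)$ to time-zero data.} Because $\wt W(t)=S(t)WS^\dag(t)$, we have $R(z;t)=S(t)R(z;0)S^\dag(t)$ and $\wt Q(t)=S(t)QS^\dag(t)$. Differentiating, $\dot R(z;t)=S(t)\dot R(z;0)S^\dag(t)$, where $\dot R(z;0)=\tfrac{i\mu}{\hbar}[B,R(z;0)]$. The same covariance holds for $\dot{\wt P}(t)$, and hence for $[\dot{\wt P}(t),\wt P(t)]$. Inserting this into \eqref{frFdef} gives $\F(t)=S(t)\F(0)S^\dag(t)$ and likewise for $\dot\F(t)$ and for $\F(t)[\dot{\wt P}(t),\wt P(t)]$. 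The $S(t)$ factor is $t$-independent along the contour, so it can be pulled out of the integral. All the norms in $\Theta$ are therefore constant, and
\be
\Theta(t)=2|\!|\F(0)|\!|+t\left(|\!|\dot\F(0)|\!|+|\!|\F(0)[\dot{\wt P}(0),\wt P(0)]|\!|\right).
\ee
Lemma~\ref{assumpiii} guarantees that these derivatives exist as bounded objects, so the manipulation is legitimate.

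\emph{Step 3: the bound \eqref{PtP0bound}.} I would decompose
\be
P(t)-P=P(t)Q-Q(t)P\,,\qquad Q(t)=U^\dag(t)QU(t)\,.
\ee
This gives $|\!|P(t)-P|\!|\le|\!|P(t)Q|\!|+|\!|Q(t)P|\!|$. The second term equals $|\!|QU(t)P|\!|\le\Theta(t)$ by Step 1.

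The first term is the main obstacle. It is a backward-in-time quantity, namely $|\!|P(t)Q|\!| = |\!|QP(t)|\!| = |\!|PU(t)Q|\!|$. The theorem is stated only for $0\le t\le T$ and only controls transitions out of $\ca V$. I would avoid re-running Kato's argument backwards and instead use Kato's lemma on pairs of orthogonal projections: if $|\!|Q(t)P|\!|<1$, then $|\!|P(t)Q|\!|=|\!|Q(t)P|\!|$. In that case $|\!|P(t)-P|\!|\le 2\Theta(t)$. If instead $\Theta(t)\ge 1$, the bound holds trivially, since the difference of two projections has norm at most $1$.

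Either way $|\!|P(t)-P(0)|\!|\le 2\Theta(t)$. Combined with Step 2, this yields exactly $\varepsilon_1=4|\!|\F(0)|\!|$ and $\varepsilon_2=2|\!|\dot\F(0)|\!|+2|\!|\F(0)[\dot{\wt P}(0),\wt P(0)]|\!|$, as in \eqref{varepsdef}.
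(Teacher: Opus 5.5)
Your Steps 1 and 2 are correct and coincide with the paper's argument. The gap is in Step 3. The lemma you invoke, that $|\!|Q(t)P|\!|<1$ alone implies $|\!|P(t)Q|\!|=|\!|Q(t)P|\!|$, is false. For two orthogonal projections one has $|\!|P(t)-P|\!|=\max\{|\!|Q(t)P|\!|,|\!|P(t)Q|\!|\}$. The two one-sided quantities coincide only under the hypothesis $|\!|P(t)-P|\!|<1$, i.e.\ when \emph{both} are $<1$, which is precisely what you are trying to prove.

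Here is a counterexample. If $\text{Ran}\,P\subsetneq\text{Ran}\,P(t)$, then $Q(t)P=0$, while $P(t)Q=P(t)-P$ is a nonzero projection of norm $1$. This can happen even for unitarily conjugate projections in infinite dimensions: conjugate the projection onto the closed span of $\{e_n\}_{n\ge1}$ in $\ell^2(\bb Z)$ by the bilateral shift. So the bound on leakage out of $\ca V$, $|\!|QU(t)P|\!|\le\Theta(t)$, does not by itself control the leakage into $\ca V$, namely $|\!|P(t)Q|\!|=|\!|PU(t)Q|\!|$.

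The missing ingredient is that $\wt U_*(t)$ is unitary. Its generator $\wt W(t)+i\hbar[\dot{\wt P}(t),\wt P(t)]$ is self-adjoint, because $[\dot{\wt P},\wt P]$ is anti-self-adjoint. The intertwining relation therefore upgrades to $\wt P(t)=\wt U_*(t)P\wt U_*^\dag(t)$, which controls transitions in both directions. Concretely, using $Q\wt U^\dag(t)\wt U(t)P=QP=0$,
\[
|\!|P(t)Q|\!|=|\!|\wt P(t)\wt U(t)Q|\!|=|\!|Q\wt U^\dag(t)\wt U_*(t)P|\!|=|\!|Q\wt U^\dag(t)\big(\wt U_*(t)-\wt U(t)\big)P|\!|\le\Theta(t),
\]
and your decomposition then gives $2\Theta(t)$.

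The paper does the same thing in one stroke. It writes $\wt UP\wt U^\dag-\wt U_*P\wt U_*^\dag=(\wt U-\wt U_*)P\wt U^\dag+\wt U_*P(\wt U^\dag-\wt U_*^\dag)$ and bounds each piece by $\Theta(t)$, the second via its adjoint. It then uses $\wt U_*P\wt U_*^\dag=\wt P(t)$ together with $|\!|P(t)-P(0)|\!|=|\!|\wt U(t)P\wt U^\dag(t)-\wt P(t)|\!|$.

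Your route could alternatively be rescued by a continuity argument in $t$. The map $s\mapsto P(s)$ is norm-continuous, since $[H,P]=\mu[B,P]$ is bounded, and $\Theta$ is nondecreasing. But that argument has to be made explicitly, and the unitarity of $\wt U_*$ is the simpler fix.
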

\begin{proof}
To obtain \eqref{QUP}, first notice using the intertwining property \eqref{intertprop} that
\be
\wt Q(t) \wt U_*(t) P = \wt Q(t) \wt P(t) \wt U_*(t) = 0
\ee
so
\be
|\!|\wt Q(t) \wt U(t) P |\!| = \big|\!\big| \wt Q(t)\left(\wt U(t) - \wt U_*(t)\right) P \big|\!\big| \le |\!|\wt Q(t)|\!| \big|\!\big| \left(\wt U(t) - \wt U_*(t)\right) P \big|\!\big| \le \Theta(t)
\ee
where we used that $|\!|\wt Q(t)|\!| \le 1$ and the bound \eqref{Katobound}. 
Finally, we have
\be
|\!|Q U(t) P |\!| = |\!|S(t)QS^\dag(t) S(t)U(t) P |\!| = |\!|\wt Q(t) \wt U(t) P |\!| \le \Theta(t)
\ee
where we used that multiplication from the right or left by a unitary does not affect the operator norm.

Before showing \eqref{PtP0bound}, let us prove that the following auxiliary inequality holds
\be\label{wtUPwtUi}
|\!|\wt U(t) P \wt U^\dag(t) - \wt P (t) |\!| \le 2 \Theta(t)\,.
\ee
Notice that
\be
\wt U(t) P \wt U^\dag(t) - \wt U_*(t) P \wt U_*^\dag(t) = \left(\wt U(t) - \wt U_*(t)\right) P \wt U^\dag(t) + \wt U_*(t) P \left(\wt U^\dag(t) -\wt U_*^\dag(t)\right)
\ee
and consequently
\be
|\!|\wt U(t) P \wt U^\dag(t) - \wt U_*(t) P \wt U_*^\dag(t) |\!| \le \big|\!\big| \left(\wt U(t) - \wt U_*(t)\right) P \big|\!\big| + \big|\!\big| P \left(\wt U^\dag(t) -\wt U_*^\dag(t)\right) \big|\!\big| \le 2\Theta(t)
\ee
where we have used the triangle inequality and the self-adjointness of $P$. Then, using the intertwining property \eqref{intertprop}, we see that $\wt U_*(t) P \wt U_*^\dag(t) = \wt P(t) \wt U_*(t) \wt U_*^\dag(t) = \wt P(t)$, establishing \eqref{wtUPwtUi}. 

Finally, to obtain \eqref{PtP0bound}, notice that
\ba
|\!| P(t) - P(0) |\!| &= |\!| U^\dag(t)P U(t) - P |\!| \no
&= |\!| U(t)P U^\dag(t) - P |\!| \no
&= \big|\!\big| S^\dag(t) \left( \wt U(t) P \wt U^\dag(t) - S(t) P S^\dag(t) \right) S(t) \big|\!\big| \no
&= \big|\!\big| \wt U(t) P \wt U^\dag(t) - \wt P(t) \big|\!\big| \no
&\le 2 \Theta(t)\,.
\label{PtP02Theta}
\ea
The last thing to show is that $\Theta(t)$ is a first-order polynomial in time. This follows from the fact that all time evolutions involved are unitary, and therefore all norms appearing in $\Theta$ are preserved in the evolution. In particular, $\F(t) = S(t) \F(0) S^\dag(t)$, which implies that 
\be
|\!|\F(t)|\!| + |\!|\F(0)|\!| = 2|\!|\F(0)|\!|\,.
\ee
Also,
\be
\int_0^t\!d\tau\left( |\!| \dot \F(\tau) |\!| + |\!| \F(\tau) [\dot{\wt P}(\tau), \wt P(\tau)] |\!| \right) = \left(|\!| \dot \F(0) |\!| + |\!| \F(0) [\dot{\wt P}(0), \wt P(0)] |\!|\right) t\,,
\ee
which gives the linear-in-time contribution.
Combining \eqref{PtP02Theta} with this form of $\Theta$ gives \eqref{PtP0bound}, with $\varepsilon_1$ and $\varepsilon_2$ defined in \eqref{varepsdef}.
\end{proof}

Our next goal is to find simpler bounds for $\varepsilon_1$ and $\varepsilon_2$, in terms of quantities more directly related to the $B$ field, so we can estimate their values. 
We begin by deriving an inequality for a Sylvester equation, which will be useful for all subsequent results. (The first part of the lemma below follows trivially from Lemma~4.4 of \cite{albeverio2011operator}, while the second part establishes an inequality that is known in the case of bounded operators \cite{bhatia1997and} but we generalize to the unbounded case.) 

\begin{lemma}\label{lemmaSylvester}
Let $A$ be a densely defined, closed (possibly unbounded) self-adjoint operator on a Hilbert space $\ca L$, and $B$ be a bounded 
self-adjoint operator on a Hilbert space $\ca K$. Assume that their spectra are contained in 
\be
\text{\emph{Spec}}(A)\subset (-\infty, r - \Delta]\cup [s+\Delta, \infty) \quad\text{and}\quad \text{\emph{Spec}}(B)\subset [r,s]
\ee
with $s>r$ and $\Delta > 0$.
If $Y$ is any bounded operator from $\ca K$ to $\ca L$, then the Sylvester equation
\be\label{sylvestereq}
AX - XB = -Y
\ee
has a unique operator solution $X \in \ca B(\ca K, \ca L)$ given by 
\be\label{Sylvestersol}
X = \frac{1}{2\pi i} \oint_{\ca C}\!dz\, (A - z)^{-1}Y(B - z)^{-1}
\ee
where $\ca C$ is a contour in $\bb C$, disjoint from $\text{\emph{Spec}}(A) \cup \text{\emph{Spec}}(B)$, encircling  $\text{\emph{Spec}}(B)$ but no part of $\text{\emph{Spec}}(A)$.
Moreover,
\be\label{Sylvesterineq}
{|\!| X |\!|}_{\ca K \to \ca L} \le \frac{1}{\Delta} {|\!| Y |\!|}_{\ca K \to \ca L}
\ee
\end{lemma}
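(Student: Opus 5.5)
The plan is to avoid estimating the contour integral \eqref{Sylvestersol} directly. Instead, I would construct the solution by a Neumann series centered at the midpoint of the interval $[r,s]$. That series gives existence, uniqueness and the sharp constant $1/\Delta$ in one stroke. The contour formula would then be identified with this solution only at the end, by uniqueness.

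Set $c := (r+s)/2$ and $\rho := (s-r)/2$. Since $B$ is bounded self-adjoint with $\text{Spec}(B)\subset[c-\rho,c+\rho]$, we have $|\!|B-c|\!|_{\ca K\to\ca K}\le\rho$. Since $A$ is self-adjoint with $\text{Spec}(A-c)\subset(-\infty,-\rho-\Delta]\cup[\rho+\Delta,\infty)$, the spectral theorem shows that $A-c$ is invertible. Its inverse $(A-c)^{-1}$ is a bounded operator on $\ca L$ with range $\text{Dom}(A)$ and norm $\le 1/(\rho+\Delta)$.

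Rewrite \eqref{sylvestereq} as $(A-c)X - X(B-c) = -Y$. Any $X\in\ca B(\ca K,\ca L)$ with range in $\text{Dom}(A)$ solving it then satisfies the fixed-point equation
\[
X = (A-c)^{-1}X(B-c) - (A-c)^{-1}Y .
\]
Conversely, any bounded solution of this fixed-point equation automatically maps into $\text{Dom}(A)$ and solves \eqref{sylvestereq}. The map $X\mapsto (A-c)^{-1}X(B-c)$ is a contraction on $\ca B(\ca K,\ca L)$ with constant $\rho/(\rho+\Delta)<1$. Hence the unique solution is
\[
X = -\sum_{k\ge0}(A-c)^{-(k+1)}\,Y\,(B-c)^k ,
\]
and this series converges in norm. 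Summing the geometric bound gives
\[
{|\!|X|\!|}_{\ca K\to\ca L}\le {|\!|Y|\!|}_{\ca K\to\ca L}\sum_{k\ge0}\frac{\rho^k}{(\rho+\Delta)^{k+1}} = \frac{1}{\Delta}\,{|\!|Y|\!|}_{\ca K\to\ca L},
\]
which is \eqref{Sylvesterineq}. Uniqueness follows the same way: a solution $X_0$ of the homogeneous equation satisfies $|\!|X_0|\!|\le(\rho/(\rho+\Delta))^n|\!|X_0|\!|$ for every $n$, so $X_0=0$.

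It remains to show that the contour integral \eqref{Sylvestersol} is a bounded solution; uniqueness then identifies it with the series. Take $\ca C$ to be a circle around $[r,s]$ of radius strictly between $\rho$ and $\rho+\Delta$. Along $\ca C$ both resolvents are norm-continuous and uniformly bounded, so the integral defines an element of $\ca B(\ca K,\ca L)$. To check the equation, I would write $A(A-z)^{-1} = 1 + z(A-z)^{-1}$, which is bounded and so justifies applying $A$ inside the integral since $A$ is closed. This gives
\[
AX - XB = \frac{1}{2\pi i}\oint_{\ca C}dz\,\Big[Y(B-z)^{-1} - (A-z)^{-1}Y\Big].
\]
The second term vanishes by Cauchy's theorem, because $(A-z)^{-1}$ is analytic inside $\ca C$. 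The first term equals $-Y$ by the Riesz–Dunford calculus, since $\ca C$ encloses all of $\text{Spec}(B)$.

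The main obstacle is the unbounded-domain bookkeeping. One must justify that $X$ maps into $\text{Dom}(A)$ and that $A$ may be moved inside the integral, and check that the fixed-point reformulation is genuinely equivalent to \eqref{sylvestereq} for possibly unbounded $A$. I would handle all of this through the closedness of $A$ and the boundedness of $(A-c)^{-1}$ and $A(A-z)^{-1}$, as sketched above.
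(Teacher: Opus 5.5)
Your proof is correct, and it takes a genuinely different route from the paper.

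\textbf{The paper's route.} The paper does not construct the solution itself. It obtains existence, uniqueness and the contour representation by passing to the adjoint equation $X^\dag A^\dag - B^\dag X^\dag = -Y^\dag$ and citing Lemma~4.4 of Albeverio--Motovilov. It then proves the norm bound separately, by an a priori estimate. After the same midpoint shift, it uses $|A_c\psi|\ge a|\psi|$ and $\|B_c\|\le b$ to get $|Y\phi|\ge a|X\phi|-b\|X\|$ for unit vectors $\phi$. A maximizing sequence then gives $\|Y\|\ge\Delta\|X\|$.

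\textbf{Your route.} You use the same two inputs, $\|(A-c)^{-1}\|\le 1/(\rho+\Delta)$ and $\|B-c\|\le\rho$, to make $X\mapsto (A-c)^{-1}X(B-c)$ a contraction. A single Neumann-series argument then gives existence, uniqueness among bounded operators with range in $\text{Dom}(A)$, and the constant $1/\Delta$, all at once. It also gives an explicit series for $X$. You then verify the contour formula directly and identify it with the series by uniqueness.

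\textbf{What each buys.} The paper's argument is shorter once the citation is granted. Its inequality is essentially the a priori form of your geometric-series estimate, but it only bounds a solution whose existence and boundedness are supplied elsewhere. Your argument is self-contained and avoids the adjoint manipulation with an unbounded $A$. It also shows that existence, and not only the bound, follows from those two norm conditions.

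One small remark: you fix $\ca C$ to be a circle. Your computation of $AX-XB$ only uses that $\ca C$ winds once around $\text{Spec}(B)$ and zero times around $\text{Spec}(A)$. It therefore applies verbatim to every admissible contour, as the statement requires.
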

\begin{proof}
Taking the adjoint of
\eqref{sylvestereq} gives
\be
X^\dag A^\dag - B^\dag X^\dag = -Y^\dag\,.
\ee
It immediately follows from Lemma~4.4 of \cite{albeverio2011operator} that this equation has a unique operator solution given by
\be
X^\dag = \frac{1}{2\pi i} \oint_{\ca C}\!dz\, (B^\dag - z)^{-1}(-Y^\dag)(A^\dag - z)^{-1}
\ee
where $X^\dag \in \ca B(\ca L, \ca K)$.
Taking the adjoint again, we obtain \eqref{Sylvestersol} as the unique operator solution of \eqref{sylvestereq}.\footnote{It is contained in this result that $\text{Ran}(X) \subset \text{Dom}(A)$.}
To prove the inequality \eqref{Sylvesterineq}, it is useful to shift the operators so that $[r,s]$ becomes symmetrical about the origin. That is, define
\be
A_c := A - c \quad\text{and}\quad B_c := B - c
\ee
with $c := (r+s)/2$. Notice that  
\be
\text{Spec}(A_c)\subset (-\infty, -a]\cup [a, \infty) \quad\text{and}\quad \text{Spec}(B_c)\subset [-b,b]
\ee
where $b := (s-r)/2$ and $a := b + \Delta$. Moreover, $X$ is solution of \eqref{sylvestereq} if and only if it is also solution of 
\be
A_c X - X B_c = - Y\,.
\ee
Note that for any $\psi \in \text{Dom}(A_c) =\text{Dom}(A) \subset \ca L$, $|A_c\psi |_{\ca L} \ge a|\psi|_{\ca L}$.
If $\phi \in \ca K$,
\be
|Y\phi|_{\ca L} = |A_cX\phi - XB_c\phi|_{\ca L} \ge |A_cX\phi|_{\ca L} - |XB_c\phi|_{\ca L} \ge a|X\phi|_{\ca L} - |XB_c\phi|_{\ca L}\,.
\ee
Restricting $\phi$ to be a unit vector, we have
\be
|XB_c\phi|_{\ca L} \le {|\!|X|\!|}_{\ca K\to\ca L} |B_c\phi|_{\ca K} \le b {|\!|X|\!|}_{\ca K\to\ca L}
\ee
so $|Y\phi|_{\ca L}  \ge a|X\phi|_{\ca L} - b {|\!|X|\!|}_{\ca K\to\ca L}$. 
Considering a sequence of unit vectors $\phi_n \in \ca K$ such that $|X\phi_n|_{\ca L}\to {|\!|X|\!|}_{\ca K\to\ca L}$, we obtain
\be
|\!|Y|\!|_{\ca K\to\ca L} \ge \sup_n |Y\phi_n|_{\ca L} \ge a {|\!|X|\!|}_{\ca K\to\ca L} - b {|\!|X|\!|}_{\ca K\to\ca L} = \Delta {|\!|X|\!|}_{\ca K\to\ca L} \,, 
\ee
proving the inequality.
\end{proof}

We now turn to the estimation of $\varepsilon_1$. 
First we establish that $\F(t)$ is block off-diagonal with respect to $\wt P(t)$ and $\wt Q(t)$.
Define,
\be
\F_\text{QP}(t) := \frac{\hbar}{2\pi i} \oint_{\ca C}\!dz\,  \wt Q(t) R(z;t) \dot R(z;t) \wt P(t)\,.
\ee
This operator only connects the $\wt P(t)$ subspace to the $\wt Q(t)$ subspace, i.e., $\F_\text{QP}(t) = \wt Q(t) \F_\text{QP}(t) \wt P(t)$.
Notice that
\ba
\F_\text{QP}(t) &= \frac{\hbar}{2\pi i} \oint_{\ca C}\!dz\,  \wt Q(t) R(z;t) \dot R(z;t) - \frac{\hbar}{2\pi i} \oint_{\ca C}\!dz\,  \wt Q(t) R(z;t) \dot R(z;t) \wt Q(t) \no
&= \frac{\hbar}{2\pi i} \oint_{\ca C}\!dz\,  \wt Q(t) R(z;t) \dot R(z;t) + \frac{\hbar}{2\pi i} \oint_{\ca C}\!dz\,  \left(\wt Q(t) R(z;t)^2 \wt Q(t)\right)\dot{\wt W}(t) \left( \wt Q(t) R(z;t) \wt Q(t)\right) \no
&= \frac{\hbar}{2\pi i} \oint_{\ca C}\!dz\,  \wt Q(t) R(z;t) \dot R(z;t)
\ea
where in the first line we used $\wt P(t) = 1 - \wt Q(t)$; in second line we used that
\be
\dot R(z;t) = - R(z;t) \dot{\wt W}(t) R(z;t)
\ee
and that $\wt Q(t) = \wt Q(t)^2$ commutes with $R(z;t)$; in the third line we used that the integrand of the second integral is analytic inside $\ca C$.
As $\F_\text{QP}^\dag(t) = \wt P(t) \F^\dag(t) \wt Q(t)$, we confirm that
\be\label{Fblockoff}
\F(t) = \F_\text{QP}(t) + \F_\text{QP}^\dag(t)
\ee
is block off-diagonal with respect to $\wt P(t)$ and $\wt Q(t)$.
Next, we show:
\begin{lemma}\label{lemmaFQPsylvester}
The operator $\F_\text{QP}(t)$, understood as an operator from $\ca H_P(t) := \wt P(t)\ca H$ to  $\ca H_Q(t) := \wt Q(t)\ca H$, satisfies the Sylvester equation
\be\label{FQPsylvester}
\wt W_Q(t) \F_\text{QP}(t) -  \F_\text{QP}(t)\wt W_P(t) = - \hbar \wt Q(t) \dot{\wt P}(t) \wt P(t)
\ee
where
\be\label{WQWQdef}
\wt W_Q(t) := \wt Q(t) \wt W(t) \wt Q(t) \quad\text{and}\quad \wt W_P(t) := \wt P(t) \wt W(t) \wt P(t)\,.
\ee
are understood as operators on $\ca H_Q(t)$ and $\ca H_P(t)$, respectively.

Moreover, $\F(t)$ satisfies the inequality
\be\label{FdotPineq}
|\!|\F(t)|\!|  \le \frac{\hbar}{\Delta} |\!|\dot{\wt P}(t)|\!|\,.
\ee
\end{lemma}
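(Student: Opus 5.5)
Two things need to be shown: the Sylvester equation \eqref{FQPsylvester}, which is a resolvent computation, and the norm bound \eqref{FdotPineq}, which then follows from Lemma~\ref{lemmaSylvester}.

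\textbf{The Sylvester equation.} I will multiply $\F_\text{QP}(t)$ on the left by $\wt W(t)$. Since $\wt Q(t)$ commutes with $\wt W(t)$ and with $R(z;t)$, we have $\wt W_Q\F_\text{QP}=\wt Q\wt W\F_\text{QP}$. Inside the integral I use $\wt W R = 1 + zR$, which gives
\[
\wt Q\,\wt W R\,\dot R\,\wt P = \wt Q\,\dot R\,\wt P + z\,\wt Q R\dot R\wt P .
\]
Similarly, on the right I use $\wt P\wt W = \wt W \wt P$ together with $\dot R = -R\dot{\wt W}R$, so that $R\wt W = 1+zR$ can be applied on that side as well. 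Collecting terms, the $z$-weighted pieces combine into
\[
\frac{\hbar}{2\pi i}\oint_{\ca C} dz\, z\,\wt Q\left(R\dot R - \dot R R\right)\wt P ,
\]
plus boundary terms. Equivalently, it is cleaner to work with $\dot R = -R\dot{\wt W}R$ and the identity
\[
\wt Q R\,\dot{\wt W}\, R\,\wt P \cdot (\text{stuff}) .
\]
The target right-hand side comes from differentiating the Riesz formula $\wt P(t) = -\frac{1}{2\pi i}\oint_{\ca C} R\,dz$, which gives $\dot{\wt P} = \frac{1}{2\pi i}\oint_{\ca C} R\dot{\wt W}R\,dz$. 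With this, $\wt Q\dot{\wt P}\wt P$ equals $-\frac{1}{2\pi i}\oint_{\ca C}\wt Q\dot R\wt P\,dz$. The only term surviving after contour integration is $-\hbar\,\wt Q\dot{\wt P}\wt P$. The remaining terms vanish because $\wt Q R\wt Q$ is analytic inside $\ca C$, by the same argument already used to show $\F_\text{QP}$ equals the unprojected integral.

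\textbf{Domain issues.} Since $\wt W$ is unbounded, I will justify each manipulation by noting three things. $\dot{\wt W}\in\ca B(\ca D_W,\ca H)$ by Lemma~\ref{assumpiii}. $R(z;t)$ maps $\ca H$ into $\ca D$. The integrals converge in norm on the compact contour $\ca C$. I expect this bookkeeping, including checking that $\text{Ran}\,\F_\text{QP}\subset\text{Dom}(\wt W_Q)$, to be the main technical nuisance. The cleanest route is to verify the equation weakly, on vectors $\wt P\psi$, and then invoke uniqueness from Lemma~\ref{lemmaSylvester}.

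\textbf{The norm bound.} I apply Lemma~\ref{lemmaSylvester} with $A=\wt W_Q(t)$ on $\ca H_Q(t)$, $B=\wt W_P(t)$ on $\ca H_P(t)$, and $Y=\hbar\,\wt Q\dot{\wt P}\wt P$. The hypotheses hold: $\wt W_P$ is bounded with spectrum $I\subset[r,s]$, and the spectrum of $\wt W_Q$ lies in the complement, at distance at least $\Delta$, by assumption (ii) and unitary invariance. This gives
\[
|\!|\F_\text{QP}|\!|\le \frac{\hbar}{\Delta}\,|\!|\wt Q\dot{\wt P}\wt P|\!| .
\]
For $\F=\F_\text{QP}+\F_\text{QP}^\dag$ I use \eqref{Fblockoff}. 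Because $\F$ is block off-diagonal, $|\!|\F|\!| = \max\bigl(|\!|\F_\text{QP}|\!|,\,|\!|\F_\text{QP}^\dag|\!|\bigr) = |\!|\F_\text{QP}|\!|$. To see this, write $|\F\psi|^2 = |\F_\text{QP}\wt P\psi|^2 + |\F_\text{QP}^\dag\wt Q\psi|^2$, which is at most $|\!|\F_\text{QP}|\!|^2 |\psi|^2$.

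Finally, I bound $|\!|\wt Q\dot{\wt P}\wt P|\!|\le|\!|\dot{\wt P}|\!|$. In fact, differentiating $\wt P^2=\wt P$ gives $\dot{\wt P}=\wt Q\dot{\wt P}\wt P+\wt P\dot{\wt P}\wt Q$, so the same block-off-diagonal argument gives equality. Combining these yields $|\!|\F(t)|\!|\le\frac{\hbar}{\Delta}|\!|\dot{\wt P}(t)|\!|$, which is \eqref{FdotPineq}.
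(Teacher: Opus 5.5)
Your overall plan is the paper's: multiply the contour integral for $\F_\text{QP}$ by $\wt W$ on both sides, recognize $\frac{\hbar}{2\pi i}\oint_{\ca C}\wt Q\dot R\wt P\,dz=-\hbar\wt Q\dot{\wt P}\wt P$ from the Riesz formula, discard a leftover by analyticity, then apply Lemma~\ref{lemmaSylvester} and block off-diagonality. The norm-bound half is correct and matches the paper. That includes your orthogonal-decomposition argument for $|\!|\F|\!|=|\!|\F_\text{QP}|\!|$ and your derivation of $\dot{\wt P}=\wt Q\dot{\wt P}\wt P+\wt P\dot{\wt P}\wt Q$ from $\wt P^2=\wt P$.

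The gap is at the central step of the Sylvester identity. You claim the $z$-weighted pieces combine into $\frac{\hbar}{2\pi i}\oint_{\ca C} z\,\wt Q(R\dot R-\dot R R)\wt P\,dz$ ``plus boundary terms''. You then switch to an unspecified ``$\cdot(\text{stuff})$'' and assert that the remaining terms vanish because $\wt QR\wt Q$ is analytic inside $\ca C$. That expression is not what right multiplication produces, and it does not vanish. Both $\wt QR^2\dot{\wt W}R\wt P$ and $\wt QR\dot{\wt W}R^2\wt P$ carry resolvent factors on the $\wt P$ side, which have poles inside $\ca C$, so analyticity of $\wt QR\wt Q$ says nothing about them. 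Concretely, take a two-level model with $\wt W=\mathrm{diag}(\lambda_P,\lambda_Q)$ and $w$ the $QP$-entry of $\dot{\wt W}$; the integral is proportional to $w(\lambda_P+\lambda_Q)/(\lambda_Q-\lambda_P)^2$. A closed contour also has no boundary terms. As written, the proposal does not establish \eqref{FQPsylvester}.

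The missing identity is $\dot R\,\wt W=z\dot R-R\dot{\wt W}$ on $\ca D$, obtained by differentiating $R(\wt W-z)=1$. With it,
\[
\F_\text{QP}\wt W_P=\frac{\hbar}{2\pi i}\oint_{\ca C}\!dz\,\wt Q R\dot R\,\wt W\wt P=-\frac{\hbar}{2\pi i}\oint_{\ca C}\!dz\,\big(\wt Q R^2\wt Q\big)\dot{\wt W}\wt P+\frac{\hbar}{2\pi i}\oint_{\ca C}\!dz\,z\,\wt Q R\dot R\wt P\,.
\]
The $z$-weighted term here is identical to the one you correctly obtained from $\wt W_Q\F_\text{QP}$, so the two cancel exactly. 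The first term vanishes because $\wt QR^2\wt Q$ is analytic inside $\ca C$; this is the only place the analyticity argument is needed. What remains is $\frac{\hbar}{2\pi i}\oint_{\ca C}\wt Q\dot R\wt P\,dz=-\hbar\wt Q\dot{\wt P}\wt P$. This is precisely the paper's computation, and once you substitute it your proof is complete.
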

\begin{proof}
To show that $\F_\text{QP}(t)$ satisfies the Sylvester equation above, we consider the two terms on the left-hand side. (In this proof, we omit the time dependence in all functions.) First, note the following identity
\ba
\wt W_Q \wt Q R(z) &= \wt Q \wt W \wt Q R(z) \no
&= \wt Q \big(R(z)^{-1} + z\big) \wt Q R(z) \no
&= \wt Q + z \wt Q R(z)
\ea
where in the first line we used $\wt Q^2 = \wt Q$; in the second line we used $\wt W = R(z)^{-1} + z$; and in the third line we used that $\wt Q$ commutes with $R(z)$. Then, the first term in the Sylvester in the left-hand side of the Sylvester equation gives
\ba
\wt W_Q \F_\text{QP} &= \frac{\hbar}{2\pi i} \oint_{\ca C}\!dz\,  \wt W_Q \wt Q R(z) \dot R(z) \wt P \no
&= \frac{\hbar}{2\pi i} \oint_{\ca C}\!dz\,  \wt Q \dot R(z) \wt P + \frac{\hbar}{2\pi i} \oint_{\ca C}\!dz\,  z\wt Q R \dot R(z) \wt P\,.
\ea
Next, note that
\be\label{wtPwtWPrel}
\wt P\wt W_P = R(z)^{-1}\wt P + z \wt P
\ee
so the second term
in the left-hand side of the Sylvester equation gives
\ba
\wt F_{QP} \wt W_P &= \frac{\hbar}{2\pi i} \oint_{\ca C}\!dz\,  \wt Q R(z) \dot R(z) \wt P \wt W_P \no
&= \frac{\hbar}{2\pi i} \oint_{\ca C}\!dz\,  \wt Q R(z) \dot R(z) \wt P R(z)^{-1} \wt P + \frac{\hbar}{2\pi i} \oint_{\ca C}\!dz\,  z\wt Q R(z) \dot R(z) \wt P \no
&= -\frac{\hbar}{2\pi i} \oint_{\ca C}\!dz\,  \wt Q R(z)^2 \dot{\wt W} R(z)\wt P R(z)^{-1} \wt P + \frac{\hbar}{2\pi i} \oint_{\ca C}\!dz\,  z\wt Q R(z) \dot R(z) \wt P \no
&= -\frac{\hbar}{2\pi i} \oint_{\ca C}\!dz\,  \big(\wt Q R(z)^2 \wt Q\big) \dot{\wt W} \wt P + \frac{\hbar}{2\pi i} \oint_{\ca C}\!dz\,  z\wt Q R(z) \dot R(z) \wt P \no
&= \frac{\hbar}{2\pi i} \oint_{\ca C}\!dz\,  z\wt Q R \dot R(z) \wt P
\ea
where in the second line we used relation \eqref{wtPwtWPrel}; in the third line $\dot R(z) = - R(z) \dot{\wt W} R(z)$; in the fourth line the commutativity of $\wt P$ and $\wt Q$ with $R(z)$; and in the last line the fact that the integrand $\big(\wt Q R(z)^2 \wt Q\big) \dot{\wt W} \wt P$ is analytic in the interior of the contour.
We thus see that
\ba
\wt W_Q \F_\text{QP} -  \F_\text{QP}\wt W_P &= -\hbar \wt Q \frac{d}{dt}\left[-\frac{1}{2\pi i} \oint_{\ca C}\!dz\, R(z)\right] \wt P \no
&= - \hbar \wt Q \dot{\wt P} \wt P
\ea
where the quantity inside the brackets gives the Riesz realization of $\wt P$ (see footnote~\ref{Rieszfn}).
This establishes \eqref{FQPsylvester}.

Lemma~\ref{lemmaSylvester} can be automatically applied by identifying $\wt W_Q$ with $A$ (with $\ca L = \ca H_Q$), $\wt W_P$ with $B$ (with $\ca K=\ca H_P$) and $\hbar \wt Q \dot{\wt P} \wt P$ with $Y$, and observing that the spectra of $A$ and $B$ coincide respectively with $\text{Spec}(W) \backslash I$ and $I$. We obtain
\be
{|\!| \F_{QP} |\!|}_{\ca H_P \to \ca H_Q} \le \frac{1}{\Delta}{|\!| \hbar \wt Q \dot{\wt P} \wt P |\!|}_{\ca H_P \to \ca H_Q} = \frac{\hbar}{\Delta}{|\!| \wt Q \dot{\wt P} \wt P |\!|}_{\ca H_P \to \ca H_Q}\,.
\ee
Since $\F$ is block off-diagonal, according to \eqref{Fblockoff}, with the off-diagonal blocks equal to $\F_{QP}$ and $\F_{QP}^\dag$, we have
\be
|\!|\F|\!| = {|\!|\F_{QP}|\!|}_{\ca H_P \to \ca H_Q}\,.
\ee
Since $\dot{\wt P}$ is also block off-diagonal with respect to $\wt P$ and $\wt Q$,\footnote{Since $\wt P \wt Q = 0$, we have $\dot{\wt P} \wt Q = - \wt P \dot{\wt Q}$, so $\wt Q \dot{\wt P} \wt Q = - \wt Q \wt P \dot{\wt Q} = 0$. Since $\wt P + \wt Q = 1$, we have $\dot{\wt P} = - \dot{\wt Q}$, so $\wt P \dot{\wt P} \wt P = - \wt P \dot{\wt Q} \wt P = \dot{\wt P} \wt Q \wt P = 0$.\label{footdotPblock}} we have
\be\label{dotPQPP}
|\!|\dot{\wt P}|\!| = {|\!|\wt Q \dot{\wt P} \wt P|\!|}_{\ca H_P \to \ca H_Q}\,.
\ee
The inequality \eqref{FdotPineq} then follows.
\end{proof}

The next step is to derive a bound for $\dot{\wt P}$ directly in terms of $B$ and $H$. To that end, we establish one more intermediate result:
\begin{lemma}\label{lemmaQdotPPsylvester}
The operator $\wt Q(t) \dot{\wt P}(t) \wt P(t)$, from $\ca H_P(t)$ to $\ca H_Q(t)$, satisfies the Sylvester equation
\be
\wt W_Q(t) \big(\wt Q(t) \dot{\wt P}(t) \wt P(t)\big) -  \big(\wt Q(t) \dot{\wt P}(t) \wt P(t)\big)\wt W_P(t) = - \wt Q(t) \dot{\wt W}(t) \wt P(t)
\ee
with $\wt W_Q$ and $\wt W_P$ defined in \eqref{WQWQdef}.
Consequently, $\dot{\wt P}(t)$ satisfies the inequality
\be\label{dotPQdotWPineq}
|\!|\dot{\wt P}(t)|\!|  \le \frac{1}{\Delta} |\!|\wt Q(t) \dot{\wt W}(t) \wt P(t)|\!|\,.
\ee
\end{lemma}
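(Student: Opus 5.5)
The plan is to mirror the proof of Lemma~\ref{lemmaFQPsylvester}, but to start from the Riesz representation of the projector itself rather than from $\F_\text{QP}$. We have $\wt P(t) = -\frac{1}{2\pi i}\oint_{\ca C}dz\, R(z;t)$. Differentiating under the integral and using $\dot R(z;t) = -R(z;t)\dot{\wt W}(t)R(z;t)$ gives
\be
\dot{\wt P}(t) = \frac{1}{2\pi i}\oint_{\ca C}\!dz\, R(z;t)\dot{\wt W}(t) R(z;t)\,.
\ee
Differentiation under the integral is justified by Lemma~\ref{assumpiii}: the map $t\mapsto\wt W(t)$ is $C^1$ into $\ca B(\ca D_W,\ca H)$, and the resolvent maps $\ca H$ boundedly into $\ca D_W$, uniformly on the compact contour. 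Sandwiching with $\wt Q(t)$ on the left and $\wt P(t)$ on the right, and using that both projectors commute with $R(z;t)$, we get
\be
\wt Q\dot{\wt P}\wt P = \frac{1}{2\pi i}\oint_{\ca C}\!dz\,(\wt W_Q - z)^{-1}\big(\wt Q\dot{\wt W}\wt P\big)(\wt W_P - z)^{-1}\,.
\ee
Here the resolvents are restricted to $\ca H_Q(t)$ and $\ca H_P(t)$, respectively.

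The second step is to recognize this as exactly the solution formula \eqref{Sylvestersol} of Lemma~\ref{lemmaSylvester}, with $A=\wt W_Q$ on $\ca L=\ca H_Q(t)$, $B=\wt W_P$ on $\ca K=\ca H_P(t)$, and $Y=\wt Q\dot{\wt W}\wt P$. The contour $\ca C$ encircles $I=\text{Spec}(\wt W_P)$ but no part of $\text{Spec}(\wt W_Q)=\text{Spec}(W)\backslash I$, as that lemma requires. The hypotheses are checked as follows.
\begin{itemize}
\item $\wt W_P$ is bounded self-adjoint with spectrum in $[r,s]$.
\item $\wt W_Q$ is self-adjoint with spectrum in $(-\infty,r-\Delta]\cup[s+\Delta,\infty)$.
\item $Y$ is bounded from $\ca H_P$ to $\ca H_Q$.
\end{itemize}
The boundedness of $Y$ holds because $\wt P\ca H\subset\ca D$, $\wt W$ restricted to $\wt P\ca H$ is bounded (its spectrum there is $I$), so the graph norm is equivalent to the Hilbert norm on $\ca H_P$. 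Combined with $\dot{\wt W}\in\ca B(\ca D_W,\ca H)$, this gives $Y$ bounded.

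By the uniqueness part of Lemma~\ref{lemmaSylvester}, the operator $\wt Q\dot{\wt P}\wt P$ is the unique solution of $AX-XB=-Y$, which is the claimed Sylvester equation. Alternatively, the equation can be checked directly by the same manipulations used in the proof of Lemma~\ref{lemmaFQPsylvester}: write $\wt W_Q\wt QR = \wt Q + z\wt QR$ and $\wt P\wt W_P = R^{-1}\wt P + z\wt P$. The $z$-terms then cancel, and the leftover term is $\frac{1}{2\pi i}\oint \wt Q\dot{\wt W}R\wt P\,dz$, which equals $-\wt Q\dot{\wt W}\wt P$ by the Riesz formula.

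The final step is the inequality \eqref{Sylvesterineq}, which gives $\|\wt Q\dot{\wt P}\wt P\|_{\ca H_P\to\ca H_Q}\le\Delta^{-1}\|\wt Q\dot{\wt W}\wt P\|$. By footnote~\ref{footdotPblock}, $\dot{\wt P}$ is block off-diagonal, and its norm is computed as in \eqref{dotPQPP}. This yields \eqref{dotPQdotWPineq}.

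I expect the main obstacle to be the domain bookkeeping rather than the algebra. The points needing care are justifying differentiation of the resolvent in $\ca B(\ca H)$ given only $\ca B(\ca D_W,\ca H)$-regularity of $\wt W$, and confirming that $Y$ is genuinely bounded, so that Lemma~\ref{lemmaSylvester} applies.
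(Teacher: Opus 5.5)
Your proposal is correct and is essentially the paper's proof. The paper establishes the Sylvester equation by exactly the direct cancellation you give as your alternative: the Riesz formula for $\dot{\wt P}$, then $\wt W=R(z)^{-1}+z$ on both sides so the $z$-terms cancel. It then concludes with the inequality of Lemma~\ref{lemmaSylvester} and the block off-diagonality \eqref{dotPQPP}.

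Your primary shortcut, matching the sandwiched integral with \eqref{Sylvestersol}, is also valid. It relies on the existence-with-formula part of Lemma~\ref{lemmaSylvester}, not its uniqueness. In the direct check there is also a second leftover term $-\frac{1}{2\pi i}\oint_{\ca C}\wt Q R(z)\dot{\wt W}\wt P\,dz$. It vanishes because $\wt Q R(z)$ is analytic inside $\ca C$; equivalently, it equals $\wt Q\wt P\dot{\wt W}\wt P=0$.
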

\begin{proof}
The proof is very similar to that of Lemma \ref{lemmaFQPsylvester}. (Again, we omit the time dependence in all functions.) 
First, note that from the Riesz realization of $\wt P$,
\be
\wt P = - \frac{1}{2\pi i} \oint_{\ca C}\!dz\, R(z)\,,
\ee
and the identity $\dot R(z) = - R(z) \dot{\wt W} R(z)$, we have
\be
\dot{\wt P} = \frac{1}{2\pi i} \oint_{\ca C}\!dz\, R(z) \dot{\wt W} R(z)\,.
\ee
Now consider
\ba
\wt W_Q \big(\wt Q \dot{\wt P} \wt P\big) &= \frac{1}{2\pi i}  \oint_{\ca C}\!dz\, \wt Q \wt W R(z) \dot{\wt W} R(z) \wt P \no
&= \frac{1}{2\pi i}  \oint_{\ca C}\!dz\, \wt Q \big(R(z)^{-1} + z \big) R(z) \dot{\wt W} R(z) \wt P \no
&= \frac{1}{2\pi i}  \oint_{\ca C}\!dz\, \wt Q \dot{\wt W} R(z) \wt P + \frac{1}{2\pi i}  \oint_{\ca C}\!dz\, z\wt Q R(z) \dot{\wt W} R(z) \wt P
\ea
and
\ba
\big(\wt Q \dot{\wt P} \wt P\big) \wt W_P &= \frac{1}{2\pi i}  \oint_{\ca C}\!dz\, \wt Q R(z) \dot{\wt W} R(z) \wt W \wt P \no
&= \frac{1}{2\pi i}  \oint_{\ca C}\!dz\, \wt Q R(z) \dot{\wt W} R(z) \big(R(z)^{-1} + z \big) \wt P \no
&= \frac{1}{2\pi i}  \oint_{\ca C}\!dz\, \wt Q R(z) \dot{\wt W} \wt P + \frac{1}{2\pi i}  \oint_{\ca C}\!dz\, z\wt Q R(z) \dot{\wt W} R(z) \wt P\,.
\ea
Therefore, using again the Riesz realization of $\wt P$,
\ba
\wt W_Q \big(\wt Q \dot{\wt P} \wt P\big) - \big(\wt Q \dot{\wt P} \wt P\big) \wt W_P &= \frac{1}{2\pi i}  \oint_{\ca C}\!dz\, \wt Q \dot{\wt W} R(z) \wt P - \frac{1}{2\pi i}  \oint_{\ca C}\!dz\, \wt Q R(z) \dot{\wt W} \wt P \no
&= \wt Q \dot{\wt W} ( - \wt P) \wt P - \wt Q ( - \wt P) \dot{\wt W} \wt P \no
&= -  \wt Q \dot{\wt W} \wt P\,.
\ea
Applying Lemma~\ref{lemmaSylvester}
then gives
\be
{|\!|\wt Q\dot{\wt P} \wt P|\!|}_{\ca H_P \to \ca H_Q}  \le \frac{1}{\Delta} {|\!|\wt Q \dot{\wt W} \wt P|\!|}_{\ca H_P \to \ca H_Q}\,.
\ee
Using that ${|\!|\wt Q\dot{\wt P} \wt P|\!|}_{\ca H_P \to \ca H_Q} = |\!|\dot{\wt P}|\!|$, as established in \eqref{dotPQPP}, and ${|\!|\wt Q \dot{\wt W} \wt P|\!|}_{\ca H_P \to \ca H_Q} = |\!|\wt Q \dot{\wt W} \wt P|\!|$, yields the inequality \eqref{dotPQdotWPineq}.
\end{proof}
With this, we can finally state the main, useful bound for $\varepsilon_1$,
\begin{theorem}\label{theoremvareps1bound}
Under assumptions (i) and (ii), the constant $\varepsilon_1$ is bounded by
\be
\varepsilon_1 \le  \frac{4\mu}{\Delta^2} |\!| Q[B,H] P|\!|\,.
\ee
\end{theorem}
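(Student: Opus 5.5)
The bound follows by chaining the two Sylvester-type estimates already established and then computing $\dot{\wt W}(0)$ explicitly. By definition \eqref{varepsdef}, $\varepsilon_1 = 4|\!|\F(0)|\!|$. Lemma~\ref{lemmaFQPsylvester} gives $|\!|\F(0)|\!| \le \frac{\hbar}{\Delta}|\!|\dot{\wt P}(0)|\!|$. Lemma~\ref{lemmaQdotPPsylvester} then gives $|\!|\dot{\wt P}(0)|\!| \le \frac{1}{\Delta}|\!|\wt Q(0)\dot{\wt W}(0)\wt P(0)|\!|$. Since $S(0)=1$, we have $\wt Q(0)=Q$ and $\wt P(0)=P$. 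Combining these,
\be
\varepsilon_1 \le \frac{4\hbar}{\Delta^2}\,|\!| Q\dot{\wt W}(0) P|\!|\,.
\ee

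The next step is to compute $\dot{\wt W}(t)$. Differentiating $\wt W(t) = S(t)WS^\dag(t)$ with $S(t)=e^{i\mu Bt/\hbar}$ gives
\be
\dot{\wt W}(t) = \frac{i\mu}{\hbar}S(t)[B,W]S^\dag(t)\,.
\ee
Lemma~\ref{assumpiii} guarantees that this derivative exists in $\ca B(\ca D_W,\ca H)$ and that the formal computation is justified on $\ca D$, using $B\ca D\subset\ca D$ and $S(t)\ca D=\ca D$. Because $[B,W]=[B,H-\mu B]=[B,H]$, at $t=0$ we get $\dot{\wt W}(0)=\frac{i\mu}{\hbar}[B,H]$. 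Substituting, the factors of $\hbar$ cancel and
\be
\varepsilon_1 \le \frac{4\mu}{\Delta^2}|\!| Q[B,H]P|\!|\,,
\ee
which is the claim.

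The main point to check is that the norms are meaningful. The quantity $|\!| Q\dot{\wt W}P|\!|$ in Lemma~\ref{lemmaQdotPPsylvester} is an operator norm from $\ca H_P$ to $\ca H_Q$, while $[B,H]$ is unbounded on $\ca H$. I would note that $\text{Ran}(P)\subset\ca D$: the spectral subspace for the bounded spectral set $I$ lies in the domain of $W$, hence in $\ca D$. On $\text{Ran}(P)$ the operator $W$ is bounded, with $|W P\psi|\le \max(|r|,|s|)\,|\psi|$. Moreover $[B,H]P=[B,W]P = BWP - WBP$, and $Q[B,W]P = QBWP - QWQ\,BP$. I expect the term $WQBP$ to be the hard part, because $QW$ is unbounded. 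Here I would rely on the fact that Lemma~\ref{lemmaSylvester} only needs $Y$ bounded from $\ca K$ to $\ca L$. If $|\!|Q[B,H]P|\!|$ is infinite, the inequality holds trivially. If it is finite, the chain of lemmas applies directly, so the theorem holds in either case.
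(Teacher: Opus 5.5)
Your proof is correct and follows the paper's argument: chain Lemma~\ref{lemmaFQPsylvester} with Lemma~\ref{lemmaQdotPPsylvester}, then use $\dot{\wt W} = \frac{i\mu}{\hbar}S[B,H]S^\dag$. The paper works at general $t$ and removes $S(t)$ by unitarity, whereas you set $t=0$ directly, which is equivalent. On boundedness, your finite/infinite dichotomy is valid, but the paper settles the point directly: $P$ is bounded from $\ca H$ into $\ca D_W$, and $\dot{\wt W}(0)\in\ca B(\ca D_W,\ca H)$ by Lemma~\ref{assumpiii}, so $Q[B,H]P$ is automatically bounded and the $WQBP$ term you flagged is harmless.
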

\begin{proof}
Using that $|\!|\F(t)|\!|$ is time independent,  Lemma~\ref{lemmaFQPsylvester} and Lemma~\ref{lemmaQdotPPsylvester}, we obtain
\be\label{vareps1boundinter}
\varepsilon_1 := 4|\!|\F(0)|\!| = 4|\!|\F(t)|\!| \le \frac{4\hbar}{\Delta} |\!|\dot{\wt P}(t)|\!| \le \frac{4\hbar}{\Delta^2} |\!|\wt Q(t)\dot{\wt W}(t) \wt P(t)|\!|\,.
\ee
From the definition of $\wt W(t)$, we have
\be
\dot{\wt W}(t) = i\frac{\mu}{\hbar}S(t)[B, W] S^\dag(t) = i\frac{\mu}{\hbar}S(t)[B, H] S^\dag(t)\,.
\ee
In particular, note that $B\ca D \subset \ca D$ from assumption (i), together with $S(t)\ca D = \ca D$ from Lemma~\ref{assumpiii}, ensure that the right-hand side is well-defined on $\ca D$.
Finally, from the unitarity of $S(t)$, 
\be\label{QdotWPQBHP}
|\!|\wt Q(t)\dot{\wt W}(t) \wt P(t)|\!| =  \frac{\mu}{\hbar} |\!|S(t) Q [B, H] P S^\dag(t)|\!| =  \frac{\mu}{\hbar} |\!| Q [B, H] P |\!|
\ee
Note that the left-hand side is bounded on $\ca H$, since $\wt P(t)$ is bounded as a map from $\ca H$ to $\ca D_W$ and $\dot{\wt W}(t) \in \ca B(\ca D_W, \ca H)$ from Lemma~\ref{assumpiii}, so the right-hand side is also bounded on $\ca H$.
In conjunction with \eqref{vareps1boundinter}, the result is proven.
\end{proof}

We now turn to the estimation of $\varepsilon_2$. 
It is convenient to define the scale
\be
\varepsilon := \frac{2\mu}{\Delta^2} |\!| Q[B,H] P|\!|
\ee
for purposes of comparing with the bound for $\varepsilon_1$ (i.e., $\varepsilon_1 \le 2 \varepsilon$).
We start with:
\begin{lemma}\label{lemmaFdotPP}
The operator $\F(t) [\dot{\wt P}(t), \wt P(t)]$ satisfies the bound
\be
|\!|\F(t) [\dot{\wt P}(t), \wt P(t)] |\!| \le \frac{\Delta}{4\hbar} \varepsilon^2\,.
\ee
\end{lemma}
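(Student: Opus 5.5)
We bound the product by submultiplicativity, $|\!|\F(t) [\dot{\wt P}(t), \wt P(t)]|\!| \le |\!|\F(t)|\!|\,|\!|[\dot{\wt P}(t), \wt P(t)]|\!|$, and control each factor with the lemmas already established. For the first factor we use Lemma~\ref{lemmaFQPsylvester} followed by Lemma~\ref{lemmaQdotPPsylvester} and the identity \eqref{QdotWPQBHP}. This is exactly the chain used in the proof of Theorem~\ref{theoremvareps1bound}, and it gives
\be
|\!|\F(t)|\!| \le \frac{\hbar}{\Delta}|\!|\dot{\wt P}(t)|\!| \le \frac{\hbar}{\Delta^2}\cdot\frac{\mu}{\hbar}|\!|Q[B,H]P|\!| = \frac{\varepsilon}{2}\,.
\ee

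For the second factor we use the block structure. By footnote~\ref{footdotPblock}, $\dot{\wt P}$ is block off-diagonal with respect to $\wt P,\wt Q$, so $\dot{\wt P} = \wt Q\dot{\wt P}\wt P + \wt P\dot{\wt P}\wt Q$. It follows that
\be
[\dot{\wt P},\wt P] = \dot{\wt P}\wt P - \wt P\dot{\wt P} = \wt Q\dot{\wt P}\wt P - \wt P\dot{\wt P}\wt Q\,.
\ee
This is again block off-diagonal, with off-diagonal blocks $\wt Q\dot{\wt P}\wt P$ and $-(\wt Q\dot{\wt P}\wt P)^\dag$. The norm of a block off-diagonal operator with blocks $X$ and $-X^\dag$ equals $|\!|X|\!|$. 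Using \eqref{dotPQPP}, we therefore get $|\!|[\dot{\wt P},\wt P]|\!| = |\!|\dot{\wt P}|\!|$. Lemma~\ref{lemmaQdotPPsylvester} with \eqref{QdotWPQBHP} then yields
\be
|\!|[\dot{\wt P},\wt P]|\!| \le \frac{\mu}{\hbar\Delta}|\!|Q[B,H]P|\!| = \frac{\Delta}{2\hbar}\varepsilon\,.
\ee

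Multiplying the two bounds gives $\frac{\varepsilon}{2}\cdot\frac{\Delta\varepsilon}{2\hbar} = \frac{\Delta}{4\hbar}\varepsilon^2$, as claimed.

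The only step needing care is the norm identity for the commutator. I would justify it directly: for any $\psi$, the two components $\wt Q\dot{\wt P}\wt P\psi$ and $\wt P\dot{\wt P}\wt Q\psi$ lie in orthogonal subspaces. The sum of their squared norms is at most $|\!|\dot{\wt P}|\!|^2(|\wt P\psi|^2+|\wt Q\psi|^2)$, which gives $|\!|[\dot{\wt P},\wt P]|\!|\le|\!|\dot{\wt P}|\!|$, and this inequality is all we need.

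A second point to check is domain issues: we need $\dot{\wt P}$ to be bounded so that the products make sense. This follows from Lemma~\ref{lemmaQdotPPsylvester} together with the boundedness remark at the end of the proof of Theorem~\ref{theoremvareps1bound}.
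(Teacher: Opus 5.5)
Your proposal is correct and follows essentially the same route as the paper: submultiplicativity, then the block off-diagonality of $\dot{\wt P}$ to get $|\!|[\dot{\wt P},\wt P]|\!| = |\!|\dot{\wt P}|\!|$, then Lemmas~\ref{lemmaFQPsylvester} and \ref{lemmaQdotPPsylvester} together with \eqref{QdotWPQBHP}. The only cosmetic difference is that you bound the two factors separately by $\varepsilon/2$ and $\Delta\varepsilon/(2\hbar)$, whereas the paper first reduces the product to $\frac{\hbar}{\Delta}|\!|\dot{\wt P}|\!|^2$ and then substitutes.
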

\begin{proof}
Using the fact that $\dot{\wt P}$ is block off-diagonal with respect to $\wt P$ and $\wt Q$, as shown in footnote~\ref{footdotPblock}, we have
\be
|\!|[\dot{\wt P}, \wt P] |\!| = |\!|\dot{\wt P} \wt P - \wt P \dot{\wt P} |\!| = |\!|\wt Q\dot{\wt P} \wt P - \wt P \dot{\wt P} \wt Q |\!| = |\!|\wt Q\dot{\wt P} \wt P |\!| = |\!|\dot{\wt P} |\!| \,.
\ee
Thus, from Lemma~\ref{lemmaFQPsylvester},
\be
|\!|\F [\dot{\wt P}, \wt P] |\!| \le |\!|\F |\!| |\!|[\dot{\wt P}, \wt P] |\!| \le \frac{\hbar}{\Delta} |\!|\dot{\wt P} |\!|^2\,.
\ee
Using Lemma~\ref{lemmaQdotPPsylvester}, together with formula \eqref{QdotWPQBHP}, gives
\be
|\!|\F [\dot{\wt P}, \wt P] |\!| \le \frac{\hbar}{\Delta} \left(\frac{\mu}{\hbar \Delta} |\!|Q[B,H]P|\!| \right)^2 = \frac{\mu^2}{\hbar \Delta^3} |\!|Q[B,H]P|\!|^2
\ee
concluding the derivation.
\end{proof}
We now estimate the term $\dot\F$. This term is more complicated to analyze than the ones before, and for this reason we will only derive a relatively crude bound, which nonetheless is still sufficient for our purposes.
The result is as follows:
\begin{lemma}\label{lemmadotF}
The operator $\dot \F(0)$ satisfies, for any allowed contour $\ca C$, the bound
\be
|\!| \dot\F(0) |\!| \le \frac{\hbar|\ca C|}{\pi} \kappa^2 \left( |\!|\dot{\wt P}(0)|\!| \alpha_1 + 3\alpha_1^2 + \alpha_2 \right)
\ee
where
\ba
\kappa &:= \sup_{z\in \ca C} |\!|R(z;0)|\!| \\
\alpha_1 &:= \sup_{z\in \ca C} |\!|\dot{\wt W}(0)R(z;0)|\!| \\
\alpha_2 &:= \sup_{z\in \ca C} |\!|\ddot{\wt W}(0)R(z;0)|\!|
\ea
and $|\ca C|$ is the length of the contour $\ca C$.
\end{lemma}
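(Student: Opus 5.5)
We differentiate the contour-integral definition \eqref{frFdef} of $\F(t)$ under the integral sign at $t=0$ and bound each resulting term in operator norm. Differentiation is legitimate because Lemma~\ref{assumpiii} makes $\wt W(t)$ real-analytic into $\ca B(\ca D_W,\ca H)$. Since the resolvent $R(z;t)$ maps $\ca H$ boundedly into $\ca D_W$, uniformly for $z$ on the compact contour $\ca C$, we also have $\dot R = -R\dot{\wt W}R$ and $\ddot R = 2R\dot{\wt W}R\dot{\wt W}R - R\ddot{\wt W}R$ as bounded operators on $\ca H$. Writing $\F = G + G^\dag$ with
\be
G(t) = \frac{\hbar}{2\pi i}\oint_{\ca C}\! dz\, \wt Q(t) R \dot R\,,
\ee
we have $\|\dot\F\| \le 2\|\dot G\|$. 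Estimating the contour integral by $|\ca C|/(2\pi)$ times the supremum of the integrand gives
\be
\|\dot\F(0)\| \le \frac{\hbar|\ca C|}{\pi}\sup_{z\in\ca C}\big\| \tfrac{d}{dt}\big(\wt Q R\dot R\big)\big\|_{t=0}\,.
\ee
This already produces the prefactor in the statement. It remains to show the supremum is at most $\kappa^2(\|\dot{\wt P}\|\alpha_1 + 3\alpha_1^2 + \alpha_2)$.

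By the product rule the derivative splits into three pieces: $\dot{\wt Q}R\dot R$, $\wt Q\dot R\dot R$ and $\wt Q R\ddot R$.

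\emph{First piece.} We use $\dot{\wt Q} = -\dot{\wt P}$ and write $R\dot R = -R^2\dot{\wt W}R = -R\,(R\dot{\wt W}R)$. The factor $R\dot{\wt W}$ is the adjoint of $\dot{\wt W}R$ restricted to $\ca D$. Here we use that $\dot{\wt W}(0) = i\mu[B,H]/\hbar$ is anti-symmetric on $\ca D$ and that $R(z)^\dag = R(\bar z)$; for the symmetric contour this gives $\|R\dot{\wt W}\| \le \alpha_1$. Hence the first piece is bounded by $\|\dot{\wt P}\|\,\kappa^2\alpha_1$.

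\emph{Second piece.} We have $\dot R\dot R = R\dot{\wt W}R\,R\dot{\wt W}R$, bounded by $\kappa^2\alpha_1^2$ after grouping the factors as $(R)(\dot{\wt W}R)(R)(\dot{\wt W}R)$ and using $\|\wt Q\|\le1$.

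\emph{Third piece.} The term $R\,\ddot R$ contributes $2R^2\dot{\wt W}R\dot{\wt W}R$, bounded by $2\kappa^2\alpha_1^2$, and $R^2\ddot{\wt W}R$, bounded by $\kappa^2\alpha_2$.

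Summing the three pieces gives exactly $\kappa^2(\|\dot{\wt P}\|\alpha_1 + 3\alpha_1^2 + \alpha_2)$.

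The main obstacle is the bookkeeping of unbounded factors. Every appearance of $\dot{\wt W}$ or $\ddot{\wt W}$ has to be placed immediately to the left of a resolvent, so that only the bounded combinations $\dot{\wt W}R$, $\ddot{\wt W}R$ and $R$ appear. This is immediate in the second and third pieces. In the first piece the factor $R\dot{\wt W}$ sits the wrong way round, and converting it needs the adjoint argument above; as a fallback, one can use instead that $\dot{\wt P}$ is block off-diagonal and that $\wt Q$ commutes with $R$ to rearrange the term. We also need to justify interchanging $d/dt$ with the contour integral. This follows from norm-continuity of $\dot{\wt W}R$ and $\ddot{\wt W}R$ in $(z,t)$, which comes from the analyticity in Lemma~\ref{assumpiii} together with the uniform resolvent bound on $\ca C$.
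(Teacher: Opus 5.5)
Your proposal is correct and follows the paper's own proof. You differentiate under the contour integral, split by the product rule into the $\dot{\wt Q}R\dot R$, $\wt Q\dot R\dot R$ and $\wt Q R\ddot R$ pieces, bound each using $\kappa$, $\alpha_1$, $\alpha_2$, and double the result to account for the adjoint term.

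The one blemish is your first piece: $R^2\dot{\wt W}R = R\cdot R\cdot(\dot{\wt W}R)$ already has $\dot{\wt W}$ immediately to the left of a resolvent, so it is bounded by $\kappa^2\alpha_1$ directly for any allowed $\ca C$. Your adjoint detour is therefore unnecessary, and as written it only works for a conjugation-symmetric contour, whereas the lemma claims the bound for any allowed $\ca C$. (Also, $\dot{\wt W}(0)=i\mu[B,H]/\hbar$ is symmetric, not anti-symmetric, though this does not affect the norm.)
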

\begin{proof}
Assumptions (i) and (ii), together with Lemma~\ref{assumpiii}, ensure that $\kappa$, $\alpha_1$ and $\alpha_2$ are well-defined (i.e., finite).\footnote{Since $\wt W(t) \in \ca B(\ca D_W, \ca H)$, then $\wt W(t) - z$ is a bijective bounded map between the Banach spaces $\ca D_W$ and $\ca H$, for $z\notin \text{Spec}(W)$. The \emph{bounded inverse theorem} implies that $R(z;t) \in \ca B(\ca H, \ca D_W)$. In particular, $R(z;t)$ is bounded on $\ca H$. Since $\ca C$ is compact and separated by a non-zero gap from $\text{Spec}(W)$, and $R(z;t)$ is continuous in $z$, $\sup_{z\in\ca C}|\!|R(z,t)|\!| < \infty$. Moreover, as $\dot{\wt W}(t) \in \ca B(\ca D_W, \ca H)$, the composition $\dot{\wt W}(t)R(z,t)$ is a bounded map from $\ca H$ to $\ca H$. Again, from the compactness and separation of $\ca C$, and continuity in $z$, $\sup_{z\in\ca C}|\!|\dot{\wt W}(t)R(z,t)|\!| < \infty$. The same applies for $\ddot{\wt W}(t)R(z,t)$.}
Differentiating the relation $\dot R(z) = - R(z) \dot{\wt W} R(z)$ once more (in time) gives
\be
\ddot R(z) = - R(z) \ddot{\wt W} R(z) + 2 R(z) \dot{\wt W} R(z)\dot{\wt W} R(z)\,.
\ee
We thus have (omitting the $z$ argument of $R$)
\be\label{dotFexpl}
\dot\F = \frac{\hbar}{2\pi i} \oint_{\ca C}\!dz \left( - \dot{\wt Q} R^2 \dot{\wt W} R + \wt Q \big(R \dot{\wt W} R \big)^2 + 2 \wt Q R^2 \dot{\wt W} R\dot{\wt W} R - \wt Q R^2 \ddot{\wt W} R \right) + \dot{\text{adj.}}
\ee
and consequently
\be
|\!| \dot\F |\!| \le 2\frac{\hbar}{2\pi} |\ca C| \left(|\!| \dot{\wt P} |\!| \kappa^2\alpha_1 + \kappa^2\alpha_1^2 + 2 \kappa^2\alpha_1^2 + \kappa^2\alpha_2 \right)
\ee
where the first $2$ appears because $|\!|\dot{\text{adj.}}|\!|$ is equal to the norm of the first term displayed in \eqref{dotFexpl}. 
We have also used that $\dot{\wt P} = - \dot{\wt Q}$.
\end{proof}
In fact $\alpha_2$ can be bounded by a simple quantity. In particular, specializing to the physical, microscopic Hamiltonian
\be\label{microHapp}
H = \frac{\pi_1^2 + \pi_2^2}{2m}\,
\ee
with $\pi_i = p_i - A_i(x)$, and using the canonical commutation relations, we find
\be
\ddot{\wt W}(0) = -\frac{\mu^2}{\hbar^2}[B,[B,H]] = \frac{\mu^2}{m} |\nabla B|^2\,.
\ee
Consequently, if we assume that $|\nabla B|^2$ is bounded (which is quite natural from assumption (i)---see footnote~\ref{assumpiapp}),
\be\label{alpha2bound}
\alpha_2 \le |\!|\ddot{\wt W}(0)|\!| \sup_{z\in\ca C} |\!|R(z;0)|\!| = \frac{\kappa\mu^2}{m} \, |\!| |\nabla B|^2 |\!| \,.
\ee
Putting it all together, we get
\begin{theorem}\label{theoremvareps2bound}
Under assumptions (i) and (ii), the specialization to the microscopic Hamiltonian \eqref{microHapp}, plus the assumption that $|\nabla B|^2$ is bounded, the constant $\varepsilon_2$ is bounded by
\be\label{vareps2exactbound}
\varepsilon_2 \le \frac{2\hbar|\ca C|}{\pi} \kappa^2 \left( \frac{\varepsilon\Delta}{2\hbar} \alpha_1 + 3\alpha_1^2 + \frac{\kappa\mu^2}{m} \, |\!| |\nabla B|^2 |\!| \right) + \frac{\Delta}{2\hbar} \varepsilon^2
\ee
for any allowed contour $\ca C$.
\end{theorem}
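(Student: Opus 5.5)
The bound follows by assembling the pieces already established, starting from the definition \eqref{varepsdef}, $\varepsilon_2 = 2|\!|\dot\F(0)|\!| + 2|\!|\F(0)[\dot{\wt P}(0),\wt P(0)]|\!|$. I will bound the two terms separately and then add.

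For the second term I apply Lemma~\ref{lemmaFdotPP} at $t=0$. It gives $|\!|\F(0)[\dot{\wt P}(0),\wt P(0)]|\!| \le \frac{\Delta}{4\hbar}\varepsilon^2$. Multiplying by $2$ produces the final term $\frac{\Delta}{2\hbar}\varepsilon^2$ in \eqref{vareps2exactbound}.

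For the first term I start from Lemma~\ref{lemmadotF}. It gives
\[
2|\!|\dot\F(0)|\!| \le \frac{2\hbar|\ca C|}{\pi}\kappa^2\left(|\!|\dot{\wt P}(0)|\!|\alpha_1 + 3\alpha_1^2 + \alpha_2\right).
\]
Two quantities on the right still need to be expressed in terms of the ingredients of the theorem.

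\begin{itemize}
\item \emph{The factor $|\!|\dot{\wt P}(0)|\!|$.} Lemma~\ref{lemmaQdotPPsylvester} gives $|\!|\dot{\wt P}(0)|\!| \le \frac{1}{\Delta}|\!|\wt Q(0)\dot{\wt W}(0)\wt P(0)|\!|$. Formula \eqref{QdotWPQBHP} turns the right side into $\frac{\mu}{\hbar\Delta}|\!|Q[B,H]P|\!|$. By the definition of $\varepsilon$ this equals $\frac{\varepsilon\Delta}{2\hbar}$, which is exactly the coefficient of $\alpha_1$ in the target.
\item \emph{The quantity $\alpha_2$.} For the microscopic Hamiltonian \eqref{microHapp}, I verify $\ddot{\wt W}(0) = -\frac{\mu^2}{\hbar^2}[B,[B,H]]$ by differentiating $\wt W(t) = S(t)WS^\dag(t)$ twice. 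The double commutator is then computed from $[B,\pi_i] = i\hbar\,\partial_i B$, giving $[B,[B,H]] = -\frac{\hbar^2}{m}|\nabla B|^2$. With $|\nabla B|^2$ bounded, this yields \eqref{alpha2bound}: $\alpha_2 \le \frac{\kappa\mu^2}{m}|\!||\nabla B|^2|\!|$.
\end{itemize}

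Substituting both into the bound on $2|\!|\dot\F(0)|\!|$ and adding the second-term bound gives \eqref{vareps2exactbound}.

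The only step needing real care is the double-commutator computation. It requires checking that $[B,[B,H]]$ is well defined on $\ca D$, which follows from $B\ca D\subset\ca D$. It also requires that the resulting multiplication operator is bounded, so that $|\!|\ddot{\wt W}(0)R|\!| \le |\!|\ddot{\wt W}(0)|\!|\,|\!|R|\!|$ is legitimate; this is exactly what the assumption on $|\nabla B|^2$ provides. Everything else is direct substitution.
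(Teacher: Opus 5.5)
Your proposal is correct and follows the paper's proof exactly. The paper also combines Lemma~\ref{lemmaFdotPP} and Lemma~\ref{lemmadotF}, bounds $|\!|\dot{\wt P}(0)|\!|$ through \eqref{dotPQdotWPineq} and \eqref{QdotWPQBHP} to get the coefficient $\varepsilon\Delta/2\hbar$, and controls $\alpha_2$ by \eqref{alpha2bound}. Your double-commutator check, $[B,[B,H]]=-\frac{\hbar^2}{m}|\nabla B|^2$, matches the computation the paper carries out just before the theorem.
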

\begin{proof}
The result follows immediately from Lemma~\ref{lemmaFdotPP} and Lemma~\ref{lemmadotF}, the bound on $|\!| \dot{\wt P} |\!|$ resulting from \eqref{dotPQdotWPineq} and \eqref{QdotWPQBHP}, and the bound on $\alpha_2$ given in \eqref{alpha2bound}.
\end{proof}

Finally, let us consider the simple class of $B$ fields considered at the end of Sec.~\ref{nonuniformBmatch}, to gain some intuition on the bound for $\varepsilon_2$. Recall that it was assumed that $B = B(x_1)$ was quasi-uniform, behaving homogeneously under differentiation (i.e., if $B$ varies over a length scale $L$, then $\partial^sB/\partial x_1^s \sim B/L^s$), $\mu \sim \hbar n/m$ and $\Delta =: \hbar \Omega\,\, (\sim \hbar B/m)$. 
Also, the eigenstates of $W$ were assumed to be approximately $\Psi_{n,f}$, the Landau-gauge energy eigenstates, so $P \sim P_n$, where $P_n$ is the projector to the Landau level $n$.
As estimated in that section, the small parameter controlling the validity of the adiabatic bound is
\be
\varepsilon \sim n\rho_n \Big|\!\Big| \frac{\partial \log B}{\partial x_1} \Big|\!\Big| \sim \frac{n\rho_n}{L}\,.
\ee
In particular, $\varepsilon_1 \le 2 \varepsilon$.
Notice that if $\ca C$ is chosen as a circle centered at the origin, with radius $\sim \Delta/2$, we would have $|\ca C| \sim \pi\Delta$. Also, as this $\ca C$ would pass as close as $\Delta/2$ from the eigenvalues of $W$ (assumed to be concentrated in narrow clusters centered at integer multiples of $\Delta$),
\be
\kappa = \sup_{z\in \ca C} |\!|(W-z)^{-1}|\!| \sim (\Delta/2)^{-1}\,.
\ee
To roughly estimate $\alpha_1$, notice that for $B$ depending only on $x_1$,
\be
\dot{\wt W}(0) = -\frac{\mu}{m} \partial_1B \pi_1 + \frac{i\hbar\mu}{2m} \partial_1^2B\,.
\ee
so we have
\be\label{alpha1inter}
\alpha_1 = \sup_{z\in \bb C}|\!|\dot{\wt W}R(z)|\!| \le \frac{\mu}{m}|\!|\partial_1B|\!| \sup_{z\in \bb C}|\!|\pi_1 R(z) |\!| +  \frac{\hbar\mu\kappa}{2m} |\!|\partial_1^2B|\!|\,.
\ee
Since $|\!|\scr O|\!|^2 = |\!|\scr O^\dag \scr O|\!|$ for any bounded operator $\scr O$, we have 
\be
|\!|\pi_1 R(z) |\!| = \sqrt{|\!|(\pi_1 R(z))^\dag\pi_1 R(z)|\!|}
\ee
so we can instead estimate $|\!|(\pi_1 R(z))^\dag\pi_1 R(z)|\!|$.
Notice that for any vector $\psi \in \ca H$, 
\be
\la \psi|R(z)^\dag\pi_1^2 R(z)|\psi\ra \le  \la \psi|R(z)^\dag(2mH) R(z)|\psi\ra
\ee
and consequently $|\!|(\pi_1 R(z))^\dag\pi_1 R(z)|\!| \le |\!|R(z)^\dag(2mH) R(z)|\!|$.
Since $R(z)$ and $H$ are (approximately) diagonal in the Landau basis,
\be
|\!|R(z)^\dag(2mH) R(z)|\!| \sim \sup_{n'} \frac{2m E_{n'}}{|E_{n'} - \mu B - z|^2}\,. 
\ee
Notice that the numerator grows with $n'$ while the denominator grows with $(n')^2$ for large $n'$. Thus, since $z \sim \Delta/2$, we expect that the supremum should occur when $E_{n'} - \mu B \sim 0$, that is, $n' \sim n$. Now taking the supremum over $z\in \ca C$, we should have roughly
\be
\sup_{z\in \ca C}|\!|R(z)^\dag(2mH) R(z)|\!| \sim \frac{2m E_n}{(\Delta/2)^2}\,. 
\ee
If $n$ is not too small, so we can approximate $E_n \approx \hbar\Omega(n+1)$, we get
\be
\sup_{z\in \ca C}|\!|\pi_1 R(z) |\!| \lesssim \frac{2m\Omega\rho_n}{\Delta}
\ee
and thus, from \eqref{alpha1inter},
\be
\alpha_1 \lesssim 2\varepsilon \Omega
\ee
where the term coming from $|\!|\partial_1^2B|\!|$ was neglected for being of order $(\varepsilon^2/n^2)\Omega$.
Putting it all together in \eqref{vareps2exactbound}, we obtain
\be
\varepsilon_2 \lesssim 100 \varepsilon^2 \Omega\,.
\ee
Therefore, the linear-in-time contribution to the bound remains small, $\varepsilon_2t \ll 1$, 
as long as
\be
\Omega t \ll \frac{0.01}{\varepsilon^2}\,.
\ee
If $\varepsilon$ is very small, this allows $P(t)$ to be adiabatically conserved for many gyro periods.

\newpage
\bibliographystyle{JHEPs}
\bibliography{biblio}
\end{document}